\newcommand{\lnum}{\refstepcounter{equation}{\st\theequation:}\hspace{1.5mm}}
\newlist{steps}{enumerate}{1}
\setlist[steps, 1]{label = Case \arabic*}
\newcommand{\singclient}
{\ensuremath{\text{SingleClient}}\xspace}
\newcommand{\mulclient}{\ensuremath{\text{MultiClient}}\xspace}
\newcommand{\mhtlp}{\ensuremath{\text{MH-TLP}}\xspace}
\newcommand{\mmhtlp}{\ensuremath{\text{MMH-TLP}}\xspace}
\newcommand{\subprt}{\ensuremath{_{u}}\xspace}
\newcommand{\adv}{\ensuremath{\mathcal{A}}\xspace}
\newcommand{\chal}{\ensuremath{\mathcal{E}}\xspace}
\newcommand{\corupt}{\ensuremath{\mathcal{W}}\xspace}
\newcommand{\et}{\textit{et al}\xspace}
\newcommand{\cor}{\ensuremath{\bar{t}}\xspace}%total number of x (or y) coordinates
\newcommand{\id}{\ensuremath{id}\xspace}
\newcommand{\rt}{\ensuremath{root}\xspace}
\newcommand{\vhtlp}{\ensuremath{\mathcal{TLP_{\st MH}}}\xspace}
\newcommand{\tf}{\ensuremath{\text{Tempora\text{-}Fusion}}\xspace}
\newcommand{\tl}{\ensuremath{\ddot{t}}\xspace}
\newcommand{\kp}{\ensuremath{K}\xspace}
\newcommand{\cmd}{\ensuremath {cmd}\xspace}
\newcommand{\mxsqr}{\ensuremath{max_{\st ss}}\xspace}
\newcommand{\prtt}{\ensuremath{C}\xspace}
\newcommand{\srv}{\ensuremath {S}\xspace}
\newcommand{\cln}{\ensuremath {C}\xspace}
\newcommand{\prm}{\ensuremath{{p}}\xspace}
\newcommand{\ssetup}{\ensuremath{\mathsf{S.Setup}}\xspace}
\newcommand{\csetup}{\ensuremath{\mathsf{C.Setup}}\xspace}
\newcommand{\pgen}{\ensuremath{\mathsf{\mathsf{GenPuzzle}}}\xspace}
\newcommand{\eval}{\ensuremath{\mathsf{\mathsf{Evaluate}}}\xspace}
\newcommand{\solv}{\ensuremath{\mathsf{\mathsf{Solve}}}\xspace}
\newcommand{\ver}{\ensuremath{\mathsf{\mathsf{Verify}}}\xspace}
\newcommand{\opgen}{\ensuremath{\mathsf{\mathsf{GenPuzzle^{\st Oracle}}}}\xspace}
\newcommand{\oeval}{\ensuremath{\mathsf{\mathsf{Evaluate^{\st Oracle}}}}\xspace}
\newcommand{\orev}{\ensuremath{\mathsf{\mathsf{Reveal^{\st Oracle}}}}\xspace}
\newcommand{\scp}{\ensuremath{\text{clientPzl}}\xspace}
\newcommand{\ep}{\ensuremath{\text{evalPzl}}\xspace}
\newcommand{\cl}{\ensuremath{ C}\xspace}
\newcommand{\se}{\ensuremath{ S}\xspace}
\newcommand{\prt}{\ensuremath{u }\xspace}
\newcommand{\idx}{\ensuremath{\mathcal{I} }\xspace}
\newcommand{\ole}{\ensuremath{\mathtt{OLE}}\xspace}
\newcommand{\prf}{\ensuremath{\mathtt{PRF} }\xspace}
\newcommand{\g}{\ensuremath{\mathtt{G} }\xspace}
\newcommand{\pnp}{\ensuremath{ \phi(N_{\st \prt}) }\xspace}
\newcommand{\set}{\ensuremath{\{\cln_{\st 1},\ldots, \cln_{\st n}\}}\xspace}
\newcommand{\comcom}{\ensuremath{\mathtt{Com}}\xspace}
\newcommand{\comver}{\ensuremath{\mathtt{Ver}}\xspace}
\newcommand{\st}{\scriptscriptstyle}
\newcommand{\func}{\ensuremath{\mathcal{F}^{\st \text{PLC}}}\xspace}
\newtheorem{assumption}{Assumption}
\let\emph\relax
\newtcolorbox{mybox}[2][]{%
  attach boxed title to top center
               = {yshift=-10pt},
               %width=85mm,%
                  %height=52mm,
  %colback      = black,
  colframe     =black,
  %fonttitle    = \bfseries,
  colbacktitle = black,
  title        = #2,#1,
  enhanced,
}
\begin{document}

\title{Verifiable Homomorphic Linear Combinations in\\ Multi-Instance Time-Lock Puzzles}

%\title{Multi-Instance Time-Lock Puzzle with\\ Efficient Verifiable Homomorphic Linear Combination}

%Efficiently Verifiable Homomorphic Linear Combination for Multi-Instance Time-Lock Puzzles

\author{%
Aydin Abadi\thanks{aydin.abadi@ncl.ac.uk}
\institute{Newcastle University} 
 }

\maketitle  

\begin{abstract}
Time-Lock Puzzles (TLPs) have been developed to securely transmit sensitive information into the future without relying on a trusted third party. 
Multi-instance TLP is a \textit{scalable} variant of TLP  that enables a server to {efficiently} find solutions to different puzzles provided by a client at once. %This approach allows a server to deal with puzzles sequentially, instead of handling them simultaneously.  
%
%Nevertheless, there exist two limitations, (i) state-of-the-art homomorphic TLPs do not support the \textit{verification} of computation results and (ii) current multi-instance TLPs lack support for any (verifiable) \textit{homomorphic computation} on the puzzles. 
%
Nevertheless, existing multi-instance TLPs lack support for (verifiable)  \textit{homomorphic computation}. 
To address this limitation, we introduce the ``Multi-Instance partially Homomorphic TLP'' (\mhtlp), a multi-instance TLP supporting efficient verifiable homomorphic linear combinations of puzzles belonging to a client. It ensures anyone can verify the correctness of computations and solutions. 
Building on  \mhtlp, we further propose the ``Multi-instance Multi-client verifiable partially Homomorphic TLP'' (\mmhtlp). It not only supports all the features of \mhtlp 
but also allows for verifiable homomorphic linear combinations of puzzles from different clients. 
Our schemes refrain from using asymmetric-key cryptography for verification and, unlike most homomorphic TLPs, do not require a trusted third party. 
A comprehensive cost analysis demonstrates that our schemes scale linearly with the number of clients and puzzles.

\end{abstract}

%\begin{figure}[htp]
%    \centering
%    \includegraphics[width=7.2cm]{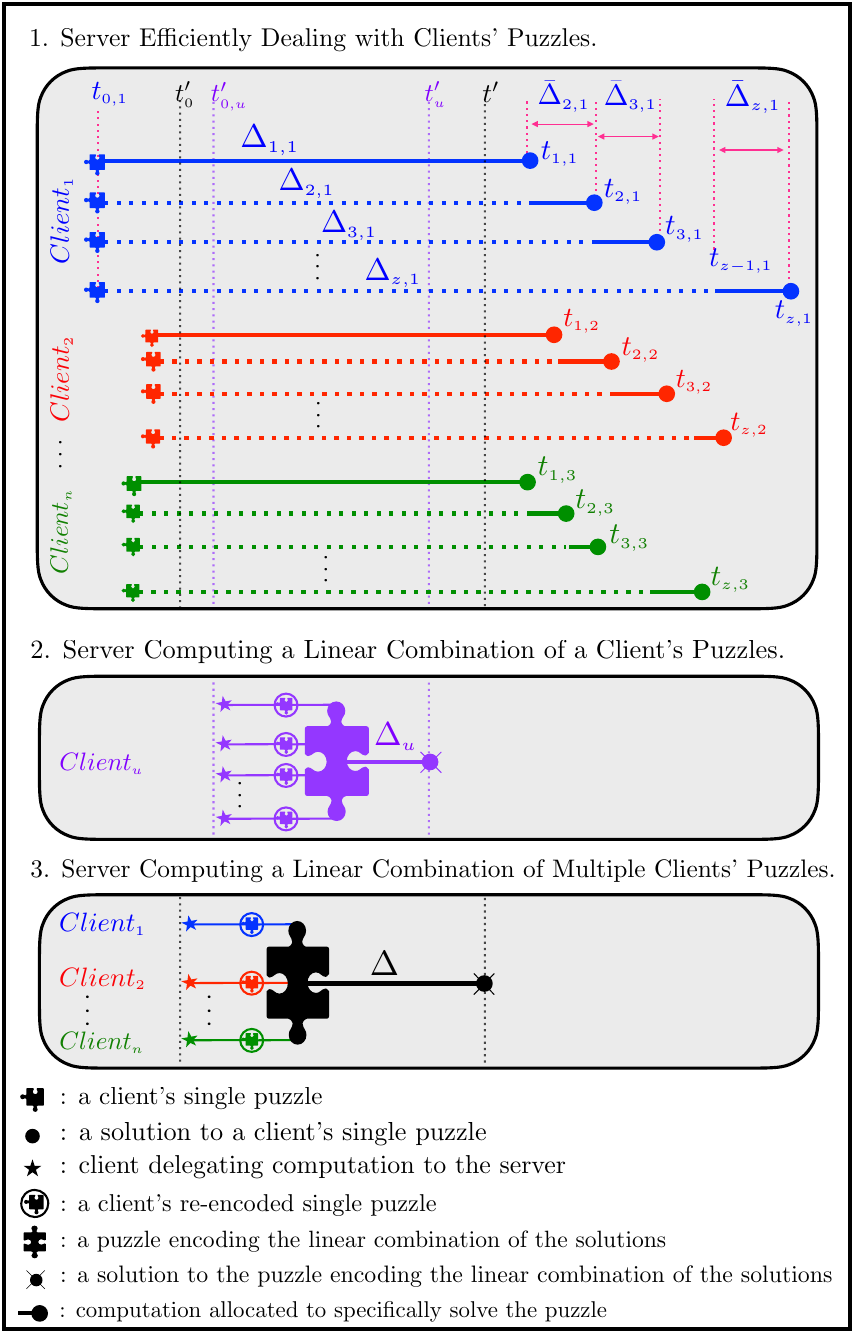}
%    \caption{\mmhtlp Workflow Overview.}\label{fig:MMH-TLP-workflow}
%\end{figure}

%\begin{figure}[htp]
%    \centering
%    \includegraphics[width=7.2cm]{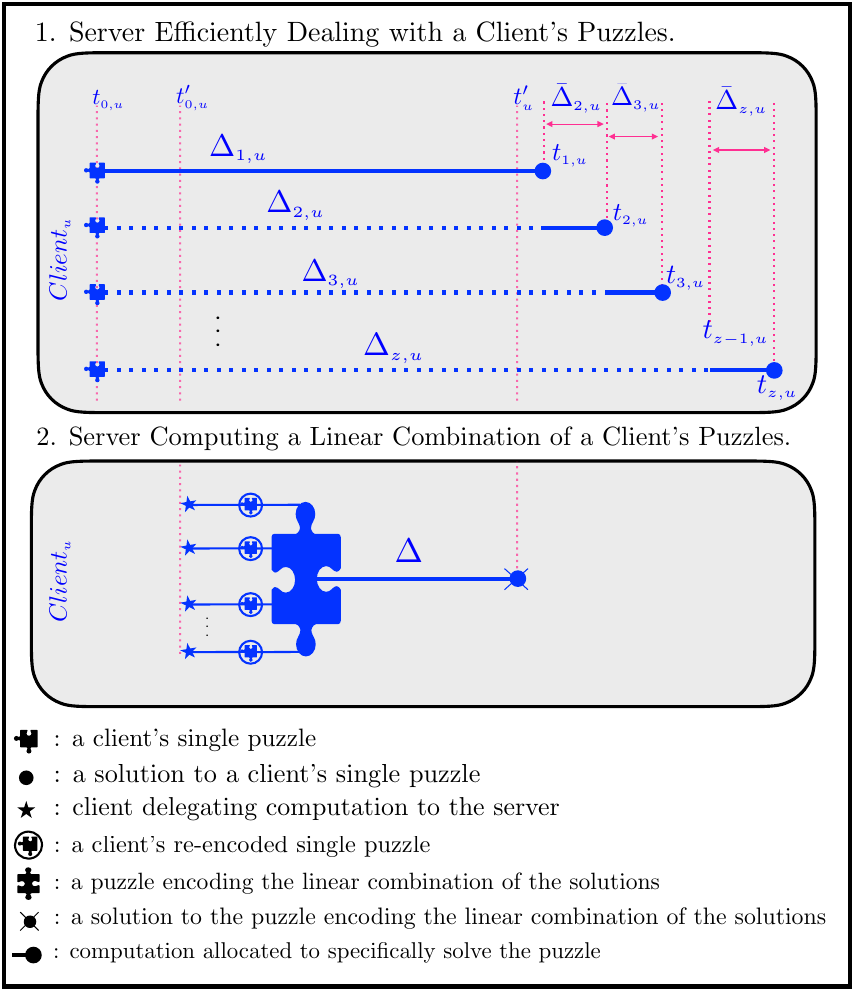}
%    \caption{\mhtlp Workflow Outline.}\label{fig:MH-TLP-workflow}
%\end{figure}

%!TEX root = main.tex

\section{Introduction}\label{sec::intro}

Time-Lock Puzzles (TLPs) are interesting cryptographic primitives that enable the transmission of information to the future. They enable a party to encrypt a message in a way that no one else can decrypt it until a certain time has elapsed.\footnote{Certain  TLPs rely on third-party assistance to manage the timed release of a secret. However, this category of protocols is not the focus of this paper.} TLPs have a wide range of applications, including scheduled payments in cryptocurrencies \cite{ThyagarajanBMDK20}, timed-commitments \cite{KatzLX20}, zero-knowledge proofs \cite{dwork2000zaps}, e-voting \cite{ChenD12}, timed secret sharing \cite{kavousi2023timed}, sealed-bid auctions \cite{Rivest:1996:TPT:888615},  verifiable delay functions \cite{BonehBBF18}, and secure aggregation in federated learning \cite{tempora-fusion}.

Since Rivest \et. \cite{Rivest:1996:TPT:888615} introduced TLPs, they have evolved, leading to the development of two vital variants: \textit{multi-instance} TLPs and \textit{homomorphic} TLPs. Our work advances both of these variants. 
Multi-instance TLPs were introduced in \cite{Abadi-C-TLP}.  The multi-instance TLP setting entails a single client generating $n$ puzzles and simultaneously transmitting them to a server. This concept serves as a natural extension of the initial single-puzzle paradigm established in \cite{Rivest:1996:TPT:888615}.  In the multi-instance TLP, each puzzle's solution is found by the server at a different time.  
 Multi-instance TLPs allow the server to deal with each puzzle \textit{sequentially} rather than simultaneously handling them.  
Such an approach leads to notable reductions in computational overhead for the server. Specifically,  for a fixed time parameter $\Delta$ and $z$ puzzles, it saves $\frac{1}{z}\cdot {\sum\limits^{\st z}_{\st j=1}j}$   times modular squaring. For instance, when $z=100$, this approach yields $50$ times reduction in modular squaring.

Multi-instance TLPs have applications across various domains. For example, \textit{journalists} or \textit{whistleblowers} in hostile environments can use multi-instance TLPs to schedule the gradual release of sensitive information at the most impactful moments or after ensuring their safety. This method eliminates the need to be online at the time of release and removes the need to trust a third party with sensitive data. In \textit{online education}, multi-instance TLPs can benefit students with unreliable internet connections by allowing them to download multiple exam content in advance, ensuring access when their connection is stable. In this scenario, the exam materials become accessible only at the designated start time. Multi-instance TLPs can also be utilized for the continuous verification of cloud service availabilities \cite{Abadi-C-TLP}.

%Another application of multi-instance TLPs is in \textit{online education}, where students with poor internet connections can download exam contents before the official start date of each exam when they are sure to have a better internet connection. In this scenario, they can find each exam content at the time the exam is intended to be taken. It can also be used in continuous verification of cloud service availability  \cite{Abadi-C-TLP}. 
%
%
%including the gradual release of confidential documents (by journalists or whistleblowers), sequential revealing of secret keys, and continuous verification of cloud service availability. 

In a separate line of research, Malavolta \et. \cite{MalavoltaT19} introduced (fully) homomorphic TLP notion, enabling the execution of arbitrary functions over puzzles before their resolution is found. In general, fully homomorphic TLPs consider scenarios where there exist $n$ clients, each generating and transmitting a puzzle encoding its solution to a server. Upon receiving the puzzles, the server executes a homomorphic function across these puzzles, generating a unified puzzle. The solution to this puzzle represents the output of the function evaluated across all individual solutions. 
To enhance efficiency, partially homomorphic TLPs have also been introduced, including variants that enable homomorphic linear combinations or the multiplication of puzzles  \cite{MalavoltaT19}. More recently, Abadi \cite{tempora-fusion} proposed a partially homomorphic TLP that also enables efficient verification of the correctness of homomorphic operations performed on the puzzles. Homomorphic TLPs have found applications in numerous areas, including atomic swaps \cite{ThyagarajanMM22}, payment channels \cite{ThyagarajanMSS20}, as well as verifiable e-voting, and secure aggregation in federated learning \cite{tempora-fusion}.

%These applications surpass the original motivations for designing homomorphic TLPs, which primarily revolved around their use in e-voting and sealed-bid auctions.

%' Application areas include the mass release of confidential documents over time, gradually revealing multiple secret keys, and verifying the continuous availability of cloud services \cite{Abadi-C-TLP}. 

% find application in various domains, including:

\subsection{Limitations of State-of-the-Art TLPs}

Current TLPs exhibit notable limitations that constrain their practical use. Existing {multi-instance} TLPs' functionalities are confined to a few basic operations: (i) solution revelation,  where the server can access the solution to each puzzle only after a designated time period has elapsed, and (ii) solution verification, allowing a verifier to confirm the correctness of the solution obtained by the server. A key drawback of these multi-instance TLPs is their \textit{lack of support for (verifiable) homomorphic operations}.

%An external verifier can confirm whether the solution obtained by the server is correct. The primary limitation of existing multi-instance TLPs is their \textit{inability to support (verifiable) homomorphic operations}. %Homomorphic encryption allows computations to be performed on ciphertexts, producing an encrypted result which, when decrypted, matches the result of operations performed on the plaintext. 

Conversely, the state-of-the-art {homomorphic} TLPs  \textit{lack the essential feature of multi-instance TLPs}. Specifically, when a client employs a homomorphic TLP to generate multiple puzzle instances intended to be solved at different times and submits all these puzzles to a server simultaneously, the server must handle each puzzle instance separately and in parallel. This naive approach is highly resource-intensive and inefficient.

% discuss that the current homomorphic TLPs do not support the features offered by multi-instance TLPs. 

\subsection{Our Contributions}

\subsubsection{Multi-Instance Partially Homomorphic TLP.} To address the aforementioned limitations of multi-instance TLPs, we introduce Multi-instance verifiable partially Homomorphic TLP  (\mhtlp), the first multi-instance TLP that supports \textit{efficient verifiable homomorphic linear combinations} on puzzles. It enables a client to generate many puzzles and transmit them to the server at once. In this setting, the server does not need to simultaneously handle them; instead, it can solve them sequentially. 

\mhtlp enables the client to come back online at a later point to grant computation on its puzzles. The server will learn the linear combination of puzzles' solutions after a certain time. \mhtlp also supports public verification for (1) a single puzzle's solution, and (2) the computation's result.

Inspired by previous multi-instance TLPs, we also rely on the idea of chaining puzzles, such that when the server solves one puzzle it will obtain enough information to work on the next puzzle. 
However, we introduce a new technique for chaining puzzles that also facilitates homomorphic linear combinations. This method enables the client to derive the base for the next puzzle from the current puzzle's master key, without altering the structure of the underlying solution. 

We formally define our scheme and, for the first time, present the formal definition of multi-instance TLP (in Definition \ref{def:efficiency-vh-tlp}). Although the multi-instance TLP idea has been used previously, its core property, efficiently supporting multiple puzzle instances, has not been formally defined until now.

\subsubsection{Multi-Instance Multi-Client Partially Homomorphic TLP.} 

To address the limitation of the existing (partially) homomorphic TLPs that do not support multi-instance, we upgrade \mhtlp to a new variant of TLP called Multi-instance Multi-client verifiable partially Homomorphic TLP (\mmhtlp). This new variant offers the features of both partially homomorphic TLP and multi-instance TLP. Specifically, it supports verifiable partially homomorphic operations on the puzzles belonging to single or multiple clients while maintaining the multi-instance feature. \mmhtlp allows each client to independently generate different puzzles and send them at once to the server. In this case, the server does not have to deal with each client's puzzles simultaneously (offering the multi-instance feature). It enables single or multiple clients to ask the server to perform homomorphic linear combinations of their puzzles (each belonging to a different client in the multi-client setting). This scheme allows anyone to verify whether the server has performed the computation correctly and provided a correct solution. Thus, \mmhtlp bridges the gap between (a) (partially) homomorphic TLPs that support multiple clients (but not multi-instance) and (b) multi-instance (homomorphic) TLPs that support only a single client.

%!TEX root = main.tex

\vspace{-5mm}
\begin{table*}[!htb]
%\begin{footnotesize}
\begin{center}
\caption{Asymptotic costs of existing (partially) homomorphic TLPs.  In the figure, $n$ is the number of clients, \tl is the number of leaders, $ \bar\Delta_{\st j, \prt}$ is the period between the discovery of two consecutive solutions of client $\prtt_{\st \prt}$, $\mxsqr$ is the maximum number of squaring that the strongest server \srv can perform per second,  $z$ is the total number of puzzles given by a client to \srv, $\Delta_{\st j, \prt}$ is the period when the puzzle of $\prtt_{\st \prt}$ must remain concealed after the related puzzle is given to \srv, and $Y=\mxsqr\cdot \Delta$ is the period between granting the computation and when a linear combination of solutions is learned by \srv.}
\label{complexity-of-our-schemes}

\renewcommand{\arraystretch}{.96}
\scalebox{1}{
\begin{tabular}{|cc|c|c|c|c|c|c|c|c|c|c|} 
%\cline{1-9}
   \hline
   
   &&&&\\
   
  % \cline{9-10}

 \multirow{-2}{*}{\scriptsize Schemes}&&\multirow{-2}{*}{\scriptsize Parties}&
\multirow{-2}{*}{\scriptsize Computation Cost}&\multirow{-2}{*}{\scriptsize Communication Cost}\\

\hline

        %%%%%%%%%
&&{\rotatebox[origin=c]{0}{\scriptsize }}\scriptsize Client&\cellcolor{gray!20} \scriptsize $O(z)$&\cellcolor{gray!20} \scriptsize $O(z)$ \\

        \cline{3-5}
   
   \multirow{-2}{*}{\rotatebox[origin=c]{20}{\scriptsize }}&&{\rotatebox[origin=c]{0}{\scriptsize  }} \scriptsize Verifier&\cellcolor{gray!20}\scriptsize $O(z)$&\cellcolor{gray!20}\scriptsize $-$\\
  
   \cline{3-5}

  \multirow{-3}{*}{\rotatebox[origin=c]{0}{\scriptsize \text{{\mhtlp}}}}&&{\rotatebox[origin=c]{0}{\scriptsize  }}\scriptsize Server&\cellcolor{gray!20}\scriptsize $O(z+Y+\mxsqr\cdot\sum\limits^{\st z}_{\st j=1}\bar{\Delta}_{\st j, \prt})$&\cellcolor{gray!20}\scriptsize $O(z)$\\      
            
\hline

%%%%%

&&{\rotatebox[origin=c]{0}{\scriptsize }}\scriptsize Client&\cellcolor{gray!50} \scriptsize $O(\tl\cdot (z+n))$&\cellcolor{gray!50} \scriptsize $O((\tl+n)\cdot z)$ \\

        \cline{3-5}
   
   \multirow{-2}{*}{\rotatebox[origin=c]{0}{\scriptsize }}&&{\rotatebox[origin=c]{0}{\scriptsize  }} \scriptsize Verifier&\cellcolor{gray!50}\scriptsize $O(\tl^{\st 2}+\tl+z)$&\cellcolor{gray!50}\scriptsize $-$\\
  
   \cline{3-5}

  \multirow{-3}{*}{\rotatebox[origin=c]{0}{\scriptsize \text{{\mmhtlp}}}}&&{\rotatebox[origin=c]{0}{\scriptsize  }}\scriptsize Server&\cellcolor{gray!50}\scriptsize $O(\tl\cdot (z+n+Y+\tl)+\mxsqr\cdot \sum\limits_{\st j=1}^{\st z}\bar{\Delta}_{\st j, \prt})$&\cellcolor{gray!50}\scriptsize $O(\tl\cdot n+z)$\\     

%%%%%
\hline
 
&& {\rotatebox[origin=c]{0}{\scriptsize }}\scriptsize Client&\cellcolor{gray!20} \scriptsize $O(\tl\cdot n)$&\cellcolor{gray!20} \scriptsize $O(\tl\cdot n)$ \\

      \cline{3-5}
   
 &&{\rotatebox[origin=c]{0}{\scriptsize  }} \scriptsize Verifier&\cellcolor{gray!20}\scriptsize $O(\tl^{\st 2}+\tl +z)$&\cellcolor{gray!20}\scriptsize $-$\\
  
  \cline{3-5}

 \multirow{-3}{*}{\rotatebox[origin=c]{0}{\scriptsize  \cite{tempora-fusion}}}&&{\rotatebox[origin=c]{0}{\scriptsize  }}\scriptsize Server&\cellcolor{gray!20}\scriptsize $O(\tl\cdot (n+Y+\tl)+\mxsqr\cdot \sum\limits_{\st j=1}^{\st z}\Delta_{\st j, \prt})$&\cellcolor{gray!20}\scriptsize $O(\tl\cdot n)$\\

 %%%%%
\hline
 
&& {\rotatebox[origin=c]{0}{\scriptsize }}\scriptsize Client&\cellcolor{gray!50} \scriptsize $O(1)$&\cellcolor{gray!50} \scriptsize $O(1)$ \\

      \cline{3-5}
   
 &&{\rotatebox[origin=c]{0}{\scriptsize  }} \scriptsize Verifier&\cellcolor{gray!50}\scriptsize $-$&\cellcolor{gray!50}\scriptsize $-$\\
  
  \cline{3-5}

 \multirow{-3}{*}{\rotatebox[origin=c]{0}{\scriptsize  \cite{MalavoltaT19}}}&&{\rotatebox[origin=c]{0}{\scriptsize  }}\scriptsize Server&\cellcolor{gray!50}\scriptsize $O(n+\mxsqr\cdot \sum\limits_{\st j=1}^{\st z}\Delta_{\st j, \prt})$&\cellcolor{gray!50}\scriptsize $O(1)$\\

   %%%%%
\hline
 
&& {\rotatebox[origin=c]{0}{\scriptsize }}\scriptsize Client&\cellcolor{gray!20} \scriptsize $O(1)$&\cellcolor{gray!20} \scriptsize $O(1)$ \\

      \cline{3-5}
   
 &&{\rotatebox[origin=c]{0}{\scriptsize  }} \scriptsize Verifier&\cellcolor{gray!20}\scriptsize $-$&\cellcolor{gray!20}\scriptsize $-$\\
  
  \cline{3-5}

 \multirow{-3}{*}{\rotatebox[origin=c]{0}{\scriptsize  \cite{liu2022towards}}}&&{\rotatebox[origin=c]{0}{\scriptsize  }}\scriptsize Server&\cellcolor{gray!20}\scriptsize $O(n+\mxsqr\cdot(\frac{ \Delta}{\log(\mxsqr\cdot \Delta)}+ \sum\limits_{\st j=1}^{\st z}\Delta_{\st j, \prt}))$&\cellcolor{gray!20}\scriptsize $O(1)$\\

    %%%%%
\hline
 
&& {\rotatebox[origin=c]{0}{\scriptsize }}\scriptsize Client&\cellcolor{gray!50} \scriptsize $O(1)$&\cellcolor{gray!50} \scriptsize $O(1)$\\

      \cline{3-5}
   
 &&{\rotatebox[origin=c]{0}{\scriptsize  }} \scriptsize Verifier&\cellcolor{gray!50}\scriptsize $-$&\cellcolor{gray!50}\scriptsize $-$\\
  
  \cline{3-5}

 \multirow{-3}{*}{\rotatebox[origin=c]{0}{\scriptsize  \cite{dujmovic2023time}}}&&{\rotatebox[origin=c]{0}{\scriptsize  }}\scriptsize Server&\cellcolor{gray!50}\scriptsize $O(n^{\st 2}+\mxsqr\cdot(\Delta+ \sum\limits_{\st j=1}^{\st z}\Delta_{\st j, \prt}))$&\cellcolor{gray!50}\scriptsize $O(1)$\\
 
 %%%%%%%%%%%%%%%%%

 \hline
 
 %%%%%%

\end{tabular}
}
\end{center}
%\end{footnotesize}
\end{table*}

%!TEX root = main.tex

\vspace{-6mm}

\begin{table*}[!htb]
%\begin{footnotesize}
\begin{center}
\caption{Comparing features of different (partially) homomorphic TLPs.}
\label{feature-of-our-schemes}

\renewcommand{\arraystretch}{.91}
\scalebox{1}{
\begin{tabular}{|cc|c|c|c|c|c|c|c|c|c|c|} 
%\cline{1-9}
   \hline
   
   &&&&\scriptsize Without&\multicolumn{2}{c|}{\scriptsize Supporting Verification}&\scriptsize Flexible&\\
   
   \cline{6-7}

 \multirow{-2}{*}{\scriptsize Schemes}&&\multirow{-2}{*}{\scriptsize Multi-Instance}&\multirow{-2}{*}{\scriptsize Multi-Client}&\multirow{-2}{*}{\scriptsize  Trusted-Setup}&{\scriptsize Client's Solution}&{\scriptsize Linear Combination}&\multirow{-2}{*}{\scriptsize Time Paramters}&\multirow{-2}{*}{\scriptsize Batch Solving}\\

\hline

   %\cline{3-5}

 {\rotatebox[origin=c]{0}{\scriptsize \text{{\mhtlp}}}}&&\cellcolor{gray!20}{{\scriptsize \checkmark}}&\cellcolor{gray!20}{\rotatebox[origin=c]{0}{\scriptsize $\times$}}&\cellcolor{gray!20}{{\scriptsize \checkmark}}&\cellcolor{gray!20}{{\scriptsize \checkmark}}&\cellcolor{gray!20}{{\scriptsize \checkmark}}&\cellcolor{gray!20}{{\scriptsize \checkmark}}&\cellcolor{gray!20}{{\scriptsize $\times$}}\\      
            
\hline

%%%%%

   %\cline{3-5}

  {\rotatebox[origin=c]{0}{\scriptsize \text{{\mmhtlp}}}}&&\cellcolor{gray!50}{\rotatebox[origin=c]{0}{\scriptsize \checkmark}}&\cellcolor{gray!50}{\rotatebox[origin=c]{0}{\scriptsize \checkmark}}&\cellcolor{gray!50}{{\scriptsize \checkmark}}&\cellcolor{gray!50}{{\scriptsize \checkmark}}&\cellcolor{gray!50}{{\scriptsize \checkmark}}&\cellcolor{gray!50}{{\scriptsize \checkmark}}&\cellcolor{gray!50}{{\scriptsize $\times$}}\\     

%%%%%
\hline

  %\cline{3-5}

{\rotatebox[origin=c]{0}{\scriptsize  \cite{tempora-fusion}}}&&\cellcolor{gray!20}{\rotatebox[origin=c]{0}{\scriptsize $\times$}}&\cellcolor{gray!20}{\rotatebox[origin=c]{0}{\scriptsize \checkmark}}&\cellcolor{gray!20}{{\scriptsize \checkmark}}&\cellcolor{gray!20}{{\scriptsize \checkmark}}&\cellcolor{gray!20}{{\scriptsize \checkmark}}&\cellcolor{gray!20}{{\scriptsize \checkmark}}&\cellcolor{gray!20}{{\scriptsize $\times$}}\\

 %%%%%
\hline

  %\cline{3-5}

{\rotatebox[origin=c]{0}{\scriptsize  \cite{MalavoltaT19}}}&&\cellcolor{gray!50}{\rotatebox[origin=c]{0}{\scriptsize $\times$}}&\cellcolor{gray!50}{\rotatebox[origin=c]{0}{\scriptsize \checkmark}}&\cellcolor{gray!50}{{\scriptsize $\times$}}&\cellcolor{gray!50}{{\scriptsize $\times$}}&\cellcolor{gray!50}{{\scriptsize $\times$}}&\cellcolor{gray!50}{{\scriptsize $\times$}}&\cellcolor{gray!50}{{\scriptsize $\times$}}\\

   %%%%%
\hline

  %\cline{3-5}

{\rotatebox[origin=c]{0}{\scriptsize  \cite{liu2022towards}}}&&\cellcolor{gray!20}{\rotatebox[origin=c]{0}{\scriptsize $\times$}}&\cellcolor{gray!20}{\rotatebox[origin=c]{0}{\scriptsize \checkmark}}&\cellcolor{gray!20}{{\scriptsize $\times$}}&\cellcolor{gray!20}{{\scriptsize $\checkmark$}}&\cellcolor{gray!20}{{\scriptsize $\times$}} &\cellcolor{gray!20}{{\scriptsize $\times$}}&\cellcolor{gray!20}{{\scriptsize $\times$}}\\

    %%%%%
\hline

 % \cline{3-5}

{\rotatebox[origin=c]{0}{\scriptsize  \cite{dujmovic2023time}}}&&\cellcolor{gray!50}{\rotatebox[origin=c]{0}{\scriptsize $\times$}}&\cellcolor{gray!50}{\rotatebox[origin=c]{0}{\scriptsize \checkmark}}&\cellcolor{gray!50}{{\scriptsize $\times$}}&\cellcolor{gray!50}{{\scriptsize $\times$}}&\cellcolor{gray!50}{{\scriptsize $\times$}}&\cellcolor{gray!50}{{\scriptsize $\times$}}&\cellcolor{gray!50}{{\scriptsize $\checkmark$}}\\
 
 %%%%%%%%%%%%%%%%%

 \hline
 
 %%%%%%

\end{tabular}
}
\end{center}
%\end{footnotesize}
\end{table*}

We compare the computational and communication complexities, as well as the features, of our protocols with those of previous work. Tables \ref{complexity-of-our-schemes} and \ref{feature-of-our-schemes} provide a summary of this comparison.  The complexities of our protocols are linear with the number of participating clients $n$ and the number of puzzle $z$  that each client possesses. Our schemes support efficient verification of solutions without relying on asymmetric key-based primitives.  Moreover, \mmhtlp provides a distinctive set of features that are not simultaneously available in any existing TLP.

 %partially homomorphic operation on (1) different client's puzzles, and (2) a single client's puzzle, 

%We will also demonstrate how our protocol can be generalized to support  \textit{multi-client} multi-instance setting while maintaining efficient and verifiable homomorphic linear combinations of puzzles. 

\subsection{Applications}

\subsubsection{Scalable, Private, and Compliant Scheduled Payments in Online Banking.}

Outsider and insider attacks pose an immediate risk to various organizations and their clients, particularly financial institutions and their customers. There have been numerous incidents where customers' names, addresses, credit scores, credit limits, and balances have been stolen and in some cases revealed to the public, e.g., see \cite{leigh2015hsbc,JPMorgan,Capital-One-credit}.  Investment strategies devised by individuals or companies and managed through financial institutions are especially vulnerable, as they contain critical and valuable information that attackers could exploit.

Our schemes enable individuals and businesses to schedule payments and investments privately through their banks, without disclosing the transaction amounts prior to the scheduled transfer time. Leveraging the multi-instance feature, these schemes efficiently manage a high volume of scheduled transactions without placing significant strain on server resources. By utilizing a homomorphic linear combination, our schemes allow banks to precalculate the combined transfer amounts, such as the average or total. This functionality not only ensures compliance with regulatory standards but also enhances risk management. By verifying transfer amounts ahead of time, banks can confirm that all transactions meet regulatory requirements. Additionally, the ability to precalculate transfer amounts helps banks identify and address potential risks before they materialize, preventing issues like overdrafts or breaches of internal limits.

\subsubsection{Asynchronous Multi-Model Training in Secure Aggregation as a Service (SAaaS).}

Federated Learning (FL) is a machine learning framework that allows multiple parties to collaboratively build models without exposing their sensitive data to one another  \cite{McMahanMRA16}. Unlike traditional centralized methods, where data is collected and processed on a central server, FL enables model training on individual devices or clients, each holding private data. This approach preserves data privacy by ensuring that raw data never leaves the clients. Instead, only model updates are transmitted to a central server.

To securely compute the sum of model updates from clients, Bonawitz \et. \cite{BonawitzIKMMPRS17} developed a secure aggregation mechanism. This mechanism uses a trusted setup and a public-key-based verification system to detect any misbehavior by the server. In response to this, the author of verifiable partially homomorphic TLP in \cite{tempora-fusion} proposed an alternative that does not require a trusted setup and supports efficient verification. Our solution, \mmhtlp, can also substitute the TLP in \cite{tempora-fusion} (and the secure aggregation in \cite{BonawitzIKMMPRS17}) offering additional features that are particularly well-suited for more generic, multi-model settings, as explained below.

As the adoption of FL continues to grow, it is anticipated that cloud-based servers offering Secure Aggregation as a Service (SAaaS) will become common. In such scenarios, clients may want to train multiple models over time, in collaboration with the server and various sets of clients. Our \mmhtlp can replace the aforementioned secure aggregation mechanisms while enabling asynchronous multi-model training. Specifically, in this setting, each client can submit to the server its (initial) parameters for different models in a single step, well before the start of each model training round. 

\mmhtlp supports (a) asynchronous client participation, allowing clients to create their puzzles and join the computation of linear combinations at different times without waiting for others, and (b) a dynamic client base, enabling new clients to independently join the system, prepare, and submit their puzzles. Consequently, in the context of SAaaS, \mmhtlp facilitates asynchronous and dynamic parameter submissions, ensuring that the server can efficiently and securely aggregate parameters for each model at the appropriate time. 

%Furthermore, since  \mmhtlp does not involve a third party for trusted setup, and maintains security even if a threshold of clients collude with the server, it ensures that the server will only learn an aggregated value, after a certain time. %, even if there is a collusion between the server and some clients. 
%

Including the notion of time-lock can also enhance the privacy of FL. By requiring the server to learn the aggregated result after a certain time, the system can allow a sufficient number of clients to submit their model parameters to the server. This is important when there is a possibility of collusion between the clients and the server. Given the result of the secure aggregation of model parameters: $agr=\sum\limits^{\st n}_{\st i=1}a_{\st i}$, corrupt parties that collude with each other can always deduce the linear combination of honest parties' inputs from $agr$. Thus, by defining a time window and assuming enough honest clients submit their updates in this period, we can reduce the chance of colluding adversaries to link input to an honest party.

% Our schemes can be utilized to enable individuals and businesses to privately arrange numerous scheduled payments and investments through their banks, without having to disclose the amount of money they intend to transfer with each transaction before the scheduled transfer time. With the support of the multi-instance feature, our schemes allow the system to handle a large volume of scheduled transactions without significant strain on the server resources. With the support of a homomorphic linear combination, our schemes enable the bank to calculate the linear combination  (i.e., average or total amount) of the transfer amount ahead of time. This capability ensures both \textit{regulatory compliance} and \text{effective risk management}. By verifying these amounts in advance, banks can proactively ensure that all transactions adhere to regulatory requirements. Additionally, the ability to precalculate transfer amounts enables banks to identify and mitigate potential risks before they arise, helping to prevent issues such as overdrafts or non-compliance with internal limits. 

% 
%  to ensure (i) the bank can facilitate the transfers and (ii) the average or total amount of transfers comply with the bank's policy and regulations \cite{Barclays,WorldRemit,Bank-of-America}. 

\subsubsection{Sequential Verifiable E-Voting and Sealed-Bid Auction Systems.}

E-voting and sealed-bid auction systems are critical applications where maintaining data integrity, confidentiality, and system scalability is essential. Researchers have suggested utilizing homomorphic TLPs in these systems, to facilitate secure computations while preserving the privacy of each individual vote or bid \cite{MalavoltaT19}.

 \mmhtlp can be applied more broadly to e-voting and sealed-bid auction systems, where clients participate in multiple instances of voting or bidding managed sequentially by a server. This method minimizes the server's workload by avoiding the need to simultaneously handle each voter’s or bidder’s puzzle created for a different instance. Additionally, it allows for public verification of the computations, therefore ensuring the result's correctness in the process, in the case where a malicious server may attempt to alter the result.

%However, \tf offers more advantage(s) over existing homomorphic TLPs in these contexts.  

% !TEX root =main.tex

%\subsection{Time-lock Puzzles}

%\vspace{-2mm}
\section{Related Work}\label{Related-Work}

%In this section, we present a summary of related work. For a comprehensive survey, we refer readers to a recent survey in \cite{cryptoeprint-2023-687}. 
%
Initially, the idea of sending information into the future, i.e., time-lock puzzle/encryption was proposed by Timothy C. May \cite{TimothyMay1993}. 
A basic property of a time-lock scheme is that generating a puzzle takes less time than solving it. 
May’s scheme relies on a trusted agent to release a secret at the appropriate time for a puzzle to be solved, which can be a significant assumption. To address this, Rivest \textit{et al.} \cite{Rivest:1996:TPT:888615} proposed an RSA-based TLP that eliminates the need for a trusted agent. This scheme relies on sequential modular squaring and remains secure even against a receiver with extensive computational resources running in parallel. 

Following the development of the RSA-based TLP, several new variations have emerged. For instance, Boneh \et. \cite{BonehN00} and Garay \et. \cite{DBLP:conf/fc/GarayJ02}  introduced TLPs tailored for situations where a potentially malicious client must provide zero-knowledge proofs to assure the server that the correct solution will be revealed after a predetermined time. Additionally, Baum \et. \cite{BaumDDNO21} devised a composable TLP that can be defined and validated within the universal composability framework. To ensure security against adversaries equipped with quantum computers   Agrawal \et. \cite{crypto-2024-34246} proposed a TLP based on lattices, believed to be post-quantum secure. In the remainder of this section, we will discuss two variants that are most closely related to our schemes.

\subsection{Multi-instance Time Lock Puzzle}

To enhance the scalability of TLPs, researchers have examined scenarios where a server receives multiple puzzles or instances from a client simultaneously and needs to solve all of them. The TLPs proposed in \cite{ChvojkaJSS21,Abadi-C-TLP} address this setting. 
The scheme in \cite{ChvojkaJSS21} relies on an asymmetric-key encryption method, unlike the symmetric-key approach used in traditional TLPs, and it lacks verification capabilities. To overcome these limitations, a TLP is introduced in \cite{Abadi-C-TLP}. This TLP employs an efficient hash-based verification, enabling the server to prove the correctness of the solutions. However, this scheme is effective only for time intervals of identical size. Later, a multi-instance TLP in \cite{abadi2023delegated} addressed this limitation. Nevertheless, none of these multi-instance TLPs support  (verifiable) homomorphic computations on puzzles.  Their functionality is limited to solving and verifying puzzles.

%\vspace{-2.5mm}

\subsection{Homomorphic Time-lock Puzzles} 

The notion of homomorphic TLPs was introduced by  Malavolta and Thyagarajan \et. \cite{MalavoltaT19}. It allows arbitrary functions to be applied to puzzles before they are solved. This scheme incorporates the RSA-based TLP and fully homomorphic encryption.
To enhance efficiency, partially homomorphic TLPs have been developed, including those that support homomorphic linear combinations or multiplications of puzzles \cite{MalavoltaT19,liu2022towards}. Unlike fully homomorphic TLPs, partially homomorphic TLPs do not depend on fully homomorphic encryption, leading to more efficient implementations. In contrast to the partially homomorphic TLP described in \cite{MalavoltaT19}, the TLPs presented in \cite{liu2022towards} offer additional features. Firstly, they enable a verifier to confirm that puzzles have been correctly generated. Secondly, they allow verification of the server’s solution for a single client’s puzzle, though they do not support the verification of solutions related to homomorphic computations. %This approach employs a public-key-based proof initially proposed in  \cite{Wesolowski19}. 

Later, Srinivasan \et. \cite{SrinivasanLMNPT23} noted that existing homomorphic TLPs are limited in their ability to handle a high number of puzzles in a batch, as solving one puzzle leads to discovering all solutions in the batch. To address this limitation, they proposed a scheme based on indistinguishability obfuscation and puncturable pseudorandom functions. 
To improve the efficiency of this scheme,  Dujmovic \et. \cite{dujmovic2023time} introduced a new method that does not rely on indistinguishability obfuscation but instead uses pairings and learning with errors. Both schemes assume that all initial puzzles'  solutions will be discovered simultaneously.

%%%%%%%%%%
All the homomorphic TLPs discussed, except the scheme proposed in \cite{SrinivasanLMNPT23}, require a fully trusted setup and assume the server will act honestly during computation. Incorporating a trusted third party undermines the core objective of RSA-based TLPs, which is to eliminate the need for such a party. Otherwise, this third party could simply hold the secret and release it to the recipient at the designated time, thereby compromising the design's intent. 
Recently, Abadi \cite{tempora-fusion} introduced a partially homomorphic TLP that enables anyone to verify the correctness of the server’s solution, whether for a single client’s puzzle or the computation itself. This scheme operates without relying on a trusted third party. However, all of the above schemes assume each client has only a single puzzle. They lack support for the multi-instance setting. If they are used directly in this setting, they would impose a high computation cost on the server.

%All the above homomorphic TLPs, excluding the one in \cite{SrinivasanLMNPT23}, require a trusted setup. Moreover, none of these homomorphic TLPs allows verification of the computation's correctness.  

%\vspace{-3mm}

%
%\subsection{Verifiable Delay Function (VDF)} A VDF enables a prover to provide publicly verifiable proof stating that it has performed a pre-determined number of sequential computations \cite{BonehBBF18,Wesolowski19,BonehBF18,Pietrzak19a}. VDFs have many applications, such as in decentralized systems to extract reliable public randomness from a blockchain. 
%
%VDF was first formalized by Boneh \textit{et al} in \cite{BonehBBF18}. They proposed several VDF constructions based on SNARKs along with either incrementally verifiable computation or injective polynomials, or based on time-lock puzzles, where the SNARK-based approaches require a trusted setup.  
%
%Later,  Wesolowski \cite{Wesolowski19} and Pietrzak \cite{Pietrzak19a} improved the previous VDFs from different perspectives and proposed schemes based on sequential squaring. They also support efficient verification. Most VDFs have been built upon TLPs. But, the converse is not necessarily the case. Because, VDFs are not suitable to encode an arbitrary private message and they take a public message as input, whereas TLPs have been designed to conceal a private input message. 
%
%

% !TEX root =main.tex

%%%%%%%%%%%%%

\section{Preliminaries}\label{sec::prelminaries}

\subsection{Notations and Assumptions}\label{notations}

We consider the case where a server \srv is given multiple instances of a puzzle (or multiple puzzles) by a single client at once, to solve them.  We consider a generic case where client \prtt has a vector of messages: ${\vec{m}}=[m_{\st 1},\ldots, $ $m_{\st z}]$. It wants \srv to learn each message $m_{\st i}$ at time $time_{\st i}\in {\vec{time}}$, where $ {\vec{time}}=[time_{\st 1},\ldots, time_{\st z}]$ and $ time_{\st j-1}< time_{\st j}$. We define a time interval between two consecutive points in time as $\bar\Delta_{\st j}=time_{\st j}-time_{\st j-1}$. We define a vector of time intervals as $ {\vec{\Delta}}=[\bar\Delta_{\st 1},\ldots, \bar\Delta_{\st z}]$.

The standard parameter $\mxsqr$ denotes the maximum number of squaring that a solver (with the highest level of computation resources) can perform per second. We define a time parameter $T_{\st j}=\mxsqr\cdot \bar\Delta_{\st j}$ and $1\leq j \leq z$. Time points  $time_{\st 0}$ and $time'_{\st 0}$ refer to the only times when \prtt is available and online, where $time_{\st 0}, time'_{\st 0}< time_{\st 1}$. We define $\Delta_{\st j}$ the time period where the $j$-th message remains hidden; therefore,  $\Delta_{\st j}$ can be written as $\Delta_{\st j}=\sum\limits^{\st j}_{\st i=1}\bar{\Delta}_{\st i}$. Server \srv must learn the computation result (i.e., a linear combination of messages) after a certain time $\Delta$, where $\Delta<\bar\Delta_{\st 1}$.

We define $U$ as the universe of a solution $m_{\st j}$. We denote by $\lambda\in \mathbb{N}$ the security parameter. 
For certain system parameters, we use polynomial $poly(\lambda)$ to state the parameter is a polynomial function of  $\lambda$. 
%
%We define  $T'_{\st \prt}$ as $T'_{\st \prt}<T_{\st \prt}$, where $\Delta= T_{\st \prt}-T'_{\st \prt}$ is a security parameter. 
%
We define a public vector $\vec{x}$ as $\vec{x}=[x_{\st 1}, \cdots, x_{\st m}]$, where $x_{\st i}\neq x_{\st j}$, $x_{\st i}\neq 0$, and $x_{\st i}\notin U$. 

We define a hash function $\g: \{0,1\}^{\st *} \rightarrow  \{0,1\}^{\st poly(\lambda)}$, that maps arbitrary-length message to a message of length $poly(\lambda)$. %, where  $\lambda$ is a security parameter. 
We denote a null value or set by $\bot$. By $||v||$ we mean the bit-size of $v$ and by $||\vec{v}||$ we mean the total bit-size of elements of $\vec{v}$. 
We denote by $\prm$ a large prime number, where $\log_{\st 2}(\prm)$ is the security parameter, e.g., $\log_{\st 2}(\prm)=128$. A set of leaders are involved in \mmhtlp,  we denote the total number of leaders with \tl. We set $\cor=\tl+2$. For a value $v$ defined over a finite field $\mathbb{F}_{\st \prm}$ (of prime order $\prm$), by $v^{\st -1}$ we mean the multiplicative inverse of $v$, i.e., $v\cdot v^{\st -1}\equiv 1 \bmod \prm$.

To ensure generality, we adopt notations from zero-knowledge proof systems \cite{BlumSMP91,FeigeLS90}. Let $R_{\st \cmd}$ be an efficient binary relation that consists of pairs of the form $(stm_{\st \cmd}, wit_{\st \cmd})$, where $stm_{\st \cmd}$ is a statement and $wit_{\st \cmd}$ is a witness. Let $\mathcal{L}_{\st \cmd}$ be the language (in $\mathcal{NP}$) associated with $R_{\st \cmd}$, i.e., $\mathcal{L}_{\st \cmd}=\{stm_{\st \cmd}|\ \exists wit_{\st \cmd}  \text{ s.t. }$ $ R(stm_{\st \cmd}, wit_{\st \cmd})=1 \}$. A (zero-knowledge) proof for $\mathcal{L}_{\st \cmd}$ allows a prover to convince a verifier that $stm_{\st \cmd}\in \mathcal{L}_{\st \cmd}$ for a common
input $stm_{\st \cmd}$ (without revealing $wit_{\st \cmd}$). In this paper, two main types of verification occur (1) verification of a single client's puzzle solution, in this case, $\cmd=\scp$, and (2)  verification of a linear combination, in this case, $\cmd=\ep$. 
In this work, we assume parties interact through a secure channel. Moreover, we consider a malicious (or active) adversary that will corrupt the server.

% parties use a secure channel
% the server is malicious and it may gain access to secret keys and parameters of a subset of the clients.  

%, to differentiate $m_{\cl}$ from a message  $m_{\st \prt}$ that $\prt$ generates. 

\subsection{Pseudorandom Function}\label{sec::prf}

Informally, a pseudorandom function is a deterministic function that takes a key of length $\lambda$ and an input; and outputs a value. Informally, the security of \prf states that the output of \prf is indistinguishable from that of a truly random function.  In this paper, we use pseudorandom functions:   $\mathtt {PRF}:  \{0,1\}^{\st *}  \times\{0,1\}^{\st poly(\lambda)} \rightarrow  \mathbb{F}_{\st \prm}$. %, where $\prm$ is a sufficiently large prime number and $\log_{\st 2}(\prm)=\lambda$ is the security parameter. 
In practice, a pseudorandom function can be obtained from an efficient block cipher \cite{DBLP:books/crc/KatzLindell2007}. In this work, we use $\mathtt {PRF}$ to derive pseudorandom values to blind (or encrypt) secret messages.

\subsection{Oblivious Linear Function Evaluation}\label{sec::OLE-plus}

Oblivious Linear function Evaluation (\ole) is a two-party protocol that involves a sender and receiver. In this setting,  the sender has two inputs  $a, b\in \mathbb{F}_{\st \prm}$ and the receiver has a single input, $c \in \mathbb{F}_{\st \prm}$.  The protocol enables the receiver to learn only $s = a\cdot c + b \in \mathbb{F}_{\st \prm}$, while the sender learns nothing. Ghosh \textit{et al.} \cite{GhoshNN17} proposed an efficient \ole that has $O(1)$ overhead and mainly uses symmetric-key operations.\footnote{The scheme uses an Oblivious Transfer (OT) extension as a subroutine. However, the OT extension requires only a constant number of public-key-based OT invocations. The rest of the OT invocations are based on symmetric-key operations. The exchanged messages in the OT extension are defined over a small-sized field, e.g.,  a field of size $128$-bit \cite{AsharovL0Z13}.}

Later, in \cite{GhoshN19} an enhanced \ole, denoted as $\ole^{\st +}$, was introduced. $\ole^{\st +}$ ensures that the receiver cannot learn anything about the sender's inputs,  even if it sets its input to $0$. $\ole^{\st +}$ is also accompanied by an efficient symmetric-key-based verification mechanism that enables a party to detect its counterpart's misbehavior during the protocol's execution. In this paper, we use $\ole^{\st +}$ to securely switch the blinding factors of secret messages (encoded in the form of puzzles) held by a server.  We refer readers to Appendix \ref{apndx:F-OLE-plus}, for the construction of $\ole^{\st +}$.

\subsection{Polynomial Representation of a Message}\label{sec:Polynomial-Representation-of-Message}
Encoding a message $m$ as a polynomial $\bm{\pi}(x)$ allows imposing a certain structure on the message. Polynomial representation has previously been used in different contexts, for example in secret sharing \cite{Shamir79},  private set intersection \cite{DBLP:conf/crypto/KissnerS05}, or error-correcting codes \cite{reed1960polynomial-}. 
To encode  a message $m$ in a polynomial $\bm{\pi}(x)$, we can use one of the following approaches,  (1) setting $m$ as the constant terms of $\bm{\pi}(x)$, e.g.,  $m+\sum\limits^{\st n}_{\st j=1}x^{\st j}\cdot a_{\st j}\bmod \prm$ 
%
%$\bm{\pi}(x)=a_{\st n}\cdot x^{\st n}+\ldots + a_{\st 1}\cdot x+m \bmod \prm$ 
%
and (2) setting $m$ as the root of $\bm{\pi}(x)$, e.g.,  $\bm{\pi}(x)=(x-m)\cdot \bm\tau(x)\bmod \prm$. In this paper, we utilize both approaches. 
The former approach allows us to perform a linear combination of the constant terms of different polynomials. 
Furthermore, we use the latter approach to insert a secret random root (or a trap) into the polynomials encoding the messages. Hence, the resulting polynomial representing the linear combination maintains this root, allowing the verification of the correctness of the computation.

%We use the latter to insert a secret random root into the polynomials encoding the messages, such that the resultant polynomial that represents the linear combination contains that specific root (which will facilitate verification of the operations' correctness). Thus, a combination of the two approaches allows us to perform arithmetic operations on the messages while being able to detect adversaries'  misbehavior. 

% which ultimately allow clients to verify the correctness of the linear combination of messages computed by a solver. Briefly, we initially use the former approach during constructing a puzzle for $m$. However, during computing the linear combination of puzzles, we allow parties to (1) insert a secret random root to their polynomials, such that the resultant polynomial that represents the linear combination contains that specific root, and (2) perform arithmetic operations (i.e., a linear combination) of the polynomials and accordingly on the polynomials' constant terms. 

In general, polynomials can be represented in the point-value form, in the following way. A polynomial $\bm\pi(x)$ of degree $n$ can be represented as a set of $l$ ($l>n$) point-value pairs $\{(x_{\st 1},\pi_{\st 1}),\ldots,$ $(x_{\st l},\pi_{\st l})\}$ such that all $x_{\st i}$ are distinct  non-zero points and $\pi_{\st i}=\bm\pi(x_{\st i})$ for all $i$, $1\le i\le l$.  A polynomial
in this form can be converted into coefficient form via polynomial interpolation, e.g., via Lagrange interpolation~\cite{aho19}.  %Also,  one can add or multiply two polynomials,  in point-value form, by adding or multiplying their corresponding $y$-coordinates. 

 Arithmetic of polynomials in point-value representation can be done by adding or multiplying the corresponding $y$-coordinates of polynomials. 
Let  $a$ be a scalar and $\{(x_{\st 1},\pi_{\st 1}),\ldots,$ $(x_{\st l},\pi_{\st l})\}$  be $(y,x)$-coordinates of a polynomial $\bm{\pi}(x)$. Then, the polynomial $\bm{\theta}$ interpolated from  $\{(x_{\st 1}, a\cdot\pi_{\st 1}),\ldots,$ $(x_{\st l},a\cdot\pi_{\st l})\}$ is the product of $a$ and polynomial $\bm{\pi}(x)$, i.e., $\bm{\theta}(x)=a\cdot \bm{\pi}(x)$.

\subsection{Unforgeable Encrypted Polynomial with a Hidden Root}\label{Unforgeable-Encrypted-Polynomial}

The idea of the ``unforgeable encrypted polynomial with a hidden root'' was introduced in \cite{AbadiTD16}. Informally, it can be explained as follows. Let us consider a polynomial $\bm{\pi}(x)\in \mathbb{F}_{\st \prm}[x]$ with a random secret root $\beta$. We can represent $\bm{\pi}(x)$ in the point-value form and then encrypt its $y$-coordinates. We give all the $x$-coordinates and encrypted $y$-coordinates to an adversary and we locally delete all the $y$-coordinates. The adversary may modify any subset of the encrypted $y$-coordinates and send back to us the encrypted $y$-coordinates. If we decrypt all the $y$-coordinates sent by the adversary and then interpolate a polynomial $\bm{\pi}'(x)$, the probability that $\bm{\pi}'(x)$ has $\beta$ has a root is negligible in the security parameter $\lambda=\log_{\st 2}(\prm)$. Below, it is formally stated. 

%Then, given the x-coordinates and encrypted y-coordinates, the probability that an adversary can modify a subset of these encrypted y-coordinates, such that after decrypting the modified y-coordinate and interpolating a polynomial, the probability that this polynomial will have the random root is negligible. 

%We briefly restate it. Let $\bm{\tau}(x)$ be a polynomial with a root $\beta$, and  $\{(x_{\st 1},\pi_{\st 1}),\ldots,$ $(x_{\st l},\pi_{\st l})\}$ be point-value representation of 
%$\bm{\tau}(x)$. 
%
%Let $o_{\st i}=w_{\st i}\cdot (\pi_{\st i}+z_{\st i})$ be the encryption of each $y$-coordinate $\pi_{\st i}$ of $\bm{\tau}(x)$, using random values  $w_{\st i}$ and $r_{\st i}$. If an adversary (who does not know $w_{\st i}$ and   $r_{\st i}$) changes at least a $o_{\st i}$ to  $o'_{\st i}$, then the probability that polynomial $\bm{\tau}'(x)$ interpolated from unencrypted values $\{\big(x_{\st 1}, (w_{\st 1}\cdot o'_{\st 1})-z_{\st l}\big), \ldots,$ $(x_{\st l}, \big(w_{\st l}\cdot o'_{\st l})-z_{\st l}\big)\}$ will have root $\beta$ is negligible.  

\begin{theorem}[Unforgeable Encrypted Polynomial with a Hidden Root]\label{theorem::Unforgeable-Encrypted-Polynomial}
Let $\bm{\pi}(x)$ be a polynomial of degree $n$ with a random root $\beta$, and  $\{(x_{\st 1},\pi_{\st 1}),\ldots,$ $(x_{\st l},\pi_{\st l})\}$ be point-value representation of 
$\bm{\pi}(x)$, where $l>n$, $\prm$ denotes a large prime number,  $\log_{\st 2}(\prm)=\lambda'$ is the security parameter, $\bm{\pi}(x)\in \mathbb{F}_{\st \prm}[x]$, and $\beta\stackrel{\st\$}\leftarrow \mathbb{F}_{\st \prm}$. Let $o_{\st i}=w_{\st i}\cdot (\pi_{\st i}+z_{\st i}) \bmod \prm$ be the encryption of each $y$-coordinate $\pi_{\st i}$ of $\bm{\pi}(x)$, using values  $w_{\st i}$ and $r_{\st i}$ chosen uniformly at random from $\mathbb{F}_{\st \prm}$. Given $\{(x_{\st 1}, o_{\st 1}), \ldots,$ $(x_{\st l}, o_{\st l})\}$, the probability that an adversary (which does not know $(w_{\st 1}, r_{\st 1}),\ldots,(w_{\st l}, r_{\st l}),  \beta$) can forge $[o_{\st 1}, \ldots, o_{\st l}]$ to arbitrary $\vec{\ddot{o}}=[\ddot{o}_{\st 1}, \ldots, \ddot{o}_{\st l}]$, such that: (i) $\exists \ddot{o}_{\st i}\in \vec{\ddot{o}}, \ddot{o}_{\st i}\neq {o}_{\st i}$, and (ii) the polynomial $\bm{\pi}'(x)$ interpolated from unencrypted $y$-coordinates $\{\big(x_{\st 1}, (w_{\st 1}\cdot \ddot{o}_{\st 1})-z_{\st l}\big), \ldots,$ $(x_{\st l}, \big(w_{\st l}\cdot \ddot{o}_{\st l})-z_{\st l}\big)\}$ will have root $\beta$ is negligible in $\lambda'$, i.e.,  
$Pr[\bm{\pi}'(\beta)\bmod \prm=0]\leq \mu(\lambda')$. 
\end{theorem}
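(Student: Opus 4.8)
The plan is to reduce forgery of the encrypted $y$-coordinates to the task of guessing the hidden root $\beta$, exploiting the fact that each encryption $o_{\st i}=w_{\st i}\cdot(\pi_{\st i}+z_{\st i})\bmod\prm$ acts as a one-time-pad-like mask over $\pi_{\st i}$ in $\mathbb{F}_{\st\prm}$. First I would argue that, from the adversary's view, the tuple $(x_{\st 1},o_{\st 1}),\ldots,(x_{\st l},o_{\st l})$ carries no information about $\beta$: since the $w_{\st i}$ and the $z_{\st i}$ (equivalently $r_{\st i}$) are uniformly random and independent, each $o_{\st i}$ is uniformly distributed over $\mathbb{F}_{\st\prm}$ (or $\mathbb{F}_{\st\prm}^{\st *}$), and the joint distribution of $\vec{o}$ is independent of the underlying polynomial $\bm{\pi}(x)$ and hence of $\beta$. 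Consequently $\beta$ remains uniformly distributed over $\mathbb{F}_{\st\prm}$ conditioned on the adversary's entire view.

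Next I would fix any adversarial strategy producing $\vec{\ddot{o}}=[\ddot{o}_{\st 1},\ldots,\ddot{o}_{\st l}]$ with $\ddot{o}_{\st i}\neq o_{\st i}$ for at least one index, and consider the polynomial $\bm{\pi}'(x)$ interpolated (through all $l$ points) from the decrypted coordinates $\big(x_{\st i},(w_{\st i}\cdot\ddot{o}_{\st i})-z_{\st i}\big)$. Write $\bm{\pi}'(x)=\bm{\pi}(x)+\bm{\delta}(x)$, where $\bm{\delta}(x)$ is the (degree $<l$) polynomial interpolated from the error points $\big(x_{\st i}, w_{\st i}\cdot(\ddot{o}_{\st i}-o_{\st i})\big)$. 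Because at least one $\ddot{o}_{\st i}\neq o_{\st i}$ and the corresponding $w_{\st i}\neq 0$, we have $\bm{\delta}(x)\not\equiv 0$, so $\bm{\pi}'(x)$ is a nonzero polynomial of degree at most $l-1$ that differs from $\bm{\pi}(x)$. Then $\bm{\pi}'(\beta)=0$ forces $\beta$ to be one of the at most $l-1$ roots of $\bm{\pi}'(x)$. The key point, to be made rigorous, is that the set of roots of $\bm{\pi}'(x)$ depends only on the adversary's view (the $x$-coordinates, the $o_{\st i}$'s, and the adversary's own randomness) and the quantities $w_{\st i},z_{\st i}$ it implicitly ``uses'' — but crucially not on $\beta$ itself, since $\bm{\pi}'$ is determined by data the adversary sees plus the fixed masks, while $\beta$ is information-theoretically hidden. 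Hence the event $\bm{\pi}'(\beta)=0$ is the event that a uniformly random $\beta\in\mathbb{F}_{\st\prm}$, independent of a set $S$ of size $\le l-1$, lands in $S$; this has probability at most $(l-1)/\prm \le \mu(\lambda')$ since $l$ is polynomial in $\lambda'$ and $\prm$ is exponential in $\lambda'=\log_{\st 2}(\prm)$.

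I would then conclude by a union bound / averaging over the adversary's randomness and over the choice of masks, obtaining $\Pr[\bm{\pi}'(\beta)\bmod\prm=0]\le (l-1)/\prm$, which is negligible in $\lambda'$.

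I expect the main obstacle to be making precise the independence claim — that the roots of $\bm{\pi}'(x)$ are independent of $\beta$ conditioned on everything the adversary can influence. The subtlety is that $\bm{\pi}(x)$ was chosen to \emph{have} root $\beta$, so $\bm{\pi}(x)$ and $\beta$ are correlated; one must carefully argue that once the $o_{\st i}$'s are revealed (which are distributed independently of $\bm{\pi}$ and $\beta$ thanks to the random masks $w_{\st i}$), the residual uncertainty in $\beta$ from the adversary's perspective is still uniform over $\mathbb{F}_{\st\prm}$, and that the adversary's output $\vec{\ddot{o}}$, and therefore $\bm{\pi}'(x)$, is a function of this view alone. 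A clean way to handle this is to re-order the experiment: first sample $\bm{\pi}(x)$'s ``shape'' and the masks, reveal $\vec{o}$, let the adversary commit to $\vec{\ddot{o}}$ (hence to $\bm{\pi}'(x)$ and its root set $S$), and only then sample $\beta$ uniformly (consistently adjusting $\bm{\pi}$ so it has root $\beta$, which is possible since conditioning a random polynomial of degree $n$ on one root is still well-defined and does not change the distribution of $\vec{o}$). Under this re-ordering the bound $|S|/\prm$ is immediate.
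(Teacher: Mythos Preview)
The paper does not actually prove this theorem; it defers to \cite{AbadiTD16} for the proof. Your overall strategy---reduce to a Schwartz--Zippel bound by arguing that $\beta$ remains uniform given the adversary's view---is correct, and your decomposition $\bm{\pi}'(x)=\bm{\pi}(x)+\bm{\delta}(x)$ is exactly the right tool. However, the independence claim you lean on does not hold as stated. You say the root set of $\bm{\pi}'$ is determined by the adversary's view together with the masks $(w_{\st i},z_{\st i})$ and is independent of $\beta$. But once $\vec{o},\vec{w},\vec{z}$ are all fixed, the values $\pi_{\st i}=w_{\st i}^{-1}o_{\st i}-z_{\st i}$ are pinned down, hence $\bm{\pi}$ is determined by interpolation (recall $l>n$), and therefore $\beta$---being a root of $\bm{\pi}$---is no longer free. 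Your proposed re-ordering thus cannot commit to $\bm{\pi}'$ (which needs $\vec{z}$) and still sample $\beta$ afterward: either you fix $\vec{z}$ and lose the freedom in $\beta$, or you adjust $\vec{z}$ and $\bm{\pi}'$ moves with it.

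The fix is already present in your write-up: work with $\bm{\delta}$ rather than $\bm{\pi}'$. Since $\bm{\pi}(\beta)=0$, the event $\bm{\pi}'(\beta)=0$ is exactly $\bm{\delta}(\beta)=0$. Now $\bm{\delta}$ is interpolated from $\big(x_{\st i},\,w_{\st i}^{-1}(\ddot{o}_{\st i}-o_{\st i})\big)$ and depends only on $\vec{o}$ and $\vec{w}$ (the adversary's output is a function of $\vec{o}$); it does \emph{not} depend on $\vec{z}$. Conditioned on $(\vec{o},\vec{w})$, each $z_{\st i}$ is still free to absorb any choice of $\bm{\pi}$, so $\beta$ remains uniform over $\mathbb{F}_{\st\prm}$ and independent of $\bm{\delta}$. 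As you already note, $\bm{\delta}\not\equiv 0$ (some $\ddot{o}_{\st j}\neq o_{\st j}$ with $w_{\st j}\neq 0$) and $\deg\bm{\delta}<l$, so $\Pr[\bm{\delta}(\beta)=0]\le (l-1)/\prm$, which is negligible in $\lambda'$. In short: condition on $(\vec{o},\vec{w})$ rather than $(\vec{o},\vec{w},\vec{z})$, and count roots of $\bm{\delta}$ rather than of $\bm{\pi}'$.
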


We refer readers to \cite{AbadiTD16} for the proof of Theorem \ref{theorem::Unforgeable-Encrypted-Polynomial}. This paper uses the unforgeable encrypted polynomial with a hidden root concept to detect a server's misbehaviors, a technique also used in \cite{tempora-fusion}.

\subsection{Commitment Scheme}\label{subsec:commit}

A commitment scheme involves a sender and a receiver, and it includes two phases:  commitment and opening. During the commitment phase, the sender commits to a message, using algorithm $\comcom(m,r)=com$, where $m$ is the message and $r$ is a secret value randomly chosen from $ \{0,1\}^{\st poly(\lambda)}$.  Once the commitment phase concludes, the sender forwards the commitment, $com$, to the receiver. During the opening phase, the sender transmits the pair $\hat{m}:=(m, r)$ to the receiver. The receiver verifies the correctness of the pair using the algorithm $\comver({com},\hat{m})$. It accepts the pair if the output of the verification algorithm is $1$. 

A commitment scheme offers two main properties hiding and binding. Hiding ensures that it is computationally infeasible for an adversary, i.e., the receiver, to gain any knowledge about the committed message $m$ before $com$ is opened. Binding ensures that it is computationally infeasible for an adversary, i.e., the sender, to open $com$ to different values $\hat{m}':=(m',r')$ than the ones originally used during the commit phase. Thus, it should be infeasible to find an alternative pair $\hat{m}'$ such that $\comver({com},\hat{m})=\comver({com},\hat{m}')=1$, where $\hat{m}\neq \hat{m}'$. 

There is a standard efficient hash-based commitment scheme that involves computing $\mathtt{Q}(m||r)={com}$ during the commitment.  The verification step requires checking $\mathtt{Q}(m||r)\stackrel{\st ?}={com}$. Note that $\mathtt{Q}:\{0,1\}^{\st *}\rightarrow \{0,1\}^{\st poly(\lambda)}$ is a collision-resistant hash function, where the probability of finding two distinct values $m$ and $m'$ such that $\mathtt{Q}(m)=\mathtt{Q}(m')$ is negligible with regard to the security parameter $\lambda$. In this paper, we use this hash-based commitment scheme to identify a server's misbehaviors.

%%%%%%%%%%

%%%%%%%%%%%%%%%%%%%%%

\

\subsection{Time-Lock Puzzle}\label{sec::Time-lock-Encryption} 
%\vspace{-1mm}
In this section, we restate the TLP's formal definition. %Our focus will be on the RSA-based TLP, because of its simplicity and because it is the core of the majority of TLPs.

\begin{definition}\label{Def::Time-lock-Puzzle} A TLP consists of three algorithms: $(\mathsf{Setup_{\st TLP},}$ $\mathsf{GenPuzzle_{\st TLP}}$ $\mathsf{, Solve_{\st TLP}})$. It meets completeness and efficiency properties. TLP involves a client \prtt and a server \srv.

\begin{itemize}[leftmargin=.37cm]
\item \textit{Algorithms}:
\begin{itemize}%[leftmargin=.39cm]
\item$\mathsf{Setup_{\st TLP}}(1^{\st\lambda},\Delta, \mxsqr)\rightarrow (pk,sk)$. A probabilistic algorithm run by \prtt. It takes as input a security parameter, $1^{\st \lambda}$, time parameter $\Delta$ (in seconds) that specifies how long a message must remain hidden, and time parameter $\mxsqr$ which is the maximum number of squaring that a solver (with the highest level of computation resources) can perform per second. It outputs pair $(pk,sk)$ that contains public and private keys. %\prtt publishes $pk$. 

\item $\mathsf{GenPuzzle_{\st TLP}}(m, pk, sk)\rightarrow {o}$. A probabilistic algorithm executed by \prtt. It takes as input a solution $m$ and $(pk,sk)$. It outputs a puzzle $o$.

\item $\mathsf{Solve_{\st TLP}}(pk, {o})\rightarrow s$. A deterministic algorithm run by \srv. It takes as input  $pk$ and $ {o}$. It outputs a solution $s$.

\end{itemize}

\item \textit{Completeness}. For any honest \cl and \se, it always holds that $\mathsf{Solve_{\st TLP}}(pk, o)=m$.

\item \textit{Efficiency}. The run-time of  $\mathsf{Solve_{\st TLP}}(pk, {o})$ is upper-bounded by $\bar{poly}(\Delta,\lambda)$, with a fixed polynomial $\bar{poly}$. 
\end{itemize}
\end{definition}

The security of a TLP requires that the puzzle's solution remains confidential from all adversaries running in parallel within the time period, $\Delta$. It also requires that an adversary cannot extract a solution in time $\delta(\Delta)<\Delta$, using a polynomial number of processors  ${poly}(\Delta)$ that run in parallel and after a large amount of pre-computation. %Therefore, these factors are taken into consideration, when defining a puzzle scheme, e.g., in~\cite{BonehBBF18,MalavoltaT19,Abadi-C-TLP}. 

\vspace{-1.4mm}
\begin{definition} A TLP is secure if for all $\lambda$ and $\Delta$, all probabilistic polynomial time (PPT) adversaries $\mathcal{A}:=(\mathcal{A}_{\st 1},\mathcal{A}_{\st 2})$ where $\mathcal{A}_{\st 1}$ runs in total time $O(poly(\Delta,\lambda))$ and $\mathcal{A}_{\st 2}$ runs in time $\delta(\Delta)<\Delta$ using at most ${poly}(\Delta)$ parallel processors, there is a negligible function $\mu()$, such that: 
\vspace{-1mm}
$$ Pr\left[
  \begin{array}{l}
\mathsf{Setup_{\st TLP}}(1^{\st\lambda},\Delta, \mxsqr)\rightarrow (pk, sk)\\
\mathcal{A}_{\st 1}(1^{\st\lambda},pk, \Delta)\rightarrow (m_{\st 0},m_{\st 1},\text{state})\\
b\stackrel{\st\$}\leftarrow \{0,1\}\\
\mathsf {GenPuzzle_{\st TLP}}(m_{\st b}, pk, sk)\rightarrow  {o}\\
\hline
\mathcal{A}_{\st 2}(pk,  {o},\text{state})  \rightarrow b \\

\end{array}
\right]\leq \frac{1}{2}+\mu(\lambda)$$
where $\delta(\Delta)=(1-\epsilon)\Delta$ and $\epsilon<1$, according to \cite{BonehBBF18}.
\end{definition}

Note that, in the literature, the TLP definition includes two adversaries $\mathcal{A}_{\st 1}$ and $\mathcal{A}_{\st 2}$ because their run-times are different. We refer readers to Appendix \ref{sec::RSA-based-TLP} for the description of the original RSA-based TLP, which is the core of the majority of TLPs. 
By definition, TLPs are sequential functions.  Their construction requires that a sequential function, such as modular squaring, is invoked iteratively a fixed number of times. The sequential function and iterated sequential functions notions, in the presence of an adversary possessing a polynomial number of processors, are formally defined in~\cite{BonehBBF18}. We restate the definitions in Appendix \ref{sec::equential-squering}. 

\subsection{\tf: A TLP with Efficient Verifiable Homomorphic Linear Combination}\label{sec::prelimi-tempora-fusion}
Recently, Abadi \cite{tempora-fusion} proposed a TLP, called \tf, that supports an efficient verifiable homomorphic linear combination of puzzles, where which belongs to a different client. Since our schemes are built on top of \tf, we briefly describe it in the remainder of this section. \tf mainly relies on (1) a polynomial representation of a message, (2)  an unforgeable encrypted polynomial, (3) oblivious linear function evaluation, and (4) the original RSA-based TLP initially introduced by Rivest \et. \cite{Rivest:1996:TPT:888615}.

They consider a malicious server and semi-honest clients. In their scheme, the malicious server is considered strong in the sense that it can act arbitrarily and access the secret keys and parameters of a subset of clients. The scheme designates at random a subset of clients as leaders. Let \idx be this subset, containing $\tl$ leaders. The scheme allows the malicious server to gain access to the secret keys and parameters of some of these leaders. Specifically,  for a set $P=\{\srv, \prtt_{\st 1},\dots, \prtt_{\st n}\}$ containing all the parties involved, the scheme allows the adversary adaptively corrupt a subset $\corupt$ of $P$. It will fully corrupt \srv and act arbitrarily on its behalf. It can retrieve the secret keys of a subset of clients in $P$. They define a threshold $t$ and require the number of non-corrupted leaders (i.e., the parties in \idx) to be at least $t$. This setting allows a high number of clients to be corrupted by the adversary, under a constraint.
For instance, when $|P|=500$, and the total number of leaders is $10$ (i.e., $\tl=10$), and $t=2$, then the adversary may corrupt $498$ parties in $P$ (i.e., $|\corupt|=98$), as long as at most $8$ parties from $\mathcal{I}$ are in \corupt, i.e., $|\corupt\ \cap\ \idx|\leq t$. %Section \ref{sec::definition} presents a formal definition of security. 

\subsubsection{Outline of \tf.} Initially, each client creates a puzzle that encodes a secret solution and sends it to a server \srv. Upon receiving each puzzle, \srv processes it to determine the solution within a specified timeframe. The time at which each client's solution is found can be distinct from others. Upon finding a solution, \srv generates a proof that asserts the solution's correctness. 
Later, possibly after a certain amount of time has passed since sending their puzzles to  \srv, some clients whose puzzles remain undiscovered may collaborate and request  \srv to homomorphically combine their puzzles. The combined puzzles will encode a linear combination of their solutions. The plaintext result of this computation will be determined by \srv after a specific timeframe, potentially before any of their puzzles are solved.

To enable \srv to impose a specific structure (that can be considered as a hidden ``trap'') on its outsourced puzzles and homomorphically combine them, each client interacts with \srv and also communicates a message to other clients. After a certain period, \srv can find the solution to the puzzle encoding the computational result. It generates proof asserting the correctness of the solution, which anyone can efficiently verify to ensure that the result maintains the intended structure. Eventually, \srv discovers the solution to each client's puzzle as well. It then publishes both the solution and a proof that enables anyone to verify the validity of the solution too. Appendix \ref{sec::tf-protocol} presents a detailed description of \tf. 

%%%%%

%\input{definition}
%!TEX root = main.tex

\section{Definition of Multi-Instance Verifiable Partially Homomorphic TLP}\label{sec::definition}

\subsection{Private Linear Combination} The basic functionality \func that any $z$-input Private Linear Combination (PLC) computes takes as input a pair of coefficient $q_{\st j}$ and plaintext value $m_{\st j}$  (for every $j, 1\leq j \leq n$), and returns their linear combination $\sum\limits^{\st n}_{\st j=1}q_{\st j}\cdot m_{\st j}$, as stated in \cite{tempora-fusion}. More formally, \func is defined as:
 \begin{equation}\label{equ::PLC}
\func \big((q_{\st 1}, m_{\st 1}),\ldots,(q_{\st z}, m_{\st z}) \big)\rightarrow \sum\limits_{\st j=1}^{\st z} q_{\st j}\cdot  m_{\st j}
%
 %(q_{\st 1}\cdot m_{\st 1}+\ldots+q_{\st z}\cdot m_{\st z})
\end{equation}

We proceed to present a formal definition of Muti-Instance Verifiable Homomorphic Linear Combination TLP (\vhtlp), by presenting the syntax followed by the security and correctness definitions.

\subsection{Syntax of \vhtlp}

\begin{definition}[Syntax] A Multi-Instance Verifiable Homomorphic Linear Combination TLP (\vhtlp) scheme consists of six algorithms: $\vhtlp=(\ssetup, \csetup, \pgen, \eval, \solv, \ver)$, defined below.

\begin{itemize}[label=$\bullet$]

\item \underline{$\ssetup(1^{\st \lambda})\rightarrow \kp_{\st \srv}$}. It is an algorithm run by the server \srv. It takes a security parameter $1^{\st \lambda}$. It generates a pair $\kp_{\st \srv}:=(sk_{\st \srv}, pk_{\st \srv})$, that includes a set of secret  parameters $sk_{\st \srv}$ and  a set of public parameters $pk_{\st \srv}$. 
It returns $\kp_{\st \srv}$. Server \srv publishes $pk_{\st \srv}$. 

\item \underline{$\csetup(1^{\st \lambda})\rightarrow \kp$}.  It is a probabilistic algorithm run by a client $\prtt$. It takes security parameter $1^{\st \lambda}$ as input. It returns a pair $\kp:=(sk, pk)$ of secret key $sk$ and public key $pk$. Client $\prtt$ publishes $pk$.

%%%%%%% now the client takes a vector of time parameters and messages and returns a vector of puzzles

\item \underline{$\pgen(\vec{m}, \kp, pk_{\st \srv}, \vec\Delta, \mxsqr)\rightarrow(\vec{o}, prm)$}. 
 %
%$\pgen(m_{\st \prt}, \kp_{\st \prt}, pk_{\st \srv}, \Delta_{\st \prt}, \mxsqr)\rightarrow(\vec{o}_{\st \prt}, prm_{\st \prt})$.  
It is an algorithm run by $\prtt$. It takes as input a vector $\vec{m}=[m_{\st 1},\ldots, m_{\st z}]$ of plaintext solutions, key pair $\kp$, server's public parameters set $pk_{\st \srv}$, a vector $\vec\Delta=[\Delta_{\st 1},\ldots,$ $ \Delta_{\st z}]$ of time parameters, where each $\Delta_{\st j}$ determines the period in which $m_{\st j}$ should remain private, and the maximum number $\mxsqr$ of sequential steps  (e.g., modular squaring) per second that the strongest solver can perform. 
It returns (i) a vector $\vec{o}=[\vec{o}_{\st 1}, \ldots, \vec{o}_{\st z}]$, where each  $\vec{o}_{\st j}$ represents a single puzzle corresponding to $m_{\st j}$, and (ii) a pair  $prm:= (sp, pp)$ of secret parameter $sp$ and public  parameter $pp$ of the puzzles.  Client $\prtt$ publishes $(\vec{o}, pp)$.

\item \underline{$\eval(\langle \srv(\vec{o}, \Delta, $ $ \mxsqr, {pp}, {pk}, pk_{\st \srv}), \prtt(\Delta, \mxsqr, \kp, prm, q_{\st 1}, pk_{\st \srv}), \ldots, \prtt(\Delta, $ $\mxsqr, $ $\kp, prm, q_{\st z},$  $pk_{\st \srv}) \rangle)\rightarrow$}\\ \underline{$(\vec{g}, \vec{pp}^{\st(\text{Evl})})$}. 
%
%\item $\eval(\langle \srv(\vec{o}, \Delta, \mxsqr, \vec{pp}, \vec{pk}, pk_{\st \srv}), c_{\st 1}(\Delta, \mxsqr, \kp_{\st {\st 1}}, prm_{\st {\st 1}}, q_{\st 1}, pk_{\st \srv}), \ldots, c_{\st n}(\Delta, \mxsqr, \kp_{\st {\st n}}, prm_{\st {\st n}}, q_{\st n},$ \\ $pk_{\st \srv}) \rangle)\rightarrow(\vec{g}, \vec{pp}^{\st(\text{Evl})})$. 
%
It is an (interactive) algorithm run by  \srv (and client $\prtt$). When no interaction between \srv and \prtt is required, the client's inputs will be null $\bot$. Server \srv takes as input vector $\vec{o}$ of $z$ puzzles, time parameter $\Delta$ within which the evaluation result should remain private (where $\Delta<min(\Delta_{\st 1},\ldots, \Delta_{\st z})$),   \mxsqr, ${pp}$,  ${pk}$, and  $pk_{\st \srv}$. For each puzzle $\vec{o}_{\st j}$, client $\prtt$ takes as input $\Delta, \mxsqr,  \kp, prm$, coefficient $q_{\st j}$, and $pk_{\st \srv}$. The algorithm returns a puzzle vector  $\vec{g}$, representing a single puzzle and a vector of public parameters $\vec{pp}^{\st (\text{Evl})}$.  Server \srv publishes $\vec{g}$ and client \prtt publish $\vec{pp}^{\st (\text{Evl})}$. 
%\srv publishes $(\vec{g}, \vec{pp}^{\st(\text{Evl})})$. 

\item \underline{$\solv(\vec{o}, pp, \vec{g}, \vec{pp}^{\st (\text{Evl})}, pk, pk_{\st \srv},\cmd)\rightarrow(\vec{m}, \vec\zeta)$}. It is a deterministic algorithm run by \srv. It takes as input client $\prtt$'s puzzle vector $\vec{o}$, the puzzle's public parameter $pp$, a vector $ \vec{g}$ representing the puzzle that encodes evaluation of the client's puzzles, a vector of public parameter $\vec{pp}^{\st (\text{Evl})}$, $pk$, $pk_{\st \srv}$, and a command \cmd, where $cmd \in \{\text{\scp, \ep}\}$. When $\cmd=\scp$, it solves each puzzle $\vec{o}_{\st j}$ in $\vec{o}$ (which is an output of $\pgen()$), this yields a solution  $m_{\st j}$. It appends each $m_{\st j}$ to $\vec{m}$. In this case, input $\vec{g}$ can be $\bot$. When $\cmd=\ep$, it solves puzzle $\vec{g}$ (which is an output of $\eval()$), which results in a solution $m$. It appends $m$ to $\vec{m}$. In this case, input $\vec{o}$ can be $\bot$. Depending on the value of \cmd, it generates a proof vector $\vec\zeta$ (asserting that $m \textnormal{ or } m_{\st j}\in \mathcal{L}_{\st \cmd}$ for each solution). It outputs plaintext solution(s) $\vec{m}$ and proof $\vec\zeta$. Server \srv publishes $(\vec{m}, \vec\zeta)$.

\item \underline{$\ver(m, \zeta, \vec{o}_{\st j}, pp, \vec{g}, \vec{pp}^{\st (\text{Evl})}, pk_{\st \srv}, \cmd)\rightarrow \ddot{v}\in\{0,1\}$}. It is a deterministic algorithm run by a verifier. It takes as input a plaintext solution $m$, proof $\zeta$, puzzle $\vec{o}_{\st j}$, 
public parameters  $pp$ of  $\vec{o}_{\st j}$,  a puzzle $\vec{g}$ for a linear combination of the puzzles, public parameters $\vec{pp}^{\st (\text{Evl})}$ of $\vec{g}$, server's public key $pk_{\st \srv}$, and command \cmd that determines whether the verification corresponds to \prtt's single puzzle or linear combination of the puzzles. It returns $1$ if it accepts the proof. It returns $0$ otherwise.  
\end{itemize}

% rememebr that: Y= \Delta\cdot \mxsqr 

\end{definition}

In the above syntax, the prover is required to generate a witness/proof $\zeta$ for the language $\mathcal{L}_{\st\cmd}=\{stm_{\st\cmd}=(\bar{pp}, m)|\ R_{\st\cmd}(stm_{\st\cmd},$ $ \zeta)=1\}$, where if $\cmd=\scp$, then $\bar{pp}=pp$ and if $\cmd=\ep$,  then $\bar{pp}=\vec{pp}^{\st (\text{Evl})}$.

%

%, with respect to the time parameters associated with the puzzle encoding the evaluation.

\subsection{Security Model of \vhtlp}\label{sec::Security-Model}

A \vhtlp scheme must satisfy \textit{security} (i.e., privacy and solution-validity), \textit{completeness}, \textit{efficiency}, and \textit{compactness} properties. 
%
%Informally, privacy states that the solution $m$ to a single puzzle must remain concealed for a predetermined duration from any adversaries equipped with a polynomial number of processors. More precisely, an adversary with a running time of $\delta(T)$ (where $\delta(T) < T$) is unable to discover a message significantly earlier than $\delta(\Delta)$. This requirement also applies to the result of the evaluation. Specifically, the plaintext message representing the linear evaluation of messages must remain undisclosed to the previously described adversary within a predefined period. 
%
Each security property of a \vhtlp scheme is formalized through a game between a challenger \chal that plays the role of honest parties and an adversary \adv that controls the corrupted parties. In this section, initially, we define a set of oracles that will strengthen the capability of \adv. After that, we will provide formal definitions of the properties of \vhtlp.

\subsubsection{Oracles.} To allow \adv to adaptively select plaintext solutions and corrupt parties, inspired by \cite{tempora-fusion}, we provide \adv with access to oracles: puzzle generation $\opgen()$ and evaluation $\oeval()$. Moreover, to enable \adv to have access to the messages exchanged between the corrupt and honest parties during the execution of $\eval()$, we define an oracle called $\orev()$. Below, we define these oracles.

\begin{itemize}[label=$\bullet$]

\item $\opgen(st_{\st \chal}, GeneratePuzzle, \vec{m}, \vec\Delta)\rightarrow(\vec{o}, pp)$. It is executed by the challenger \chal.  It receives a query $(GeneratePuzzle, \vec{m}, \vec\Delta)$, where $GeneratePuzzle$ is a string, $\vec{m}=[m_{\st 1},\ldots, m_{\st z}]$ is a vector of plaintext solutions, and $\vec\Delta=[\Delta_{\st 1},\ldots,$ $ \Delta_{\st z}]$ is a vector of time parameters, where each $\Delta_{\st j}$ determines the period within which $m_{\st j}$ should remain private.  \chal retrieves $(\kp, pk, \mxsqr)$ from its state $st_{\st \chal}$ and then executes $\pgen(\vec{m}, \kp, pk_{\st \srv},$ $ \vec\Delta, \mxsqr)\rightarrow(\vec{o}, prm)$, where $prm:= (sp, pp)$. It returns $(\vec{o}, pp)$ to \adv.

 \item $\oeval(st_{\st \chal}, evaluate)\rightarrow (\vec{g}, \vec{pp}^{\st(\text{Evl})})$. It is executed interactively between the challenger \chal and the adversary to run 
 $
\eval(\langle \adv(input_{\st \srv}),  \prtt(\Delta, \mxsqr, \kp, prm, q_{\st 1}, pk_{\st \srv}), \ldots, $ $ \prtt(\Delta,\mxsqr,$ $
 \kp, prm, q_{\st z}, pk_{\st \srv}) \rangle)\rightarrow(\vec{g}, \vec{pp}^{\st (\text{Evl})})$. The inputs of honest parties are extracted by \chal from $st_{\st \chal}$. The adversary may provide arbitrary inputs $input_{\st \srv}$ or the input of an honest \srv (in this case $input_{\st \srv}$ is set to $(\vec{o}, \Delta, \mxsqr, {pp}, {pk}, pk_{\st \srv})$) during the execution of \eval. 
\chal returns $(\vec{g}, \vec{pp}^{\st(\text{Evl})})$ to \adv.

\item $\orev(st_{\st \chal}, reveal^{\st (\text{Evl})}, \vec{g}, \vec{pp}^{\st (\text{Evl})})\rightarrow transcript^{\st (\text{Evl})}$. It is run by $\chal$ which is provided with a corrupt party \srv and a state $st_{\st \chal}$. It receives a query $(reveal^{\st (\text{Evl})}, \vec{g}, \vec{pp}^{\st (\text{Evl})})$, where $Reveal^{\st (\text{Evl})}$ is a string, and pair $( \vec{g}, \vec{pp}^{\st (\text{Evl})})$ is an output pair (previously) returned by an instance of $\eval()$. If the pair  $ (\vec{g}, \vec{pp}^{\st (\text{Evl})})$ was never generated, then the challenger sets $transcript^{\st (\text{Evl})} $ to $\emptyset$. Otherwise, the challenger retrieves from $st_{\st \chal}$ a set of messages that honest parties sent to the corrupt \srv while executing the specific instance of $\eval()$. It appends these messages to $transcript^{\st (\text{Evl})}$ and returns $transcript^{\st (\text{Evl})}$ to \adv.

 %%%%%%
\end{itemize}

\subsubsection{Properties.} We proceed to formally define the main properties of a \vhtlp scheme. Initially, we will focus on the \textit{privacy} property. 
Informally, {privacy} states that the $j$-th solution $m_{\st j}$ to the $j$-th puzzle $\vec{o}_{\st j}$ must remain concealed for a predetermined period from any adversaries equipped with a polynomial number of processors. Specifically, an adversary with a running time of $\delta(\sum\limits_{\st i=1}^{\st j} \bar{\Delta}_{\st i})<\sum\limits_{\st i=1}^{\st j} \bar{\Delta}_{\st i}$ using at most $poly(Max(\bar{\Delta}_{\st 1}$ $,\ldots, \bar{\Delta}_{\st j}))$ parallel processors is unable to discover a message significantly earlier than $\delta(\sum\limits_{\st i=1}^{\st j} \bar{\Delta}_{\st i})$. This requirement is also applicable to the evaluation result. Specifically, an adversary with a running time of $\delta({\Delta})<\Delta$ using a polynomial number of parallel processors cannot learn the linear combination of messages significantly earlier than $\delta({\Delta})$.

To capture privacy, we define an experiment $\mathsf{Exp}_{\st \textnormal{prv}}^{\st\adv}(1^{\st\lambda}, z)$, in Figure \ref{fig:exp-prv}. This experiment involves a challenger \chal which plays honest parties' roles and a pair of adversaries $\adv=(\adv_{\st 1}, \adv_{\st 2})$. This experiment considers an adversary that has access to the oracles: (i) puzzle generation $\opgen()$, (ii) evaluation $\oeval()$, and (iii) transcript revelation $\orev()$. The adversary initially outputs a key pair and a state (line \ref{expr-prv:keyGen}). Next,  \chal generates a pair of secret and public keys (line \ref{expr-prv:keyGen-cln}). Given the  \chal's public key, and having access to $\opgen()$ and  $\oeval()$, $\adv_{\st 1}$ outputs $z$ pairs of messages $[(m_{\st 0, 1}, m_{\st 1, 1}),\ldots, (m_{\st 0, z}, m_{\st 1, z})]$ (line \ref{expr-prv:pick-messages}).

After that, \chal for each pair of messages, provided by $\adv_{\st 1}$, selects a random bit $b_{\st j}$ and generates a puzzle and related public parameter for the message with index $b_{\st j}$ (lines \ref{expr-prv:puzzle-gen-start}--\ref{expr-prv:puzzle-gen-end}). Given these puzzles and the corresponding public parameters, and having access to $\opgen()$ and $\oeval()$, $\adv_{\st 1}$ outputs a state (line \ref{expr-prv:puzzle-gen-state}). Using this state as input,  $\adv_{\st 2}$ guesses the value of bit $b_{\st j}$ for its selected puzzle (line \ref{expr-prv:puzzle-gen-state-guess}). The adversary wins the game (and accordingly, the experiment outputs 1) if its guess is correct (line \ref{expr-prv:chalenger-check-single-puzzle}).

Next, the adversary $\adv_{\st 1}$ and \chal interactively execute $\eval()$ (line \ref{expr-prv:evaluate}). The experiment enables $\adv_{\st 1}$ to learn about the messages exchanged between the honest parties and the corrupt party, by providing $\adv_{\st 1}$ with access oracle  \orev(). Having access to this transcript and the output of $\eval()$,  $\adv_{\st 1}$ outputs a state (line \ref{expr-prv:reveal}).   
Given this state and the output of $\eval()$, adversary $\adv_{\st 2}$ guesses the value of bit $b_{\st j}$ for its chosen puzzle (line \ref{expr-prv:adv-2-guess-again}). The adversary wins the game and the experiment returns 1 if its guess is correct (line \ref{expr-prv:2nd-wining-condition}).

%Informally, it requires that given a puzzle and corresponding public parameters, an adversary cannot infer any information about the plaintext message embedded in a puzzle. 

\begin{definition}[Privacy]\label{def:privacy-vh-tlp} A \vhtlp scheme is privacy-preserving if for any security parameter $\lambda$, any plaintext input solution $m_{\st 1}, \ldots, m_{\st z}$ and coefficient $q_{\st 1}, \ldots, q_{\st z}$ (where each $m_{\st j}$ and $q_{\st j}$ belong to the plaintext universe $U$), any time interval  $\bar{\Delta}_{\st i}$ between two consecutive points in time when two solutions are found, any difficulty parameter $T=\Delta_{_{\st l}}\cdot \mxsqr$ (where $\Delta_{_{\st l}}\in\{ D=\{\Delta_{{\st 1}}, \ldots, \Delta_{{\st z}}\}\cup\{\Delta\}\}$ is the period, polynomial in $\lambda$, within which a message must remain hidden, $\Delta_{\st j}\in D$, $\Delta_{\st j}=\sum\limits^{\st j}_{\st i=1}\bar{\Delta}_{\st i}$, and \mxsqr is a constant in $\lambda$), and any polynomial-size adversary $\adv:=(\adv_{\st 1}, \adv_{\st 2})$, where $\adv_{\st 1}$ runs in time $O(poly( T,\lambda))$ and $\adv_{\st 2}$ runs in time $\delta(T)<T$ using at most $\bar{poly}(T)$ parallel processors, there exists a negligible function $\mu()$ such that for any experiment $\mathsf{Exp}_{\st \textnormal{prv}}^{\st\adv}(1^{\st\lambda}, z)$, presented in Figure \ref{fig:exp-prv}, it holds that: $Pr[\mathsf{Exp}_{\st \textnormal{prv}}^{\st\adv}(1^{\st\lambda}, z)\rightarrow 1]\leq \frac{1}{2}+\mu(\lambda)$. 
%%%%%%%%%%%%%%%%%%%%%%%
%{\small{
%\[
%Pr\left[
%\begin{array}{l}
%\mathsf{Exp}_{\st \textnormal{prv}}^{\st\adv}(1^{\st\lambda}, z)\rightarrow 1
%%
%\end{array}
%\right]\leq \frac{1}{2}+\mu(\lambda)\]
%}}

% !TEX root =main.tex

%%%%%%%%%%%%%%%%%%%%%%%%height=66mm,

\begin{figure}[h]
\centering
{{
\begin{tcolorbox}[colback=white, width=150mm, colframe=black!!black,title={{$\mathsf{Exp}_{\st \textnormal{prv}}^{\st\adv}(1^{\st\lambda}\text{, }z)$}},fonttitle=\bfseries]
$
  \begin{array}{l}
\setcounter{equation}{0}

\lnum\label{expr-prv:keyGen} \adv_{\st 1}(1^{\st \lambda})\rightarrow (\kp_{\st \srv}:=(sk_{\st \srv}, pk_{\st \srv}), state)\\
\lnum\label{expr-prv:keyGen-cln}\csetup(1^{\st \lambda})\rightarrow \kp:=(sk, pk)\\
\lnum \vec{b}\leftarrow \bf{0}\\
\lnum\label{expr-prv:pick-messages}  \adv_{\st 1}(state, pk, \opgen(), \oeval())\rightarrow \vec{m}=[(m_{\st 0, 1}, m_{\st 1, 1}),\ldots, (m_{\st 0, z}, m_{\st 1, z})]\\
\lnum\label{expr-prv:puzzle-gen-start} \textnormal{For }  j=1,\ldots, z \textnormal{ do}: \\
\lnum \quad b_{\st j}\stackrel{\st \$}\leftarrow\{0, 1\}\\

\lnum \quad \vec{b}[j]\leftarrow b_{\st j}\\
\lnum\label{expr-prv:puzzle-gen-end}   \pgen(\vec{m}'=[m_{\st b_{\st 1}},\ldots, m_{\st b_{\st z}}], \kp, pk_{\st \srv}, \vec\Delta, \mxsqr)\rightarrow(\vec{o}, prm)\\
\lnum\label{expr-prv:puzzle-gen-state} \adv_{\st 1}(state, pk, \opgen(), \oeval(), \vec{o},  {pp})\rightarrow   state    \\ 
\lnum\label{expr-prv:puzzle-gen-state-guess} \adv_{\st 2}(\vec{o},  {pp},  state)\rightarrow (b'_{\st j}, j)\\
\lnum\label{expr-prv:chalenger-check-single-puzzle} \textnormal{If } b'_{\st j}=\vec{b}[j], \textnormal{then return } 1\\
\lnum\label{expr-prv:evaluate} \eval(\langle \adv_{\st 1}(\vec{o}, \Delta, \mxsqr, {pp}, {pk}, pk_{\st \srv}, state),  {\prtt}(\Delta, \mxsqr, \kp, prm, q_{\st 1}, pk_{\st \srv}), \ldots, {\prtt}(\Delta,\mxsqr,\\
\hspace{4.2mm} \kp, prm, q_{\st z}, pk_{\st \srv}) \rangle)\rightarrow(\vec{g}, \vec{pp}^{\st (\text{Evl})})\\
\lnum\label{expr-prv:reveal} \adv_{\st 1}(state, \orev(),  \vec{g}, \vec{pp}^{\st (\text{Evl})})\rightarrow   state    \\ 
\lnum\label{expr-prv:adv-2-guess-again} \adv_{\st 2}(\vec{o}, {pp}, \vec{g}, \vec{pp}^{\st (\text{Evl})}, state)\rightarrow (b'_{\st i}, i)\\
\lnum\label{expr-prv:2nd-wining-condition} \textnormal{If } b'_{\st i}=\vec{b}[i], \textnormal{then return } 1,
\textnormal{else return }0
   \end{array} 
$
%\end{mybox}
\end{tcolorbox}
%\end{center}
}
}
\caption{The $\mathsf{Exp}_{\st \textnormal{prv}}^{\st\adv}$ experiment.}
\label{fig:exp-prv}
\end{figure}

\end{definition}

%Informally, \textit{solution validity} requires that it should be infeasible for a probabilistic polynomial time (PPT) adversary to come up with an invalid solution (for a single puzzle of the client or a puzzle encoding a linear combination of the messages of the client) and pass the verification. 

Informally, for a solution to be considered \textit{valid}, it must be infeasible for a probabilistic polynomial time (PPT) adversary to generate an invalid solution that can still pass the verification process. This holds true whether the solution is for an individual puzzle posed by the client or for a puzzle encoding a linear combination of the client's messages. 
To capture solution validity, we define an experiment $\mathsf{Exp}_{\st \textnormal{val}}^{\st\adv}(1^{\st\lambda}, z)$, presented in Figure \ref{fig:exp-vld}. It involves  \chal which plays honest parties' roles and an adversary $\adv$.

The adversary initially outputs a key pair and a state (line \ref{exp-val:pick-adv-output-key-xxxx}). Next, \chal generates a pair of secret
and public keys (line \ref{expr-val:genKey-cln}). Given the state and the public key, as well as having access to $\opgen()$ and  $\oeval()$, $\adv$ outputs a message vector $\vec{m}= [m_{\st  1},\ldots, m_{\st  z}]$  (line \ref{expr-val:pick-messages}). Next, \chal generates a puzzle for each message that \adv selected (line \ref{expr-vld:puzzle-gen-end}).  
The experiment allows \adv and \chal to interactively execute $\eval()$ (line \ref{expr-val:evaluate}). 
Given the output of $\eval()$ which is itself a puzzle, \chal solves the puzzle and outputs the solution and associated proof (line \ref{expr-val:solvPul-2}).

The experiment lets $\adv$ learn about the messages exchanged between the corrupt party and \chal (acting as honest parties) during the execution of $\eval()$, by providing $\adv$ with access to  \orev(). Given this transcript, the output of $\eval()$, the plaintext solution, and the proof,  $\adv$ outputs a solution and proof  (line \ref{expr-val:reveal}).  
After that, \chal verifies the solution and proof provided by \adv. The experiment outputs 1 and \adv wins if \adv convinces \chal to accept an invalid evaluation result (line \ref{expr-val:ver-2}). 
\chal solves every puzzle of the client (line \ref{expr-val:end-solvPul}). Given the puzzles and the solutions, and having access to oracles $\opgen()$ and $\oeval()$, \adv provides a solution and proof for its selected puzzle (line \ref{expr-val:guess-1}). Next, \chal checks the validity of the solution and proof provided by \adv. The experiment outputs 1 (and \adv wins) if \adv persuades \chal to accept an invalid message for a puzzle of the client (line \ref{expr-val:ver-1}).

\begin{definition}[Solution-Validity]\label{def:validity-vh-tlp} A \vhtlp scheme preserves a solution validity,  if for any security parameter $\lambda$, any plaintext input solution $m_{\st 1}, \ldots, m_{\st z}$ and coefficient $q_{\st 1}, \ldots, q_{\st z}$ (where each $m_{\st j}$ and $q_{\st j}$ belong to the plaintext universe $U$), any time interval  $\bar{\Delta}_{\st i}$ between two consecutive points in time when two solutions are found, any difficulty parameter $T=\Delta_{_{\st l}}\cdot \mxsqr$ (where $\Delta_{_{\st l}}\in\{ D=\{\Delta_{{\st 1}}, \ldots, \Delta_{{\st z}}\}\cup\{\Delta\}\}$ is the period, polynomial in $\lambda$, within which a message must remain hidden, $\Delta_{\st j}\in D$, $\Delta_{\st j}=\sum\limits^{\st j}_{\st i=1}\bar{\Delta}_{\st i}$, and \mxsqr is a constant in $\lambda$), and any polynomial-size adversary $\adv$ that runs in time $O(poly( T,\lambda))$, there is a negligible function $\mu()$ such that for any experiment $\mathsf{Exp}_{\st \textnormal{val}}^{\st\adv}(1^{\st\lambda}, z)$ (presented in Figure \ref{fig:exp-vld}), it holds that: $Pr[\mathsf{Exp}_{\st \textnormal{val}}^{\st\adv}(1^{\st\lambda}, z)\rightarrow 1]\leq \mu(\lambda)$. 

% if for any security parameter $\lambda$, any difficulty parameter $T=\Delta_{_{\st l}}\cdot \mxsqr$, and any polynomial-size adversary $\adv$ that runs in time $O(poly( T,\lambda))$, there exists a negligible function $\mu(.)$ such that for any experiment $\mathsf{Exp}_{\st \textnormal{val}}^{\st\adv}(1^{\st\lambda}, n, t)$:

%%%%%%%%

%%%%%%%%%%%%%%%%%%%%%%%

% !TEX root =main.tex

%%%%%%%%%%%%%%%%%%%%%%%%height=66mm,

\begin{figure}[h]
\centering
{{
%\begin{mybox}[colback=white,  width=130mm,  left=-1mm, drop fuzzy shadow southwest]{$\mathsf{Exp}_{\st \textnormal{val}}^{\st\adv}(1^{\st\lambda}\text{, }n  \text{, } \tl \text{, } t)$}
\begin{tcolorbox}[colback=white!,colframe=black!!black,title={{$\mathsf{Exp}_{\st \textnormal{val}}^{\st\adv}(1^{\st\lambda}\text{, }z)$}},fonttitle=\bfseries]
$
  \begin{array}{l}
\setcounter{equation}{0}
\lnum\label{exp-val:pick-adv-output-key-xxxx} \adv_{\st 1}(1^{\st \lambda})\rightarrow (\kp_{\st \srv}:=(sk_{\st \srv}, pk_{\st \srv}), state)\\

%\lnum \ssetup(1^{\st \lambda}, \tl, t)\rightarrow \kp_{\st \srv}:=(sk_{\st \srv}, pk_{\st \srv})\\
%

%
%
\lnum\label{expr-val:genKey-cln} \csetup(1^{\st \lambda})\rightarrow \kp:=(sk, pk)\\
%
%set variables
%
%
\lnum\label{expr-val:pick-messages}  \adv(state, pk, \opgen(), \oeval())\rightarrow \vec{m}=[m_{\st 1}, \ldots, m_{\st z}]\\
 %
% \mathsf{For}\ l=1,\ldots, |\corupt|\ \mathsf{do}: \\
 %
%
\lnum\label{expr-vld:puzzle-gen-end}   \pgen(\vec{m}, \kp, pk_{\st \srv}, \vec\Delta, \mxsqr)\rightarrow(\vec{o}, prm)\\

\lnum \label{expr-val:evaluate} \eval(\langle \adv_{\st 1}(\vec{o}, \Delta, \mxsqr, {pp}, {pk}, pk_{\st \srv}, state),  {\prtt}(\Delta, \mxsqr, \kp, prm, q_{\st 1}, pk_{\st \srv}), \ldots, {\prtt}(\Delta,\mxsqr,\\
\hspace{3.5mm} \kp, prm, q_{\st z}, pk_{\st \srv}) \rangle)\rightarrow(\vec{g}, \vec{pp}^{\st (\text{Evl})})\\

\lnum\label{expr-val:solvPul-2} \solv(., ., \vec{g}, \vec{pp}^{\st (\text{Evl})}, pk, pk_{\st \srv}, \ep)\rightarrow(\vec{m}, \vec\zeta)\\

\lnum\label{expr-val:reveal} \adv(state, pk_{\st \srv}, \orev(), \vec{m}, \vec\zeta,  \vec{o}, \vec{g}, pp, \vec{pp}^{\st (\text{Evl})})  \rightarrow  (m', \zeta') \\

\lnum \label{expr-val:ver-2}\textnormal{If } \ver(m', \zeta', .,., \vec{g},  \vec{pp}^{\st (\text{Evl})}, pk_{\st \srv}, \ep)\rightarrow 1, \text{s.t.}\   m'\notin \mathcal{L}_{\ep}, \textnormal{then return } 1 \\

\lnum \label{expr-val:end-solvPul} \solv(\vec{o}, pp, ., ., pk, pk_{\st \srv}, \scp)\rightarrow(\vec{\hat{m}}, \vec{\hat{\zeta}})\\

\lnum\label{expr-val:guess-1}  \adv(state, pk_{\st \srv}, \opgen(), \oeval(), \vec{m}, \vec{\zeta},  \vec{\hat{m}}, \vec{\hat\zeta},  \vec{o},  
{pp}, \vec{pp}^{\st (\text{Evl})})  \rightarrow (m'_{\st j}, \zeta'_{\st j}, j)\\
\lnum\label{expr-val:ver-1}  \textnormal{If } %(K_{\st \advb_{\st \prt}} \notin K) \wedge
\ver(m'_{\st j}, \zeta'_{\st j}, \vec{o}_{\st j}, {pp}, .,., pk_{\st \srv}, \scp)\rightarrow 1, \text{s.t.}\   m'_{\st j}\notin \mathcal{L}_{\scp},   \textnormal{then return } 1; \textnormal{otherwise, return } 0\\
   \end{array} 
$
\end{tcolorbox}
}}
\caption{The $\mathsf{Exp}_{\st \textnormal{val}}^{\st\adv}(1^{\st\lambda}, z)$ experiment.}
\label{fig:exp-vld}
\end{figure}
%%%%%%%%%%%%%%%%%%%%%%%%%%%%%%%%%

\end{definition}

%%%%%%%%%%%%%%%%%%%% Correctness and Compactness %%%%%%%%%%%%%%%%%%%%%%

%The following formal definitions will partially overlap with those provided in \cite{tempora-fusion}; however, they also have differences, particularly in terms of the inputs and outputs of certain algorithms and the number of parties involved. Furthermore, for the first time, we will provide a formal definition of the multi-instance property (in Definition \ref{def:efficiency-vh-tlp}). 

Informally, \textit{completeness} examines the behavior of algorithms when honest parties are involved. It asserts that $\solv()$ will always retrieve (1) a correct solution for a puzzle related to a linear combination and (2) a set of correct solutions for $z$ puzzles. It also asserts that $\ver()$ will always return 1 when given an honestly generated solution. 
%
%Informally, completeness considers the behavior of the algorithms in the presence of honest parties. It asserts that a correct solution will always be retrieved by  $\solv()$ and $\ver()$ will always return 1, given an honestly generated solution.  
%
Because algorithm $\solv()$ is used for two cases:  (a) to find a solution for a single puzzle generated by a client and (b) to discover a solution for a puzzle that encodes the linear evaluation of messages, we will state the correctness of this algorithm separately for each case. Similarly, we will state the correctness of $\ver()$ for each scenario. 
In the following definitions, because the experiments’ descriptions are relatively short, we integrate the experiment $\mathsf{Exp}$ into the probability notation. Thus, we use the notation $\Pr\left[\begin{array}{c}\mathsf{Exp}\\
\hline\mathsf{Cond}\\\end{array}\right]$, where $\mathsf{Exp}$ is an experiment, and $\mathsf{Cond}$ represents the set of the conditions under which the property must hold.

\begin{definition}[Completeness]\label{def:completeness-vh-tlp} A \vhtlp  is correct if for any security parameter $\lambda$, any plaintext input message $m_{\st 1}, \ldots, m_{\st z}$ and coefficient $q_{\st 1}, \ldots, q_{\st z}$ (where each $m_{\st j}$ and $q_{\st j}$ belong to the plaintext universe $U$), any difficulty parameter $T=\Delta_{_{\st l}}\cdot \mxsqr$ (where $\Delta_{_{\st l}}$ is the period, polynomial in $\lambda$, within which a message  must remain hidden and \mxsqr is a constant in $\lambda$), the following conditions are satisfied.

\begin{enumerate}

\item\label{completeness-of-single-puzzle} $\solv(\vec{o}, pp, ., ., pk, pk_{\st \srv}, \cmd)$, that takes a vector  $\vec{o}$ of puzzles  each encoding a plaintext solution $m_{\st j}$ and its related parameters, always returns a vector $\vec{m}=[{m}_{\st 1}, \ldots, {m}_{\st z}]$ of solutions:
{\small{
\[
Pr\left[
\begin{array}{l}
\ssetup(1^{\st \lambda})\rightarrow \kp_{\st \srv}\\
\csetup(1^{\st \lambda})\rightarrow \kp\\
\pgen(\vec{m}, \kp, pk_{\st \srv}, \vec\Delta, \mxsqr)\rightarrow(\vec{o}, prm)\\
\hline
\solv(\vec{o}, pp, ., ., pk, pk_{\st \srv}, \cmd)\rightarrow(\vec{m}, .)\\
\end{array}
\right]=1\]
}}

where $pp\in prm$ and  $\cmd=\scp$.

\item\label{completeness-of-linear-comb-puzzle} $\solv(., .,\vec{g},  \vec{pp}^{\st (\text{Evl})},  pk, pk_{\st \srv}, \cmd)$, that takes (i) a puzzle $\vec{g}$ encoding linear combination $\sum\limits_{\st j=1}^{\st z} q_{\st j}\cdot m_{\st j}$ of $z$ messages, where each  $m_{_{\st j}}$ is a plaintext message and $q_{\st j}$ is a coefficient and (ii) their related parameters, always returns $\sum\limits_{\st j=1}^{\st z} q_{\st j}\cdot m_{\st j}$:

{\small{
\[
Pr\left[
\begin{array}{l}
\ssetup(1^{\st \lambda})\rightarrow \kp_{\st \srv}\\
\csetup(1^{\st \lambda})\rightarrow \kp\\
\pgen(\vec{m}, \kp, pk_{\st \srv},  \vec\Delta, \mxsqr)\rightarrow(\vec{o}, prm)\\
\eval\big(\langle \srv(\vec{o}, \Delta, \mxsqr, {pp}, {pk}, pk_{\st \srv}), \prtt(\Delta, \mxsqr, \kp, prm,\\ q_{\st 1}, pk_{\st \srv}), \ldots, \prtt(\Delta, \mxsqr, \kp, prm, q_{\st z}, pk_{\st \srv}) \rangle\big)\rightarrow(\vec{g}, \vec{pp}^{\st (\text{Evl})})\\
\hline
%\solv(., .,\vec{g}, \vec{prm},  \cmd)\rightarrow(q_{\st 1}\cdot m_{\st 1}+\ldots+q_{\st n}\cdot m_{\st n},.)\\
\solv(., .,\vec{g}, \vec{pp}^{\st (\text{Evl})},  pk, pk_{\st \srv}, \cmd)\rightarrow(\sum\limits_{\st j=1}^{\st z} q_{\st j}\cdot m_{\st j},.)\\
% q_{\st 1}\cdot m_{\st 1}+...+q_{\st n}\cdot m_{\st n}
%
\end{array}
\right]=1\]
}}

where  
$\vec{o}=[\vec{o}_{\st {1}},\dots, \vec{o}_{\st {z}}]$, 
%
%$prm_{ \st \prt}\in  \kp_{ \st \prt}$, 
%
%
$pk\in  \kp$, 
$pk_{\st \srv}\in \kp_{\st \srv}$, 
and 
$\cmd=\ep$.

\item\label{correctness-verify-single-puzzle}

$\ver(m, \zeta, \vec{o}_{\st j}, pp, ., ., pk_{\st \srv}, \cmd)$, that takes a solution for a puzzle, related proof, and public parameters, always returns 1:

{\small{
\[
Pr\left[
\begin{array}{l}
\ssetup(1^{\st \lambda})\rightarrow \kp_{\st \srv}\\
\csetup(1^{\st \lambda})\rightarrow \kp\\

\pgen(\vec{m}, \kp, pk_{\st \srv},  \vec\Delta, \mxsqr)\rightarrow(\vec{o}, prm)\\
%
%\pgen(m_{\st \prt}, \kp_{\st \prt}, pk_{\st \srv}, \Delta_{\st \prt}, \mxsqr)\rightarrow(\vec{o}_{\st \prt}, prm_{\st \prt})\\
%
\solv(\vec{o}, pp,., ., pk, pk_{\st \srv}, \cmd)\rightarrow (\vec{m}, \vec{\zeta})\\
\hline
\ver(m, \zeta, \vec{o}_{\st j}, pp, ., ., pk_{\st \srv}, \cmd)\rightarrow 1
\end{array}
\right]=1\]
}}

where $\cmd=\scp$.

\item\label{completeness-verification-for-HLC} $\ver(m, \zeta, .,., \vec{g}, \vec{pp}^{\st (\text{Evl})}, pk_{\st \srv}, \cmd)$, that takes a solution for a puzzle that encodes a linear combination of $z$ messages, related proof, and public parameters, always returns 1:

{\small{
\[
Pr\left[
\begin{array}{l}
\ssetup(1^{\st \lambda})\rightarrow \kp_{\st \srv}\\
%
%\mathsf{For}\; \prt=1,\ldots, n\;\mathsf{do}: \\
%
\csetup(1^{\st \lambda})\rightarrow \kp\\

\pgen(\vec{m}, \kp, pk_{\st \srv},  \vec\Delta, \mxsqr)\rightarrow(\vec{o}, prm)\\

\eval(\langle \srv(\vec{o}, \Delta,   \mxsqr, {pp}, {pk}, pk_{\st
 \srv}), c(\Delta, \mxsqr, \kp, prm, q_{\st 1},\\ pk_{\st \srv}), \ldots, c(\Delta,  \mxsqr,  \kp, prm, q_{\st z},  pk_{\st \srv}) \rangle)\rightarrow(\vec{g}, \vec{pp}^{\st(\text{Evl})})\\

%
%\eval(\langle \srv(\vec{o}, \Delta, \mxsqr, \vec{pp}, \vec{pk}, pk_{\st \srv}), \prtt_{\st 1}(\Delta, \mxsqr, \kp_{\st {\st 1}}, prm_{\st {\st 1}},\\ q_{\st 1}, pk_{\st \srv}), \ldots, \prtt_{\st n}(\Delta, \mxsqr, \kp_{\st n}, prm_{{\st n}}, q_{\st n}, pk_{\st \srv}) \rangle)\rightarrow(\vec{g}, \vec{pp}^{\st (\text{Evl})})\\
%
\solv(., ., \vec{g}, \vec{pp}^{\st (\text{Evl})}, pk, pk_{\st \srv}, \cmd)\rightarrow(\vec{m}, \vec{\zeta})\\
\hline
\ver(m, \zeta, .,., \vec{g}, \vec{pp}^{\st (\text{Evl})}, pk_{\st \srv}, \cmd)\rightarrow 1
\end{array}
\right]=1\]
}}

where $\cmd=\ep$.

\end{enumerate}

\end{definition}

Now, we move on to the \textit{efficiency} property, and along the way, we formally define the multiple instance notion for the first time. Intuitively, {efficiency} states that (1) the running time of dealing with multiple instances of a puzzle through multi-instance TLP is faster than via traditional TLPs,  (2) $\solv()$ returns a solution in polynomial time, i.e., polynomial in the time parameter $T$, (3) $\pgen()$ generates a puzzle faster than solving it, with a running time of at most logarithmic in $T$, and (4) the running time of $\eval()$ is faster than solving any puzzle involved in the evaluation, that should be at most logarithmic in $T$ \cite{dujmovic2023time}.

\begin{definition}[Efficiency]\label{def:efficiency-vh-tlp} A \vhtlp is efficient if for any security parameter $\lambda$, any plaintext input message $m_{\st 1}, \ldots, m_{\st z}$ and coefficient $q_{\st 1}, \ldots, q_{\st z}$ (where each $m_{\st j}$ and $q_{\st j}$ belong to the plaintext universe $U$), any time interval  $\bar{\Delta}_{\st i}$ between two consecutive points in time when two solutions are found, any difficulty parameter $T=\Delta_{_{\st l}}\cdot \mxsqr$ (where $\Delta_{_{\st l}}\in\{ D=\{\Delta_{{\st 1}}, \ldots, \Delta_{{\st z}}\}\cup\{\Delta\}\}$ is the period, polynomial in $\lambda$, within which a message must remain hidden, $\Delta_{\st j}\in D$, $\Delta_{\st j}=\sum\limits^{\st j}_{\st i=1}\bar{\Delta}_{\st i}$, and \mxsqr is a constant in $\lambda$), the following conditions are satisfied.

\begin{enumerate}

% define the run-time of size puzzle: Let \Psi_{\st single}=O(Psi_{\st single}) be the time complexity of solving, \Psi_{\st trad}(z)=O(z\cot \Psi_{\st single}), where 

\item\label{efficiency-multi-instance}\underline{\textit{Multi-instance}}: Let $\Psi_{\st single}$ be the time complexity to solve a single traditional time-lock puzzle, e.g., \cite{Rivest:1996:TPT:888615}. It holds that the time complexity of a solving algorithm in traditional time-lock puzzles to deal with $z$ puzzle is $\Psi_{\st trad}(z)= z\cdot \Psi_{\st single}$. Let $\Psi_{\st multi}(z)$ be the time complexity to solve $z$ puzzles using a multi-instance algorithm. The solving algorithm of a multi-instance scheme is efficient if:  $\Psi_{\st trad}(z)-\Psi_{\st multi}(z)\geq {poly}(z,\mxsqr,  \Delta_{\st 1},\ldots, \Delta_{\st z})$, for a fixed polynomial  ${poly}$.

%The runtime of $\solv(\vec{o}, pp, \vec{g}, \vec{pp}^{\st (\text{Evl})}, pk_{\st \srv}, \cmd)$

\item\label{efficiency-Polynomial-time-solving}\underline{\textit{Polynomial-time solving}}: The running time of $\solv(\vec{o}, pp, \vec{g}, \vec{pp}^{\st (\text{Evl})}, pk, pk_{\st \srv}, \cmd)$ is upper bounded by $ \hat{poly}(z, $ $T_{\st max}, \lambda)$, where $\hat{poly}()$ is a fixed polynomial and $T_{\st max}=Max(\Delta_{{\st 1}}, \ldots, \Delta_{{\st z}}, \Delta)\cdot \mxsqr$.

\item\label{efficiency-requirement-genpuz}\underline{\textit{Faster puzzle generation}}: The running time of  $\pgen(\vec{m}, \kp, pk_{\st \srv}, \vec\Delta, \mxsqr)$ is upper bounded by $poly'(z,$ $ \log(T_{\st max}), \lambda)$, where $poly'()$ is a fixed polynomial. 

\item\label{efficiency-requirement-Faster-puzzles-evaluation}\underline{\textit{Faster puzzles evaluation}}: The running time of  $\eval(\langle \srv(\vec{o}, \Delta,   \mxsqr, {pp}, {pk}, pk_{\st
 \srv}), c(\Delta, \mxsqr, \kp, prm,$ $ q_{\st 1}, pk_{\st \srv}), \ldots, c(\Delta,  \mxsqr,  \kp,$ $ prm, q_{\st z},  pk_{\st \srv}) \rangle)\rightarrow(\vec{g}, \vec{pp}^{\st(\text{Evl})})$ is upper bounded by $poly''\Big(\log(T), \lambda, \func \big((q_{\st 1}, m_{\st 1}),$ $\ldots,(q_{\st z}, m_{\st z}) \big)\Big)$, where $poly''()$ is a fixed polynomial and $\func()$ is the functionality that computes a linear combination of messages (as stated in Relation \ref{equ::PLC}). 

\end{enumerate}

\end{definition}

Next, we consider  \textit{compactness} which requires that the size of evaluated ciphertexts is independent of the complexity of the evaluation function \func.

\begin{definition}[Compactness]\label{def:compactness-vh-tlp}  A \vhtlp is compact if for any security parameter $\lambda$, any difficulty parameter $T=\Delta_{_{\st l}}\cdot \mxsqr$, any plaintext input solution $m_{\st 1}, \ldots, m_{\st z}$ and coefficient $q_{\st 1}, \ldots, q_{\st z}$ (where each $m_{\st j}$ and $q_{\st j}$ belong to the plaintext universe $U$), any time interval  $\bar{\Delta}_{\st i}$ between two consecutive points in time when two solutions are found, any difficulty parameter $T=\Delta_{_{\st l}}\cdot \mxsqr$ (where $\Delta_{_{\st l}}\in\{ D=\{\Delta_{{\st 1}}, \ldots, \Delta_{{\st n}}\}\cup\{\Delta\}\}$ is the period, polynomial in $\lambda$, within which a message $m$ must remain hidden, $\Delta_{\st j}\in D$, $\Delta_{\st j}=\sum\limits^{\st j}_{\st i=1}\bar{\Delta}_{\st i}$, and \mxsqr is a constant in $\lambda$), always $\eval()$ outputs a puzzle (representation) whose bit-size is independent of \func's complexity $O(\func)$:

{\small{
\[
Pr\left[
\begin{array}{l}
\ssetup(1^{\st \lambda})\rightarrow \kp_{\st \srv}\\
\csetup(1^{\st \lambda})\rightarrow \kp\\
\pgen(\vec{m}, \kp, pk, \vec\Delta, \mxsqr)\rightarrow(\vec{o}, prm)\\
\hline
%\solv(., .,\vec{g}, \vec{prm},  \cmd)\rightarrow(q_{\st 1}\cdot m_{\st 1}+\ldots+q_{\st n}\cdot m_{\st n},.)\\

\eval(\langle \srv(\vec{o}, \Delta,   \mxsqr, {pp}, {pk}, pk_{\st
 \srv}), c(\Delta, \mxsqr, \kp, prm,\\ q_{\st 1}, pk_{\st \srv}), \ldots, c(\Delta,  \mxsqr,  \kp,$ $ prm, q_{\st z},  pk_{\st \srv}) \rangle)\rightarrow(\vec{g}, \vec{pp}^{\st(\text{Evl})})\\
\text{s.t.} \\
 ||\vec{g}||= poly\Big(\lambda, ||\func \big((q_{\st 1}, m_{\st 1}),\ldots,(q_{\st z}, m_{\st z}) \big)||\Big)\\
% q_{\st 1}\cdot m_{\st 1}+...+q_{\st n}\cdot m_{\st n}
%
\end{array}
\right]=1\]
}}

\end{definition}

%%%%%%%%%%%%%%%%%%%%%%%%%

\begin{definition}[Security]\label{def:sec-def-vh-tlp}  A \vhtlp is secure if it satisfies privacy and solution validity as outlined in Definitions \ref{def:privacy-vh-tlp} and \ref{def:validity-vh-tlp}. 

\end{definition}

%!TEX root = main.tex

%!TEX root = main.tex

\section{Construction of \vhtlp: Multi-instance Verifiable Partially Homomorphic TLP  (\mhtlp)}\label{sec::Multi-Instance-TF}

\subsection{An Overview} We develop \mhtlp on top of \tf, discussed in Section \ref{sec::prelimi-tempora-fusion}. Inspired by previous multi-instance TLPs proposed in \cite{Abadi-C-TLP}, we use the chaining technique, to chain different puzzles, such that when \srv solves one puzzle it will obtain enough information to work on the next puzzle.  Therefore, it can solve puzzles \textit{sequentially} rather than dealing with all of them at once. However, we develop a new technique for chaining puzzles to support homomorphic linear combinations as well. 

In the multi-instance TLPs introduced in  \cite{Abadi-C-TLP}, during the puzzle generation phase (and chaining process) the base $r_{\st j+1}$ for the $(j+1)$-th puzzle is concatenated and concealed with the  $j$-th solution, $m_{\st j} $. Consequently,  the $j$-th puzzle is created on solution $m_{\st j}||r_{\st j+1}$.  The issue with this approach is that it cannot support homomorphic operations on the puzzle's actual solution $m_{\st j}$, because the solution is now $m_{\st j}||r_{\st j+1}$. 

To address this issue, we take a different approach. In short, we require the client to derive the next puzzle's base from the current puzzle's master key. 
We briefly explain how it works. Recall that in \tf, each $j$-th puzzle is associated with a master key $mk_{\st j}$, found when the puzzle is solved. Using our new technique, during the puzzle generation, when $j=1$, (similar to \tf) client \prtt picks a random base $r_{\st j}$, sets $a_{\st j}=2^{\st T_{\st j}}\bmod \phi(N)$, and then sets master key $mk_{\st j}$ as $mk_{\st j}= r^{\st a_{{\st j}}}_{\st j}\bmod N$, with a difference being $T_{\st j}$ now determines how long a solution $m_{\st j}$ must remain concealed after the previous solution $m_{\st j-1}$ is discovered. Next, as in \tf, \prtt derives some pseudorandom values from $ mk_{\st j}$ and encrypts the $y$-coordinates of a polynomial representing $m_{\st j}$.

However, when $j>1$, client \prtt derives a fresh base $r_{\st j} $ from the \textit{previous master key}  as $r_{\st j}=\prf(j||0, mk_{\st j-1})$. As before, it  sets $a_{\st j}=2^{\st T_{\st j}}\bmod \phi(N)$ and then sets the current master key $mk_{\st j}$ as $mk_{\st j}= r^{\st a_{{\st j}}}_{\st j}\bmod N$.  It derives some pseudorandom values from $ mk_{\st j}$, and encrypts the $y$-coordinates of a polynomial representing $m_{\st j}$. It repeats this process until it creates a puzzle for the last solution $m_{\st z}$.  The client hides all bases except the first one $r_{\st 1}$ which is made public.

Clearly, this new approach does not require \prtt to modify each solution $m_{\st j}$. Thus, we can now use the techniques developed for \tf in \cite{tempora-fusion} to allow \srv to perform homomorphic linear combinations on the puzzles. 
In this setting, \srv has to solve puzzles in ascending order, starting from the first puzzle, to find a solution and a base for the next puzzle. Given this base, it starts repeated modular squaring to find the next solution and base until it finds the last puzzle's solution.

 We proceed to briefly explain how  \mhtlp operates. We highlight that there are overlaps between the design of \mhtlp and \tf. However, there are major differences as well. We will shortly discuss these differences.
 
 \begin{enumerate}

 \item \textit{Server-Side Setup.} Initially, \srv generates and publishes a set of public parameters, without requiring it to generate any secret parameter. The public parameters include a sufficiently large prime number $p$ and a vector $\vec{x}=[x_{\st 1}, x_{\st 2}, x_{\st 3}]$ of non-zero elements. The elements in $\vec{x}$ can be considered as $x$-coordinates and will help each client to represent its message as a polynomial in point-value form.

\item \textit{Client-side Key Generation.}  A client $\prtt$ independently generates a set of public and private keys, which includes an RSA-public modulus $N$. The client publishes its public key.

\item\textit{Client-side Puzzle Generation.} Client $\prtt$ uses its secret key and time parameter $T_{\st j}$   that determines how long a solution $m_{\st j}$ must remain concealed after the previous solution $m_{\st j-1}$ is discovered, to generate a set of master keys $mk_{\st 1},\ldots, mk_{\st z}$, using the chaining technique described above. $\prtt$ uses each master key $mk_{\st j}$, that is associated with the $j$-th message, to derive pseudorandom values $(z_{\st i, j}, w_{\st i, j})$ for each element $x_{\st i}$ of $\vec{x}$. The client represents its secret solution $m_{\st j}$ as (a constant term of) a polynomial and then represents the polynomial in the point-value form. This results in a vector of $y$-coordinates: $[\pi_{\st 1, j},\pi_{\st 2, j}, \pi_{\st 3, j}]$. It encrypts each $y$-coordinate using the related pseudorandom values:  $o_{\st i, j} = w_{\st i, j}\cdot(\pi_{\st i, j} +  z_{\st i,j}) \bmod \prm$. Each vector of encrypted $y$-coordinates $\vec{o}_{\st j}=[o_{\st 1, j}, o_{\st 2, j}, o_{\st 3, j}]$ represents a puzzle for $j$-th solution $m_{\st j}$. Note that polynomial representation is used to allow a homomorphic linear combination of solutions and efficient verification of the computation. $\prtt$ publishes these puzzles along with a set of public parameters. Given these public parameters, by solving the related puzzle, anyone can sequentially find each master key $mk_{\st j}$ and remove the blinding factors to extract the corresponding solution. To support the public verifiability of a solution that will be found by \srv, $\prtt$ commits $com_{\st j} = \comcom(m_{\st j}, mk_{\st j})$ to each message $m_{\st j}$ using the related $mk_{\st j}$ as the commitment's randomness. It publishes each $com_{\st j}$. Anyone who solves a puzzle encoding solution $m_{\st j}$ can find $mk_{\st j}$ and prove $(m_{\st j}, mk_{\st j})$ matches $com_{\st j}$.

 \item\textit{Linear Combination.} In this phase, similar to \tf,  client $\prtt$ produces certain messages that allow server \srv to find a linear combination of its plaintext
solutions after a certain time. $\prtt$, using its secret key and time parameter $Y$ (which determines how long the result of the computation must remain private), generates a {temporary} master key $tk$ and public parameters ${pp}^{\st(\text{Evl})}$.  Anyone who solves a puzzle for the computation will be able to find $tk$, after a certain time. $\prtt$ uses $tk$ to derive new pseudorandom values $(z'_{\st i}, w'_{\st i})$ for each element $x_{\st i}$ of $\vec{x}$.  
It also uses its secret key to {regenerate} the pseudorandom values $(z_{\st i, j}, w_{\st i, j})$ used to encrypt each $y$-coordinate related to its solution $m_{\st j}$.  The client selects a single random root: $root$. It commits to this root, using $tk$ as the randomness:  $com'=\comcom(\rt, tk)$.  $\prtt$ represents $root$ as a polynomial in point-value form. This yields a vector of $y$-coordinates: $[\gamma_{\st 1},\gamma_{\st 2}, \gamma_{\st 3}]$. 
%
%It encrypts each $y$-coordinate using the new pseudorandom values as:  $\gamma'_{\st i}=\gamma_{\st i}\cdot w'_{\st i} \bmod \prm$ and sends the encrypted $y$-coordinates to the rest of the clients.  
%
The client will insert this random root into its outsourced puzzle, to give a certain structure to the result of the homomorphic linear combination, which will ultimately allow future verification.

For each solution $m_{\st 2},\ldots, m_{\st z}$ it possesses,  the client selects a fresh key $f_{\st {\st l}}$. This key is used by the client to generate zero-sum pseudorandom values for (each $y$-coordinate related to) each solution. These values are generated such that if are summed, they will cancel out each other. They are used to ensure that \srv learns only the linear combination of all the solutions that belong to $\prtt$, and \srv cannot learn the combination of a subset of these solutions. 
$\prtt$ for each $y$-coordinate of each $j$-th puzzle participates in an instance of $\ole^{\st +}$ with \srv. At a high level, $\prtt$'s input includes the $y$-coordinate $\gamma_{\st i}$ of the random root, the new pseudorandom values $(z'_{\st i}, w'_{\st i})$, the inverse of the old pseudorandom values $(w_{\st i, j})^{\st -1}$, its coefficient $q_{\st j}$, and the pseudorandom values $y_{\st i,j}$ derived from $f_{\st {\st j}}$. The input of \srv is the client's $j$-th puzzle. Each instance of $\ole^{\st +}$ returns to \srv an encrypted $y$-coordinate. At the end of this process, $\prtt$ publishes its public parameters ${pp}^{\st(\text{Evl})}$.

Consequently,  \srv sums the outputs of $\ole^{\st +}$ component-wise, resulting in  a vector of encrypted $y$-coordinates, $\vec{g}=[g_{\st 1}, g_{\st 2}, g_{\st 3}]$. Then, \srv publishes $\vec{g}$. Each $g_{\st i}$ has a layer of blinding factor, inserted by $\prtt$ during the invocation of $\ole^{\st +}$. Anyone who solves the related puzzle can regenerate the temporary master key $tk$ and remove the blinding factor.

 \item \textit{Solving a Puzzle}. To find the solution encoding the result of the computation,  \srv operates as follows. 
Given public parameters previously released by $\prtt$, \srv solves the related puzzle to find the temporary master key $tk$. Using $tk$,  \srv removes the blinding factors from each $g_{\st i}$, that results in three plaintext $y$-coordinates. 
\srv uses these $y$-coordinates and the $x$-coordinates in $\vec{x}$ to interpolate a polynomial $\bm{\theta}(x)$. It finds the roots of $\bm{\theta}(x)$. It publishes the root and $tk$ that match the published commitment $com'$. It also retrieves from $\bm{\theta}(x)$ and publishes the linear combination of $\prtt$'s solutions.

To find a solution for a single puzzle of $\prtt$,  \srv operates as follows. 
Given public parameters and each puzzle,  \srv after a certain time retrieves a master key $mk_{\st j}$. Using $mk_{\st j}$, \srv removes the blinding factors from each  
 $o_{\st i,j}$, that yields a vector of $y$-coordinates. It uses them and $x$-coordinates in $\vec{x}$ to interpolate a polynomial $\bm{\pi}_{\st j}(x)$ and retrieves message $m_{\st j}$ from polynomial $\bm{\pi}_{\st j}(x)$. It publishes $m_{\st j}$ and $mk_{\st j}$ that match the published commitment  $com_{\st j}$. Moreover, solving the $j$-th puzzle provides \srv with enough information to begin working on the next puzzle.

 \item \textit{Verification}. To verify a solution related to the linear combination of solutions, the verifier checks if the root $\rt$ and the temporary master key $tk$ provided by \srv match the commitment $com'$. It also removes the blinding factors from $[g_{\st 1}, g_{\st 2}, g_{\st 3}]$, yielding three $y$-coordinates. It uses these $y$-coordinates to interpolate a polynomial. From this polynomial, the verifier retrives its constant term (which is the computation result). It checks whether $\rt$ is a root of this polynomial and the computation result matches the one \srv published. If all the checks pass, it accepts the result. To verify a solution $m_{\st j}$ related to a single puzzle of $\prtt$, the verifier checks if $m_{\st j}$ and the temporary master key $mk_{\st j}$, provided by \srv, match the commitment $com_{\st j}$.

 \end{enumerate}

Beyond offering the multi-instance feature, \mhtlp differs from \tf in two main ways. Firstly, since there is only one client, only one random root is selected and inserted into the outsourced polynomials (rather than inserting $|\idx|$ roots in \tf). During the verification, a verifier checks whether the evaluation of the resulting polynomial at this root is zero.  Secondly, in \mhtlp, using only three $x$-coordinates $\vec{x}=[x_{\st 1}, x_{\st 2}, x_{\st 3}]$ will suffice, because the outsourced puzzle's degree is one, and when it is multiplied by a polynomial representing a random root (while computing the linear combination), the resulting polynomial's degree will become two. Thus, three $(y, x)$-coordinates are sufficient to interpolate a polynomial of degree two. Figure \ref{fig:MH-TLP-workflow} outlines the \mhtlp workflow. 

\begin{figure}[htp]
    \centering
    \includegraphics[width=8cm]{MH-TLP.pdf}
    \caption{\mhtlp Workflow Outline.}\label{fig:MH-TLP-workflow}
\end{figure}

\subsection{Detailed Construction} Next, we present a detailed description of  \mhtlp.

\begin{enumerate}

\item\label{single-client::Setup-server} {\textbf{Setup}}. $\ssetup(1^{\st \lambda})\rightarrow (., pk_{\st \srv})$

Server \srv only once takes the following steps:

\begin{enumerate}

\item\underline{\textit{Setting a field's parameter}}:  generates a sufficiently large prime number $\prm$, where $\log_{\st 2}(\prm)$ is the security parameter.

\item  \underline{\textit{Generating public $x$-coordinates}}: sets $\vec{x}=[x_{\st 1}, x_{\st 2}, x_{\st 3}]$, where $x_{\st i}\neq x_{\st j}$, $x_{\st i}\neq 0$, and $x_{\st i}\notin U$.

\item \underline{\textit{Publishing public parameters}}: publishes $pk_{\st \srv}=(\prm, \vec{x})$. Note that, $\prtt$ itself can generate $(\prm, \vec{x})$, too.  

\end{enumerate}

\item\label{single-client-KeyGeneration} {\textbf{Key Generation}}. $\csetup(1^{\st \lambda})\rightarrow \kp$

Client $\prtt$ takes the flowing steps.

\begin{enumerate}

%\item checks the size of $\prm$ and halts if the size is not sufficiently large. 

%\item generates an RSA public and private key pair: $(N, \phi(N)))$. It publishes public key $N$. 

\item \underline{\textit{Generating RSA public and private keys}}: computes $N=\prm_{\st 1}\cdot \prm_{\st 2}$, where $\prm_{\st i}$  is a large randomly chosen prime number, e.g., $\log_{\st 2}(\prm_{\st i})\geq 2048$. Next,  it computes Euler's totient function of $N$, as: $\phi(N)=( \prm_{\st 1}-1)\cdot( \prm_{\st 2}-1)$.

\item\label{single-client::Publishing-public-parameters} \underline{\textit{Publishing public parameters}}:  locally keeps secret key $sk=\phi(N)$ and publishes public key $pk=N$. 

%\end{enumerate}

\end{enumerate}

\item\label{single-client::PuzzleGeneration} {\textbf{Puzzle Generation}}. $\pgen(\vec{m}, \kp, pk_{\st \srv}, \vec\Delta, \mxsqr)\rightarrow(\vec{o}, prm)$

Client \prtt takes the following steps to generate $z$ puzzles for messages $\vec{m}=[m_{\st 1},\ldots, m_{\st z}]$. It wants \srv to learn each message $m_{\st j}$ at time $time_{\st j}\in {\vec{time}}$, where $ {\vec{time}}=[time_{\st 1},\ldots, time_{\st z}]$, $\bar\Delta_{\st j}=time_{\st j}-time_{\st j-1}$, $ {\vec{\Delta}}=[\bar\Delta_{\st 1},\ldots, \bar\Delta_{\st z}]$, and $1\leq j \leq z$.

%, where $1\leq j \leq z$.  %$\forall j, \leq j \leq z:$
%
\begin{enumerate}

\item \underline{\textit{Checking public parameters}}: checks the bit-size of $p$ and  elements of $\vec{x}$ in $pk_{\st \srv}$, to ensure $\log_{\st 2}(p)\geq 128$, $x_{\st i}\neq x_{\st j}, x_{\st i}\neq 0$, and $x_{\st i}\notin U$. If it does not accept the parameters, it returns $(\bot, \bot)$ and does not take further action.

%%%%%%%%%
\item\label{sinlge-client:set-expo} \underline{\textit{Generating secret keys}}: generates a vector of master keys $\vec{mk}=[mk_{\st 1}, \ldots, mk_{\st z}]$ and two secret keys $k_{\st j}$ and  $s_{\st j}$ for each  $mk_{\st j}$ in $\vec{mk}$ as follows. It constructs an empty vector $\vec{mk}$. Then, it

\begin{enumerate}

\item\label{single-GC-TLP::compute-a-values} sets each exponent $a_{\st j}$. %$\forall j, 1\leq j\leq z: $
  $$\forall j, 1\leq j\leq z: \quad a_{\st j}=2^{\st T_{\st j}}\bmod \phi(N)$$
  
  where $T_{\st j}=\mxsqr\cdot\bar\Delta_{\st j}$ is the total number of squaring needed to decrypt an encrypted solution $m_{\st j}$ after previous solution $m_{\st j-1}$ is revealed. 
  %
%Let $ {\vec{a}}=[a_{\st 1},\ldots, a_{\st z}]$.

\item\label{single-GC-TLP::compute-r-values} computes each master key $mk_{\st j}$ as follows.  For every $j$, where 
$1\leq j\leq z:$

\begin{itemize}[label=$\bullet$]

\item when $j=1:$

\begin{enumerate}
 \item selects a uniformly random base $r_{\st j}\stackrel{\st\$}\leftarrow \mathbb{Z}_{\st N}$.% and then append $r_{\st j}$ to $\bm{r}$.
 \item sets  key $mk_{\st j}$ as $mk_{\st j}= r^{\st a_{{\st j}}}_{\st j}\bmod N$. 
 \item  appends $mk_{\st j}$ to $\vec{mk}$. 
 \end{enumerate}

\item when $j>1:$

\begin{enumerate}

 \item\label{step::single-client-set-fresh-based-when-j-not-1} derives a fresh base $r_{\st j} $ from the previous master key  as $r_{\st j}=\prf(j||0, mk_{\st j-1})$.% and then append $r_{\st j}$ to $\bm{r}$.
  
 \item sets  key $mk_{\st j}$ as $mk_{\st j}= r^{\st a_{{\st j}}}_{\st j}\bmod N$.
 \item appends $mk_{\st j}$ to $\vec{mk}$.
 
 \end{enumerate}
 
\end{itemize}
%It sets public key as $pk := (N, \bm{T}, r_{\st 1}, \prm)$ and secret key as $sk := (\bm{mk}, \phi(N))$. 
%%%%%%%%

\item\label{single-client:derive-keys-for-puzzle} derives two secret keys $k_{\st j}$ and  $s_{\st j}$ from each $mk_{\st j}$.  %$\forall j, 1\leq j\leq z:$
$$\forall j, 1\leq j\leq z: \quad k_{\st j}=\prf(1, mk_{\st j}), \hspace{4mm} s_{\st j}=\prf(2, mk_{\st j})$$% =r_{\st \prt}^{\st a_{\st \prt}} \bmod N$$

\end{enumerate}

%
%$$k'_{\st \prt}=2^{\st T'_{\st \prt}}\bmod \pn,\hspace{4mm} k_{\st \prt}=2^{\st T_{\st \prt}}\bmod \pn$$

\item\label{single-client::Generating-blinding-factors}\underline{\textit{Generating blinding factors}}:  generates six pseudorandom values, by using $k_{\st j}$ and  $s_{\st j}$. 
$$\forall j, 1\leq j\leq z\quad \text{and} \quad \forall  i,  1\leq i \leq 3: \hspace{4mm}  z_{\st i, j}=\prf(i, k_{\st j}), \hspace{4mm} w_{\st i, j}=\prf(i, s_{\st j})$$

\item\underline{\textit{Encoding plaintext message}}:  

\begin{enumerate}
\item\label{sinlge-client:rep-message-as-poly} represents each plaintext solution $m_{\st j}$ as a polynomial, such that the polynomial's constant term is the message. %Specifically, it computes polynomial each $\bm{\pi}_{\st j}(x)$ as: 
$$\forall j, 1\leq j\leq z:\quad  \bm{\pi}_{\st j}(x) = x + m_{\st j} \bmod \prm$$

\item\label{sinlge-client:gen-y-coord-of-poly} computes three $y$-coordinates of each  $\bm{\pi}_{\st j}(x)$. 
$$\forall j, 1\leq j\leq z  \quad \text{and} \quad  \forall i, 1\leq i \leq 3: \hspace{4mm}\pi_{\st i, j}= \bm{\pi}_{\st j}(x_{\st i}) \bmod \prm$$

where $x_{\st i}\in \vec{x}$.

\end{enumerate}

\item\label{sinlge-client:enc-y-coord-of-poly}\underline{\textit{Encrypting the solution}}:  encrypts the $y$-coordinates using the blinding factors. 
$$\forall j, 1\leq j\leq z  \quad \text{and} \quad \forall i, 1\leq i \leq 3: \hspace{4mm} o_{\st i, j} = w_{\st i, j}\cdot(\pi_{\st i,j} +  z_{\st i, j}) \bmod \prm$$

\item\label{sinlge-client:commit-y-coord-of-poly}\underline{\textit{Committing to the solution}}:  commits to each plaintext message:   
$$\forall j, 1\leq j\leq z:\quad  com_{\st j} = \comcom(m_{\st j}, mk_{\st j})$$
%$$\forall i, 1\leq i \leq 3: \hspace{4mm} com_{\st i, \prt} = \h(\pi_{\st i,\prt}|| w_{\st i,\prt})$$

Let $\vec{com}=[com_{\st 1},\ldots, com_{\st z}]$. 

\item\label{sinlgeclient--Managing-messages}\underline{\textit{Managing messages}}:  publishes  $\vec{o}=\Big[ [o_{\st 1, 1}, o_{\st 2, 1}, o_{\st 3, 1}], \ldots, [o_{\st 1, z}, o_{\st 2, z}, o_{\st 3, z}]\Big]$ and $pp=(\vec{com}, r_{\st 1}) $. It locally keeps secret parameters $sp=\vec{mk}$. % $sp=([k_{\st 1},\ldots, k_{\st z}], [s_{\st 1},\ldots, s_{\st z}])$. 
It sets $prm=(sp, pp)$. It deletes everything else, including $m_{\st j}, \bm{\pi}_{\st j}(x), \pi_{\st 1, j}, \pi_{\st 2, j},$ and $\pi_{\st 3, j}$ for every $j$, $1\leq j\leq z$.

\end{enumerate}

\item\label{phase::multi-instance-TF-Linear-Combination} {\textbf{Linear Combination}}. $\eval(\langle \srv(\vec{o}, \Delta, $ $ \mxsqr, {pp}, {pk}, pk_{\st \srv}), \prtt(\Delta, \mxsqr, \kp, prm, q_{\st 1}, pk_{\st \srv}), \ldots, \prtt(\Delta, $ $\mxsqr,\\ $ $\kp, prm, q_{\st z},$  $pk_{\st \srv}) \rangle)\rightarrow(\vec{g}, {pp}^{\st(\text{Evl})})$

In this phase, client $\prtt$ produces certain messages that allow \srv to find a linear combination of its plaintext solutions after time $\Delta$.

\begin{enumerate}

\item\underline{\textit{Granting the computation}}: client $\prtt$ takes the following steps.

\begin{enumerate}

\item\label{sinlge-client:Generating-temporary-secret-keys} \underline{\textit{Generating temporary secret keys}}: generates a temporary master key $tk$ and two secret keys $k'$ and $s'$. Moreover, it generates $z-1$ secret key $[f_{\st j},\ldots,f_{\st z}]$. To do that, the following steps are taken: $\prtt$ computes the exponent: 
$$b=2^{\st Y}\bmod \phi(N)$$

where $Y=\Delta\cdot \mxsqr$. It selects a  base uniformly at random: $h\stackrel{\st\$}\leftarrow\mathbb{Z}_{\st N}$ and then sets a temporary master key $tk$:
$$tk= h^{\st b} \bmod N$$

It derives two keys from $tk$: 
$$ k'=\prf(1, tk), \hspace{4mm} s'=\prf(2, tk)$$

It picks fresh $z-1$ random keys $\vec{f}=[f_{\st 2},\ldots, f_{\st z}]$, where $f_{\st j}\stackrel{\st \$}\leftarrow \{0, 1\}^{\st poly(\lambda)}$. %It sends $f_{\st \cl_{\st l}}$ to each $\cl_{\st l}$. %It receives a random key $\bar f_{\st \cl_{\st l}}$ from every other client whose index is in $\vv{idx}$. %this enforces summing all data, so only the combination will be revealed. 

\item\label{single-client:Generating-temporary-pr-vals} \underline{\textit{Generating blinding factors}}: regenerates its original blinding factors, for each $j$-th puzzle. Specifically, for every $j$,  derives two secret keys $k_{\st j}$ and  $s_{\st j}$ from $mk_{\st j}$ as follow.  
$$\forall j, 1\leq j\leq z:  \quad k_{\st j}=\prf(1, mk_{\st j}), \hspace{4mm} s_{\st j}=\prf(2, mk_{\st j})$$% =r_{\st \prt}^{\st a_{\st \prt}} \bmod N$$

%
%$$k'_{\st \prt}=2^{\st T'_{\st \prt}}\bmod \pn,\hspace{4mm} k_{\st \prt}=2^{\st T_{\st \prt}}\bmod \pn$$

Then, it regenerates pseudorandom values, by using $k_{\st j}$ and  $s_{\st j}$. 
$$\forall j, 1\leq j\leq z  \quad \text{and}\quad \forall  i,  1\leq i \leq 3: \hspace{4mm}  z_{\st i, j}=\prf(i, k_{\st j}), \hspace{4mm} w_{\st i, j}=\prf(i, s_{\st j})$$

It also generates new pseudorandom values using keys $(k', s')$: 
 $$\forall i, 1\leq i \leq 3: \hspace{4mm}   z'_{\st i}=\prf(i, k'), \hspace{4mm}   w'_{\st i}=\prf(i, s')$$

It computes new sets of (zero-sum) blinding factors, using the keys in $\vec{f}$, as follows. $\forall j, 1\leq j \leq z:$

\begin{itemize}[label=$\bullet$]

\item if $j=1$: 
$$\forall i, 1\leq i \leq 3: \hspace{4mm}y_{\st i, j}=  -\sum\limits_{\st j = 2}^{\st z}\prf(i, f_{\st j})  \bmod \prm$$

\item if $j>1$: 
$$\forall i, 1\leq i \leq 3: \hspace{4mm}y_{\st i, j}= \prf(i, f_{\st j})         \bmod \prm$$
\end{itemize}
where $f_{\st j}\in \vec{f}$. 

\item\label{sinlge-client:Generating-temporary-root} \underline{\textit{Generating $y$-coordinates of a random root}}: picks a random root, $\rt\stackrel{\st\$}\leftarrow\mathbb{F}_{\st \prm}$.  It represents $\rt$ as a polynomial, such that the polynomial's root is $\rt$. Specifically, it computes polynomial $\bm{\gamma}(x)$ as:
 $$\bm{\gamma}(x) =x- \rt \bmod \prm$$

 Then, it computes three $y$-coordinates of  $\bm{\gamma}(x)$: 
$$\forall i, 1\leq i \leq 3: \hspace{4mm}\gamma_{\st i} = \bm{\gamma}(x_{\st i}) \bmod \prm$$

\item\label{sinlge-client:Committing-to-the-root}  \underline{\textit{Committing to the root}}: computes $ com'=\comcom(\rt, tk)$.

\item\label{single-client::eval:ole-detect} \underline{\textit{Re-encoding outsourced puzzle}}: participates in an instance of $\ole^{\st +}$ with \srv, for every $j$-th puzzle and  every $i$, where $1\leq j \leq z$ and $1\leq i \leq 3$.  The inputs of client \prtt to the $i$-th instance of $\ole^{\st +}$ are: 
$$e_{\st i, j} =\gamma_{\st i}\cdot q_{\st j}\cdot w'_{\st i}\cdot (w_{\st i, j})^{\st -1}\bmod \prm,\quad\quad e'_{\st i, j} = -(\gamma_{\st i}\cdot q_{\st j}\cdot w'_{\st i}\cdot  z_{\st i,j})+z'_{\st i} + y_{\st i, j}\bmod \prm$$ 

The input of \srv to the $(i, j)$-th instance of $\ole^{\st +}$ is corresponding encrypted $y$-coordinate: $e''_{\st i, j} = o_{\st i, j}$. Accordingly, the $(i, j)$-th instance of $\ole^{\st +}$ returns to \srv:
\begin{equation*}
\begin{split}
d_{\st i, j} &= e_{\st i, j}\cdot e''_{\st i, j} + e'_{\st i, j}\\ &=\gamma_{\st i}\cdot q_{\st j}\cdot w'_{\st i}\cdot  \pi_{\st i, j}+z'_{\st i } + y_{\st i, j}\bmod \prm
\end{split}
\end{equation*}

where $q_{\st j}$ is a coefficient for $j$-th solution $m_{\st j}$. 

\item \label{sinlge-client::Publishing-public-parameters--x}\underline{\textit{Publishing public parameters}}: publishes  ${pp}^{\st(\text{Evl})}=(h, com')$. 

%$h$ and $ com'=\comcom(\rt, tk)$

\end{enumerate}
%%%%%%%%%%%%%%%%

\item\label{sinlge-client::Computing-encrypted-linear-combination-}\underline{\textit{Computing encrypted linear combination}}:  Server \srv sums all of the outputs of $\ole^{\st +}$ instances that it has invoked as follows. $\forall i, 1\leq i \leq 3: $

\begin{equation*}
\begin{split}
g_{\st i} &= \sum\limits_{\st j=1}^{\st z} d_{\st {i, j}} \bmod \prm\\ & =    (w'_{\st i}\cdot \gamma_{\st i}  
\cdot \sum\limits_{\st j=1}^{\st z}   q_{\st j}\cdot \pi_{\st i, j} )+    z'_{\st i}  \bmod  \prm
\end{split}
\end{equation*}
Note that in  $g_{\st  i, j}$ there is no $y_{\st  i, j}$, because  $y_{\st  i, j}$ in different $d_{\st i, j}$ have canceled out each other. 

\item\label{single-client::publish-encrypted-solutions}\underline{\textit{Disseminating encrypted result}}:  server \srv publishes $\vec{g}=[g_{\st 1},g_{\st 2}, g_{\st 3}]$.

\end{enumerate}

\item\label{sinlge-client::Solving-a-Puzzle} {\textbf{Solving a Puzzle}}.  $\solv(\vec{o}, pp, \vec{g}, \vec{pp}^{\st (\text{Evl})}, pk, pk_{\st \srv},\cmd)\rightarrow(\vec{m}, \vec{\zeta})$

 Server \srv takes the following steps. 
% maybe this can be generalised to capture the linear combination as well. 

%\begin{steps}
%\item\label{test} xx
%\end{steps}

%\begin{enumerate}[leftmargin=15mm, label=Case \alph*:]

\begin{steps}[leftmargin=15mm]

%%******
\item\label{single-client::solving-linear-combination}\hspace{-2mm}. when solving a puzzle related to the linear combination.

\begin{enumerate}

\item\label{single-client-extract-temp-key}  \underline{\textit{Finding secret keys}}: 

\begin{enumerate}

\item finds temporary key $tk$, where  $tk = h^{\st 2^{\st Y}} \bmod N $, via repeated squaring of $h$ modulo $N$.

\item derives two keys from $tk$: 
$$ k'=\prf(1, tk), \hspace{4mm} s'=\prf(2, tk)$$% =r_{\st \prt}^{\st a_{\st \prt}} \bmod N$$

\end{enumerate}

%
%$$k'_{\st \prt}=2^{\st T'_{\st \prt}}\bmod \pn,\hspace{4mm} k_{\st \prt}=2^{\st T_{\st \prt}}\bmod \pn$$

%\item re-generates six pseudorandom blinding factors using $k'_{\st \prt}$ and  $s'_{\st \prt}$:
%%
%
%$$
%\forall i, 1\leq i \leq 3: \hspace{4mm}  z'_{\st i, \prt}=\prf(i, k'_{\st \prt}), \hspace{4mm} w'_{\st i,\prt}=\prf(i, s'_{\st \prt})
%$$

\item\label{sinlge-client::server-side-computatoin-removing-pr-vals} \underline{\textit{Removing blinding factors}}: removes the blinding factors from $[g_{\st 1}, g_{\st 2}, g_{\st 3}]\in \vec{g}$. 

$\forall i, 1\leq i \leq 3:$
\begin{equation*}
\begin{split}
\theta_{\st i}&=\underbrace{\big( \prf(i, s')\big)^{\st -1}}_{\st (w'_{\st i})^{\st -1}}\cdot\big(g_{\st i}- \overbrace{\prf(i, k'}^{\st z'_{\st i}})\big) \bmod \prm\\ &=
\gamma_{\st i}  \cdot \sum\limits_{\st j=1}^{\st z}   q_{\st j}\cdot \pi_{\st i, j} \bmod \prm
\end{split}
\end{equation*}

\item\label{step::sinlge-client-interpolate-poly}  \underline{\textit{Extracting a polynomial}}: interpolates a polynomial $\bm{\theta}$, given pairs $(x_{\st 1}, \theta_{\st 1}), (x_{\st 2}, \theta_{\st 2}), (x_{\st 3}, \theta_{\st 3})$.  Note that $\bm{\theta}$ will have the form: 
$$\bm\theta(x) =(x-\rt)\cdot \sum\limits_{\st j=1}^{\st z}q_{\st j}\cdot (x+m_{\st j}) \bmod \prm$$

We can rewrite $\bm\theta(x)$ as: 
$$\bm\theta(x) = \bm\psi(x)-\rt\cdot  \sum\limits_{\st j=1}^{\st z} q_{\st j}\cdot m_{\st j}   \bmod \prm$$

where $\bm\psi(x)$ is a polynomial of degree two with constant term being $0$.

\item\label{sinlge-client::server-side-Extracting-the-linear-combination} \underline{\textit{Extracting the linear combination}}: retrieves the result (i.e., the linear combination of $m_{\st 1},\ldots, m_{\st z}$)  from polynomial $\bm\theta(x)$'s constant term: $cons=-\rt\cdot  \sum\limits_{\st j=1}^{\st z} q_{\st j}\cdot m_{\st j}$ as follows:
\begin{equation*}
\begin{split}
m &=cons\cdot (-\rt)^{\st -1}\bmod \prm\\ &= \sum\limits_{\st j=1}^{\st z} q_{\st j}\cdot m_{\st j}
\end{split}
\end{equation*}

\item\label{sinlge-client::Extracting-valid-roots} \underline{\textit{Extracting valid roots}}: extracts the root(s) of $\bm{\theta}$. Let set $R$ contain the extracted roots. It identifies the valid root, by finding a root $\rt$ in $R$, such that $\comver(com',(\rt, tk))=1$.

\item\label{single-client-publish-lc-proof} \underline{\textit{Publishing the result}}: initiates  vectors $\vec{m}$ and $\vec\zeta$. It appends $m$ to  $\vec{m}$ and $(\rt, tk)$ to $\vec{\zeta}$. It publishes $\vec{m}$ and $\vec{\zeta}$.

\end{enumerate}

%%%%%%%%%%**********
\item\label{single-client-solving-only-one-puzzle}\hspace{-2mm}. when solving each $j$-th puzzle $\vec{o}_{\st j, \prt}$ of client $\prtt_{\st \prt}$ (i.e., when $\cmd=\scp$), server \srv takes the following steps. $\forall j, 1\leq j\leq z:$

\begin{enumerate}

\item\label{single-client::find-base} \underline{\textit{Finding secret bases and keys}}: sets base $r_{\st j}$ and $mk_{\st j}$ as follows. %For every $j$, $1\leq  j\leq z:$

\begin{itemize}[label=$\bullet$]
\item if $j=1: $ sets the base to $r_{\st 1}$, where $r_{\st 1}\in pp$. Then, it finds  $mk_{\st 1}$ where  $mk_{\st 1}=r^{\st 2^{\st T_{\st 1}}}_{\st 1}\bmod N$ through repeated squaring of $r_{\st 1}$ modulo $N$. It also initiates  vectors $\vec{m}$ and $\vec\zeta$.

\item if $j>1: $ computes base $r_{\st j}$ as   $r_{\st j}=\prf(j||0, mk_{\st j-1})$. Next, it  finds  $mk_{\st j}$ where  $mk_{\st j}=r^{\st 2^{\st T_{\st j}}}_{\st j}\bmod N$ through repeated squaring of $r_{\st j}$ modulo $N$.
\end{itemize}

It derive two keys from $mk_{\st j}$: 
$$ k_{\st j}=\prf(1, mk_{\st j}), \hspace{4mm} s_{\st j}=\prf(2, mk_{\st j})$$% =r_{\st \prt}^{\st a_{\st \prt}} \bmod N$$

\item\label{single-client::Removing-blinding-factors-server-side} \underline{\textit{Removing blinding factors}}: re-generates six pseudorandom values using $k_{\st j}$ and  $s_{\st j}$:
$$ \forall i, 1\leq i \leq 3: \hspace{4mm}  z_{\st i, j}=\prf(i, k_{\st j}), \hspace{4mm} w_{\st i, j}=\prf(i, s_{\st j})$$

Next, it uses the blinding factors to unblind $\vec{o}_{\st j} = [o_{\st 1, j}, o_{\st 2, j}, o_{\st 3, j}]$:
$$\forall i, 1\leq i \leq 3: \hspace{4mm}  \pi_{\st i, j}  = \big((w_{\st i, j})^{\st -1}\cdot o_{\st i, j}\big) -z_{\st i, j} \bmod \prm$$

%Extracting a polynomial

\item\label{single-client::Extracting a polynomial-server-side}   \underline{\textit{Extracting a polynomial}}: interpolates a polynomial $\bm{\pi}_{\st j}$, given pairs $(x_{\st 1}, \pi_{\st 1, j}), (x_{\st 2}, $ $\pi_{\st 2, j}), $ $(x_{\st 3}, \pi_{\st 3, j})$.

\item\label{single-client-pub-solution-} \underline{\textit{Publishing the solution}}: considers the constant term of $\bm{\pi}_{\st j}$ as the plaintext message, $m_{\st j}$. It appends $(m_{\st j}, j)$ to $\vec{m}$ and $mk_{\st j}$ to $\vec{\zeta}$. If $j=z$, then it publishes $\vec{m}$ and $\vec{\zeta}$.

%the solution $m=m_{\st j}$  and the proof $\zeta= mk_{\st j}$. 

\end{enumerate}

\end{steps}
%%%%%%%%%%%%%%%

\item\label{sinlge-client::verification} {\textbf{Verification}}. $\ver(m, \zeta, ., pp, \vec{g}, \vec{pp}^{\st (\text{Evl})}, pk_{\st \srv}, \cmd)\rightarrow \ddot{v}\in\{0,1\}$

A verifier (that can be anyone) takes the following steps.

\begin{steps}[leftmargin=15mm]

\item\label{single-client::verifying-linear-combination}\hspace{-2mm}. when verifying a solution related to the linear combination, i.e., when $\cmd=\ep$:

\begin{enumerate}

%%**************

\item\label{step-single-client-verify-opening-}  \underline{\textit{Checking the commitment's opening}}: verify the validity of  $(\rt, tk)\in \zeta$, provided by \srv in step \ref{single-client-publish-lc-proof} of \ref{single-client::solving-linear-combination}: 
$$\comver\big(com', (\rt , tk )\big)\stackrel{\st?}=1$$

If the verification passes, it proceeds to the next step. Otherwise, it returns $\ddot{v}=0$ and takes no further action. 

\item\label{sinlge-client:Checking-resulting-polynomial-valid-roots-} \underline{\textit{Checking the resulting polynomial's valid roots}}: checks if the resulting polynomial contains
 the root \rt in $\zeta$, by taking the following steps.

\begin{enumerate}

\item derives two keys from $tk$: 
$$ k'=\prf(1, tk), \hspace{4mm} s' = \prf(2, tk)$$% =r_{\st \prt}^{\st a_{\st \prt}} \bmod N$$

\item\label{sinlge-client:verification-case--removes-the-blinding-factors} removes the blinding factors from $\vec{g}=[g_{\st 1}, g_{\st 2}, g_{\st 3}]$ that were provided by server \srv in step \ref{single-client::publish-encrypted-solutions}. Specifically, for every $i$, 
$ 1\leq i \leq 3:$
\begin{equation*}
\begin{split}
\theta_{\st i}&=\underbrace{\big( \prf(i, s')\big)^{\st -1}}_{\st (w'_{\st i})^{\st -1}}\cdot\big(g_{\st i}- \overbrace{\prf(i, k'}^{\st z'_{\st i}})\big) \bmod \prm\\ &=
\gamma_{\st i}  \cdot \sum\limits_{\st j=1}^{\st z}   q_{\st j}\cdot \pi_{\st i, j} \bmod \prm
\end{split}
\end{equation*}

\item\label{step-single-client-check-roots}  interpolates a polynomial $\bm{\theta}$, given $(x_{\st 1}, \theta_{\st 1}), (x_{\st 2}, \theta_{\st 2}), (x_{\st 3}, \theta_{\st 3})$.  Note that $\bm{\theta}$ will have the form: 
\begin{equation*}
\begin{split}
\bm\theta(x)&=(x-\rt)\cdot \sum\limits_{\st j=1}^{\st z} q_{\st j}\cdot (x+m_{\st j}) \bmod \prm\\ &= \bm\psi(x)-\rt\cdot \sum\limits_{\st j=1}^{\st z}q_{\st j}\cdot m_{\st j}   \bmod \prm
\end{split}
\end{equation*}
where $\bm\psi(x)$ is a polynomial of degree $2$ whose constant term is $0$.

\item\label{single-client::eval-and-check-root} checks if $\rt$ is a root of $\bm \theta(x)$, by evaluating $\bm \theta(x)$ at $\rt$ and checking if the result is $0$, i.e., $\bm\theta(\rt)\stackrel{\st ?}=0$. It proceeds to the next step if the check passes.  It returns $\ddot{v}=0$ and takes no further action, otherwise.

\end{enumerate}

\item\label{step-sinlge-client-check-res-} \underline{\textit{Checking the final result}}: retrieves the result (which is the linear combination of $m_{\st 1},\ldots, m_{\st z}$)  from polynomial $\bm\theta(x)$'s constant term: $t=-\rt\cdot\sum\limits_{\st j=1}^{\st z}q_{\st j}\cdot m_{\st j}$ as follows:
\begin{equation*}
\begin{split}
res' &= -t\cdot \rt^{\st -1}\bmod \prm\\  &= \sum\limits_{\st j=1}^{\st z}q_{\st j}\cdot m_{\st j}
\end{split}
\end{equation*}

It checks $res' \stackrel{\st ?}=m$, where $m$ is the result that \srv sent to it, in step \ref{single-client-publish-lc-proof} of \ref{single-client::solving-linear-combination}. 

\item \underline{\textit{Accepting or rejecting the result}}: If all the checks pass, it accepts $m$ and returns $\ddot{v}=1$. Otherwise, it returns $\ddot{v}=0$.

\end{enumerate}

%%**************

\item\label{sinlge-client::verifying-a-solution-of-single-puzzle}\hspace{-2mm}. when verifying the $j$-th solution of a single puzzle belonging to client $\prtt$: 

\begin{enumerate}

%Checking the commitment' opening:

\item\underline{\textit{Checking the commitment' opening}}: checks whether opening $m_{\st j}\in m$ and $mk_{\st j}\in \zeta$, given by \srv in step \ref{single-client-pub-solution-} of \ref{single-client-solving-only-one-puzzle}, matches the commitment: 
$$\comver\big(com_{\st j}, (m_{\st j}, mk_{\st j})\big)\stackrel{\st?}=1$$

%$$ com_{\st \prt} \stackrel{\st?}= \comcom(m_{\st \prt}, mk_{\st \prt})$$

%\item accepts the solution if the above check passes. It rejects the solution, otherwise. 

\item \underline{\textit{Accepting or rejecting the solution}}: accepts the solution $m_{\st j}$ and returns $\ddot{v}=1$, if the above check passes.  It rejects the solution, and it returns $\ddot{v}=0$ otherwise.

\end{enumerate}

\end{steps}

%\item {\textbf{Retrieving Plaintext Message}}. 
%\begin{enumerate}
%
%\item  interpolates a polynomial $\pi_{\st \prt}$, given pairs $(x_{\st 1}, \pi_{\st 1,\prt}),..., (x_{\st 3}, \pi_{\st 3,\prt})$.
%\item considers the constant term of $\pi_{\st \prt}$ as the plaintext message, $m_{\st \prt}$. 
%\end{enumerate}

\end{enumerate}

\begin{remark}
In step \ref{step::single-client-set-fresh-based-when-j-not-1}, index $j$ is concatenated with $0$ to avoid any collision (i.e., generating the same pseudorandom value more than once), because $j$, as input of \prf, will be used as input in other steps. 
\end{remark}

\begin{theorem}\label{theo:security-of-MI-VH-TLP}
If the sequential modular squaring assumption holds, factoring $N$ is a hard problem, \prf, $\ole^{\st +}$,  and the commitment schemes are secure, then  \mhtlp presented above is secure, regarding Definition \ref{def:sec-def-vh-tlp}. 
\end{theorem}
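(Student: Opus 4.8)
We only sketch the plan. By Definition~\ref{def:sec-def-vh-tlp}, it suffices to establish \textbf{privacy} (Definition~\ref{def:privacy-vh-tlp}) and \textbf{solution-validity} (Definition~\ref{def:validity-vh-tlp}) separately, following the structure of the security proof of \tf; the only genuinely new element is the chaining step $r_{\st j}=\prf(j||0,mk_{\st j-1})$, so the bulk of the effort goes into showing that this derivation preserves privacy.

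For privacy the plan is a chain of hybrids over $\mathsf{Exp}_{\st \textnormal{prv}}^{\st\adv}$ that interleaves two kinds of steps: (i) \prf steps, which replace each chained base $r_{\st j}=\prf(j||0,mk_{\st j-1})$ and each derived value $k_{\st j},s_{\st j},z_{\st i,j},w_{\st i,j}$ (together with their $tk$-analogues $k',s',z'_{\st i},w'_{\st i}$ and the zero-sum masks $y_{\st i,j}$) by fresh uniform values, once the key they are derived from has already been randomised; and (ii) sequential-squaring steps, which replace a master key $mk_{\st j}=r_{\st j}^{\st 2^{\st T_{\st j}}}\bmod N$ (resp.\ $tk=h^{\st 2^{\st Y}}\bmod N$) by a uniform value, invoking the sequential modular squaring assumption and using that factoring $N$ is hard so that the only shortcut, knowledge of $\phi(N)$, is unavailable. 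The role of the chaining here is that the adversary can obtain the base $r_{\st j}$ only after solving puzzles $1,\dots,j{-}1$, i.e.\ after $\sum_{\st i<j}T_{\st i}$ sequential steps, and then needs $T_{\st j}$ further steps, so for the target index $j$ chosen by $\adv_{\st 2}$ --- whose total parallel running time is below $\delta(\Delta_{\st j}\cdot\mxsqr)$ with $\Delta_{\st j}=\sum_{\st i\le j}\bar\Delta_{\st i}$ --- the value $mk_{\st j}$ is still hidden. Once $mk_{\st j}$ is uniform, each $o_{\st i,j}=w_{\st i,j}(\pi_{\st i,j}+z_{\st i,j})\bmod\prm$ is uniform and independent of $\pi_{\st i,j}$, and $com_{\st j}=\comcom(m_{\st j},mk_{\st j})$ is independent of $m_{\st j}$ by commitment hiding, so $\adv_{\st 2}$'s first guess (line~\ref{expr-prv:chalenger-check-single-puzzle}) is correct with probability $\le\tfrac12+\mu(\lambda)$. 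For the post-$\eval$ guess (line~\ref{expr-prv:2nd-wining-condition}) I would additionally use that the transcript returned by $\orev()$ consists of $\ole^{\st +}$ protocol messages whose sender-privacy (which holds even when the receiver's input is $0$) hides the client's inputs, hence the $\gamma_{\st i}$ and thus $\rt$; that $\vec g$ is an encryption under the $tk$-derived factors and is therefore pseudorandom until $Y=\Delta\cdot\mxsqr$ sequential steps are spent; and that $\rt$ remains hidden inside $com'$. Combining these steps gives $\Pr[\mathsf{Exp}_{\st \textnormal{prv}}^{\st\adv}\to 1]\le\tfrac12+\mu(\lambda)$.

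For solution-validity the plan is to bound the two winning events of $\mathsf{Exp}_{\st \textnormal{val}}^{\st\adv}$. If $\adv$ outputs $(m'_{\st j},\zeta'_{\st j})$ passing the $\cmd=\scp$ check with $m'_{\st j}\notin\mathcal{L}_{\scp}$, then $\comver(com_{\st j},(m'_{\st j},\zeta'_{\st j}))=1$, while the honest pair $(m_{\st j},mk_{\st j})$ --- $mk_{\st j}$ being the unique value obtained by solving the $j$-th chained puzzle and $m_{\st j}\in\mathcal{L}_{\scp}$ its correct constant term --- also opens $com_{\st j}$, so $\adv$ has broken binding. If $\adv$ outputs $(m',\zeta'{=}(\rt',tk'))$ passing the $\cmd=\ep$ check with $m'\neq\sum_{\st j}q_{\st j}m_{\st j}$, then binding of $com'$ forces $(\rt',tk')=(\rt,tk)$; the verifier then unblinds the published $\vec g=[g_{\st 1},g_{\st 2},g_{\st 3}]$ with the honest $tk$-derived factors, interpolates the degree-$\le 2$ polynomial $\bm\theta'$ through the three resulting points, and checks $\bm\theta'(\rt)=0$ and $res'=m'$. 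If $\vec g$ equals the honest output of $\eval$ then $\bm\theta'=(x-\rt)\sum_{\st j}q_{\st j}(x+m_{\st j})$, so $res'=\sum_{\st j}q_{\st j}m_{\st j}$ and the last check forces $m'\in\mathcal{L}_{\ep}$, a contradiction; otherwise $\vec g$ differs from the honest output, so $\bm\theta'$ was fixed by \srv during $\eval$ --- a moment at which $\rt$ is hidden inside $com'$ and, by $\ole^{\st +}$ sender-privacy, does not leak through the $\ole^{\st +}$ messages --- whence $\bm\theta'$ is (computationally) independent of the uniform $\rt\in\mathbb{F}_{\st\prm}$ and, by the hidden-root unforgeability of Theorem~\ref{theorem::Unforgeable-Encrypted-Polynomial}, $\Pr[\bm\theta'(\rt)=0]\le 2/\prm+\mu(\lambda)$. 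A union bound over the two events and the $z=poly(\lambda)$ indices yields $\Pr[\mathsf{Exp}_{\st \textnormal{val}}^{\st\adv}\to 1]\le\mu(\lambda)$.

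The step I expect to be the main obstacle is the chaining part of the privacy argument: one must show that deriving each base from the previous master key gives the adversary no way to reach the target $mk_{\st j^{*}}$ faster than $\Delta_{\st j^{*}}\cdot\mxsqr$ sequential squarings, even though it holds the whole published puzzle vector and has adaptive access to $\opgen(),\oeval(),\orev()$. The way I would resolve this is through the interleaved \prf/sequential-squaring hybrids above --- casting each squaring step as a reduction to the iterated sequential-squaring assumption of \cite{BonehBBF18} and using \prf security to ``cut'' the chain exactly at $mk_{\st j^{*}}$. A secondary and milder point, already visible in the theorem's hypotheses, is pinning down precisely what the corrupt \srv learns about $\rt$ during $\eval$, which is why both $\ole^{\st +}$ sender-privacy and the hiding of $com'$ are needed.
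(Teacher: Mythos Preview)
Your proposal is correct and follows essentially the same approach as the paper's own proof. The paper gives only a brief sketch: it points to the security proof of \tf (Theorem~\ref{theo:security-of-VH-TLP}) and observes that the single new ingredient is the privacy of each chained base $r_{\st j'}=\prf(j'\Vert 0,mk_{\st j'-1})$, which it bounds via the RSA-based TLP security, commitment hiding, and \prf security --- exactly the interleaved ``sequential-squaring then \prf'' steps you describe. Your treatment is in fact more explicit than the paper's, spelling out the hybrid structure for privacy and the binding/hidden-root argument for solution-validity, but the underlying plan is the same.
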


\begin{theorem}\label{theo:completeness-efficiency-compactness-of-MI-VH-TLP}
The \mhtlp protocol presented above meets completeness, efficiency, and compactness, regarding Definitions \ref{def:completeness-vh-tlp}, \ref{def:efficiency-vh-tlp}, and \ref{def:compactness-vh-tlp} respectively.  
\end{theorem}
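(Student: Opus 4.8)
The plan is to read Theorem~\ref{theo:completeness-efficiency-compactness-of-MI-VH-TLP} as a structured verification that \mhtlp behaves as designed, treating completeness, efficiency, and compactness in turn; since \mhtlp is built on \tf, much of the linear-combination argument mirrors the corresponding proof in~\cite{tempora-fusion}, and the genuinely new content is the chained key derivation.

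For \textbf{completeness} I would check the four conditions of Definition~\ref{def:completeness-vh-tlp} by tracing an honest execution. The crux is that the chaining hands \srv exactly the master keys the client used: for $j=1$, \srv reconstructs $mk_{\st 1}=r_{\st 1}^{\st 2^{\st T_{\st 1}}}\bmod N$ by repeated squaring, which equals the client's $r_{\st 1}^{\st a_{\st 1}}\bmod N$ because $a_{\st 1}=2^{\st T_{\st 1}}\bmod\pn$ and $r_{\st 1}\in\mathbb{Z}_{\st N}^{\st *}$ with overwhelming probability (the complementary event would expose a factor of $N$); for $j>1$, $r_{\st j}=\prf(j||0,mk_{\st j-1})$ is computed identically by client and server, so by induction on $j$ all $mk_{\st j}$ agree, as do the derived $k_{\st j},s_{\st j}$ and blinding factors $z_{\st i,j},w_{\st i,j}$. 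Unblinding then recovers $\pi_{\st i,j}=\bm\pi_{\st j}(x_{\st i})$, and since $\deg\bm\pi_{\st j}=1$ the three points $x_{\st 1},x_{\st 2},x_{\st 3}$ interpolate $\bm\pi_{\st j}(x)=x+m_{\st j}$ whose constant term is $m_{\st j}$, giving condition~\ref{completeness-of-single-puzzle}. For condition~\ref{completeness-of-linear-comb-puzzle} I would first invoke correctness of $\ole^{\st +}$ to get $d_{\st i,j}=e_{\st i,j}\cdot o_{\st i,j}+e'_{\st i,j}=\gamma_{\st i}\,q_{\st j}\,w'_{\st i}\,\pi_{\st i,j}+z'_{\st i}+y_{\st i,j}\bmod\prm$, then observe that summing over $j$ cancels the zero-sum masks $y_{\st i,j}$ (being $-\sum_{\st j=2}^{\st z}\prf(i,f_{\st j})$ for $j=1$ and $\prf(i,f_{\st j})$ otherwise), leaving $g_{\st i}=w'_{\st i}\gamma_{\st i}\sum_{\st j}q_{\st j}\pi_{\st i,j}+z'_{\st i}$; since \srv recovers $tk=h^{\st 2^{\st Y}}\equiv h^{\st b}\bmod N$ it strips $w'_{\st i},z'_{\st i}$ and interpolates $\bm\theta(x)=(x-\rt)\cdot\sum_{\st j}q_{\st j}(x+m_{\st j})$, a degree-two polynomial recovered exactly from three points, whose constant term yields $\sum_{\st j}q_{\st j}m_{\st j}$ after multiplying by $(-\rt)^{\st -1}$. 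Conditions~\ref{correctness-verify-single-puzzle} and~\ref{completeness-verification-for-HLC} then follow from commitment correctness ($\comver(com_{\st j},(m_{\st j},mk_{\st j}))=1$, resp.\ $\comver(com',(\rt,tk))=1$) plus the fact that the verifier recomputes the same $\bm\theta$, so $\bm\theta(\rt)=0$ holds by construction and $res'=m$ holds because the honest \srv published $m=\sum_{\st j}q_{\st j}m_{\st j}$.

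For \textbf{efficiency}, condition~\ref{efficiency-multi-instance} is the one needing care. A traditional scheme solving $z$ independent puzzles with hiding times $\Delta_{\st 1},\ldots,\Delta_{\st z}$ performs $\mxsqr\cdot\sum_{\st j}\Delta_{\st j}=\mxsqr\cdot\sum_{\st j}\sum_{\st i=1}^{\st j}\bar\Delta_{\st i}$ sequential squarings, whereas the chained solver traverses each $\bar\Delta_{\st j}$-stretch once, performing only $\sum_{\st j}T_{\st j}=\mxsqr\cdot\sum_{\st j}\bar\Delta_{\st j}=\mxsqr\cdot\Delta_{\st z}$ squarings, plus $O(z)$ lower-order operations (\prf calls, constant-size interpolations); hence $\Psi_{\st trad}(z)-\Psi_{\st multi}(z)$ is a positive polynomial in $z,\mxsqr,\Delta_{\st 1},\ldots,\Delta_{\st z}$, concretely $\mxsqr\cdot\sum_{\st j=1}^{\st z-1}\Delta_{\st j}$ up to lower-order terms, recovering the $\frac{1}{z}\sum_{\st j=1}^{\st z}j$ speedup quoted in Section~\ref{sec::intro} when the intervals are equal. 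Condition~\ref{efficiency-Polynomial-time-solving} follows because $\solv$ does at most $z\cdot T_{\st max}$ squarings plus $O(z)$ polynomial-time steps. Conditions~\ref{efficiency-requirement-genpuz} and~\ref{efficiency-requirement-Faster-puzzles-evaluation} rely on the client never squaring: it computes $a_{\st j}=2^{\st T_{\st j}}\bmod\pn$ and $b=2^{\st Y}\bmod\pn$ by fast exponentiation in $O(\log T_{\st max})$ resp.\ $O(\log T)$ modular multiplications (it knows $\pn$), so $\pgen$ runs in $poly'(z,\log T_{\st max},\lambda)$ and $\eval$, which additionally invokes $O(z)$ constant-cost $\ole^{\st +}$ instances and an $O(z)$ summation, runs in $poly''(\log T,\lambda,\func(\cdot))$. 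For \textbf{compactness}, the output $\vec{g}=[g_{\st 1},g_{\st 2},g_{\st 3}]$ is always exactly three elements of $\mathbb{F}_{\st\prm}$ (and $\vec{pp}^{\st(\text{Evl})}=(h,com')$ is of size $O(\lambda)$), independently of $z$ or $\func$'s complexity, because the linear combination is always encoded as a single degree-two polynomial in point-value form; since $\prm$ need only be large enough to hold $\sum_{\st j}q_{\st j}m_{\st j}$, we get $||\vec{g}||=3\log_{\st 2}\prm=poly(\lambda,||\func((q_{\st 1},m_{\st 1}),\ldots,(q_{\st z},m_{\st z}))||)$, which is Definition~\ref{def:compactness-vh-tlp}.

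I expect the main obstacle to be the algebraic bookkeeping in the linear-combination branch of completeness: verifying that the $\ole^{\st +}$ outputs compose into the claimed $d_{\st i,j}$, that the zero-sum masks cancel exactly upon summation, and that $\bm\theta$ has precisely degree two so that three interpolation points are both necessary and sufficient, all while keeping the chaining indices straight through the induction. Setting up condition~\ref{efficiency-multi-instance} at the right level of precision (the comparison between $\Psi_{\st trad}$ and $\Psi_{\st multi}$) is the second delicate point; neither is conceptually deep, but both demand care to avoid off-by-one errors.
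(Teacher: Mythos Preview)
Your proposal is correct and follows essentially the same route as the paper's proof: a three-lemma decomposition (completeness, efficiency, compactness), each handled by tracing the honest execution through the four completeness cases, computing $\Psi_{\st trad}(z)-\Psi_{\st multi}(z)=\mxsqr\cdot\sum_{\st j=1}^{\st z-1}\Delta_{\st j}$ for the multi-instance condition, and observing that $\vec{g}$ consists of three $\mathbb{F}_{\st\prm}$-elements. Your write-up is in fact more explicit than the paper's on the chaining induction and the zero-sum mask cancellation, but the structure and key observations are the same.
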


The remainder of this section provides proof for Thereoms \ref{theo:security-of-MI-VH-TLP} and \ref{theo:completeness-efficiency-compactness-of-MI-VH-TLP}.

%Appendix \ref{sec::Single-client-VH-TLP-proof}  presents the proof of Theorem \ref{theo:security-of-MI-VH-TLP} while Appendix \ref{theo:completeness-efficiency-compactness} provides the proof of Theorem  \ref{theo:completeness-efficiency-compactness-of-MI-VH-TLP}. 

%!TEX root = main.tex

\subsection{Proof of Theorem \ref{theo:security-of-MI-VH-TLP}}\label{sec::Single-client-VH-TLP-proof}

\begin{proof}[sketch]
There will be a significant overlap between the proofs of Theorems \ref{theo:security-of-MI-VH-TLP} and \tf (i.e., Theorem \ref{theo:security-of-VH-TLP} in Appendix \ref{sec::tf-protocol}). %Thus, to avoid repetition, we provide a sketch of proof. 
The proof of Theorem \ref{theo:security-of-MI-VH-TLP} differs from that of Theorem \ref{theo:security-of-VH-TLP} from a key perspective. Namely, the former requires an additional discussion on the privacy of each base $r_{\st j'}$ before the $(j'-1)$-th puzzle is solved, for every $j'$, where $2\leq j' \leq z$.  

Briefly, the additional discussion will rely on the security of standard RSA-based TLP, the hiding property of the commitment,  and the security of \prf.

Specifically, before the $j$-th puzzle is solved the related master key $mk_{\st j}$ cannot be extracted except for a probability negligible in the security parameter, $\mu(\lambda)$, if the sequential modular squaring assumption holds, factoring $N$ is a hard problem, and the commitment scheme satisfies the hiding property. This argument holds for any $j$, where $1\leq j \leq z$. 
Therefore, with a probably at most $\mu(\lambda)$ an adversary can find a key $mk_{\st j'-1}$ of \prf to compute the base $r_{\st j'}$, which has been set as $r_{\st j'}=\prf(j' || 0, mk_{\st j'-1})$, for any $j'$, where $2\leq j' \leq z$. Furthermore, since $r_{\st j'}$ is the output of \prf, due to the security of \prf (that its output is indistinguishable from the output of a random function), the probability of correctly computing it is  $\mu(\lambda)$, without the knowledge of $mk_{\st j'-1}$. 
\hfill$\square$
\end{proof}

%!TEX root = main.tex

\subsection{Proof of Theorem \ref{theo:completeness-efficiency-compactness-of-MI-VH-TLP}} \label{theo:completeness-efficiency-compactness}

\begin{proof}
We begin with proving the completeness of \mhtlp. 

\begin{lemma}\label{lemma::completeness}
 \mhtlp satisfies completeness, regarding Definition \ref{def:completeness-vh-tlp}.
\end{lemma}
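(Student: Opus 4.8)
The plan is to establish each of the four conditions of Definition~\ref{def:completeness-vh-tlp} in turn, by tracing the algebraic identities embedded in the detailed construction of \mhtlp and invoking the correctness of the building blocks: the determinism of \prf, the correctness of $\ole^{\st +}$, the correctness of the commitment scheme, and the Euler-theorem identity underlying the RSA-based TLP.

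\textbf{Conditions 1 and 3 (single puzzles).} First I would argue that $\solv(\cdot,\cdot,\cdot,\cdot,\cdot,\cdot,\scp)$ reconstructs exactly the same chain of master keys produced by $\pgen$. For $j=1$ the server is given $r_{1}\in pp$ and computes $mk_{1}=r_{1}^{2^{T_{1}}}\bmod N$; since $N=p_1 p_2$ is a product of primes and $r_{1}\in\mathbb{Z}_N$, Euler's theorem gives $r_{1}^{2^{T_{1}}}\equiv r_{1}^{2^{T_{1}}\bmod \phi(N)}=r_{1}^{a_{1}}\bmod N$, matching the client's $mk_{1}$. Inductively, since $r_{j}=\prf(j||0, mk_{j-1})$ is computed by the same deterministic \prf on the same input, the server recovers the same $r_{j}$, hence the same $mk_{j}$, for every $j$ up to $z$. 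Given $mk_{j}$, the keys $k_{j},s_{j}$ and the blinding factors $z_{i,j},w_{i,j}$ are regenerated identically, so unblinding $o_{i,j}=w_{i,j}(\pi_{i,j}+z_{i,j})$ returns $\pi_{i,j}=\bm\pi_{j}(x_{i})$ for $i\in\{1,2,3\}$; as $\bm\pi_{j}$ has degree one, the three point-value pairs interpolate to $\bm\pi_{j}(x)=x+m_{j}$, whose constant term is $m_{j}$. This proves Condition~1. Condition~3 then follows immediately: the server publishes exactly the pair $(m_{j},mk_{j})$, and by correctness of the commitment scheme $\comver(com_{j},(m_{j},mk_{j}))=1$, since $com_{j}=\comcom(m_{j},mk_{j})$.

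\textbf{Conditions 2 and 4 (linear combination).} Next I would verify the evaluation path. The server obtains $tk=h^{2^{Y}}\bmod N$, which equals the client's $tk=h^{b}$ by the same Euler identity, hence the same $k',s'$ and pseudorandom values $z'_{i},w'_{i}$. By correctness of $\ole^{\st +}$, the $(i,j)$-th instance returns $d_{i,j}=e_{i,j}\cdot o_{i,j}+e'_{i,j}=\gamma_{i}\cdot q_{j}\cdot w'_{i}\cdot \pi_{i,j}+z'_{i}+y_{i,j}\bmod \prm$, after substituting $o_{i,j}=w_{i,j}(\pi_{i,j}+z_{i,j})$ and cancelling $w_{i,j}(w_{i,j})^{-1}$ together with the $z_{i,j}$-term by the choice of $e'_{i,j}$. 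Summing over $j$ and using $\sum_{j=1}^{z}y_{i,j}=-\sum_{j=2}^{z}\prf(i,f_{j})+\sum_{j=2}^{z}\prf(i,f_{j})=0$ gives $g_{i}=w'_{i}\gamma_{i}\sum_{j=1}^{z}q_{j}\pi_{i,j}+z'_{i}$. Removing the blinding factors yields $\theta_{i}=\gamma_{i}\sum_{j=1}^{z}q_{j}\pi_{i,j}$, which are the point-value pairs of $\bm\theta(x)=(x-\rt)\sum_{j=1}^{z}q_{j}(x+m_{j})\bmod\prm$, a degree-two polynomial; three coordinates therefore suffice to interpolate it. Its constant term is $-\rt\cdot\sum_{j}q_{j}m_{j}$, so $m=cons\cdot(-\rt)^{-1}=\sum_{j}q_{j}m_{j}$, proving Condition~2; moreover the honest root $\rt$ lies in the extracted root set $R$ and passes $\comver(com',(\rt,tk))=1$, so the valid root is identified. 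Condition~4 then follows because the verifier recomputes the very same $\bm\theta$, hence $\bm\theta(\rt)=0$ and $res'=\sum_{j}q_{j}m_{j}=m$, while $\comver(com',(\rt,tk))=1$ by correctness of the commitment.

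\textbf{Main obstacle.} The routine part is the algebra; the two points requiring care are (i) the consistency of the chained base derivation --- making explicit that the server's repeated-squaring chain and the client's $\phi(N)$-reduced exponentiation agree via Euler's theorem, and that determinism of \prf propagates this equality along the whole chain $j=1,\ldots,z$ --- and (ii) the bookkeeping of the $\ole^{\st +}$ re-encoding, namely checking that the inputs $e_{i,j},e'_{i,j},e''_{i,j}$ are chosen exactly so that, after summation, the $z_{i,j}$-masks disappear, the zero-sum $y_{i,j}$-masks telescope to zero, and a single clean copy of $z'_{i}$ survives. I expect (ii) to be the most delicate step, but it is a direct calculation mirroring the corresponding argument for \tf.
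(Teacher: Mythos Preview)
Your proposal is correct and follows essentially the same approach as the paper: you verify each of the four completeness conditions in turn by invoking the determinism of \prf and repeated squaring, the correctness of $\ole^{\st +}$ and of the commitment scheme, and the uniqueness of polynomial interpolation. Your write-up is more explicit in places (e.g., invoking Euler's theorem for the exponentiation identity and spelling out the telescoping of the $y_{i,j}$), but the structure, order of cases, and the key observations mirror the paper's own argument.
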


We start by addressing Case \ref{completeness-of-single-puzzle}, which involves solving a single puzzle. Due to the correctness and deterministic nature of the original TLP \cite{Rivest:1996:TPT:888615}, particularly through repeated squaring, a server can consistently derive the master key  $mk_{\st j}$ through a fixed number of repeated squaring.  Furthermore, the correctness and deterministic property of \prf ensures that the server can extract $r_{\st j}$ and subsequently compute keys $k_{\st j}, s_{\st j}$ and blinding factors $z_{\st i, j}$ and  $w_{\st i, j}$. Given these blinding factors, the server can remove them from each $o_{\st i,j}$, resulting in a set of $y$-coordinates. Due to the correctness (especially deterministic nature) of interpolation algorithms, such as Lagrange interpolation, the server can recover the identical polynomial that the client initially constructed, i.e., polynomial $ \bm{\pi}_{\st j}(x) = x + m_{\st j} \bmod \prm$. Given, each polynomial $ \bm{\pi}_{\st j}(x)$, the server can easily recover its constant term $m_{\st j}$, which is the solution to the related puzzle. 

Case \ref{completeness-of-linear-comb-puzzle} focuses on the correctness of solving a puzzle related to the linear combination. Similar to the above case, due to the correctness and deterministic nature of repeated squaring and \prf, the server can find the temporary key $tk$ and accordingly discover keys $k'$ and $s'$. These keys allow the server to remove the blinding factors from each masked $y$-coordinate $g_{\st i}$, yielding three $y$-coordinates $\theta_{\st 1},\theta_{\st 2},$ and $\theta_{\st 3}$. Because the interpolation algorithm is deterministic, the server will recover a polynomial of the form $\bm\theta(x) =(x-\rt)\cdot \sum\limits_{\st j=1}^{\st z}q_{\st j}\cdot (x+m_{\st j}) \bmod \prm$. The main reasons that polynomial $\bm\theta(x)$ maintains this form are (1) the correctness of $\ole^{\st +}$, (2) the properties of polynomial arithmetic, as explained in Section \ref{sec:Polynomial-Representation-of-Message}, and (3) the product of multiple polynomials preserves each individual polynomial's roots. Given $\bm\theta(x)$, one can easily retrieve its constant term and multiply it by  $(-\rt)^{\st -1}$ which yields the linear combination of solutions: $\sum\limits_{\st j=1}^{\st z} q_{\st j}\cdot m_{\st j}$. 

We proceed to Case \ref{correctness-verify-single-puzzle}, which pertains to the correctness of verification for a single puzzle. Building on the argument presented in Case \ref{completeness-of-single-puzzle}, the server can always retrieve the master key $mk_{\st j}$ and the related solution $m_{\st j}$. Assuming the commitment verification algorithm is correct, an honest server's proof  $(mk_{\st j}, m_{\st j})$ is always accepted by an honest verifier. 

Case \ref{completeness-verification-for-HLC} considers the correctness of verification for the linear combination. As discussed in Case \ref{completeness-of-linear-comb-puzzle}, an honest server can always find the temporary key $tk$ and the keys derived from it $k'$ and $s'$.  Given these keys, it can find the root $\rt$. Due to the correctness of the commitment's verification algorithm, proof  $(\rt, tk)$ is always accepted by an honest verifier. Moreover, due to the correctness and deterministic nature of \prf, a verifier can derive from $tk$ the same keys $(k',s')$ as the client used to blind its $y$-coordinates. These keys allow the verifier to unblind the $y$-coordinates to obtain $\theta_{\st 1}, \theta_{\st 2}, $ and $\theta_{\st 3}$. Because the interpolation algorithm is deterministic, the verifier will recover a polynomial of the form $\bm\theta(x) =(x-\rt)\cdot \sum\limits_{\st j=1}^{\st z}q_{\st j}\cdot (x+m_{\st j}) \bmod \prm$. Also, evaluating polynomial $\bm\theta(x)$ at $\rt$ will always result in $0$, because $\rt$ is a root of $\bm\theta(x)$. 
Given $\rt$ and the constant term $t$ of $\bm\theta(x)$, the verifier can always extract the linear combination of messages $-t\cdot \rt^{\st -1}\bmod \prm= \sum\limits_{\st j=1}^{\st z}q_{\st j}\cdot m_{\st j}$, which will be equal to the result $\sum\limits_{\st j=1}^{\st z} q_{\st j}\cdot m_{\st j}$ that the prover sends.  
\hfill$\blacksquare$

\begin{lemma}\label{lemma::efficiency}
 \mhtlp satisfies efficiency, regarding Definition \ref{def:efficiency-vh-tlp}.
\end{lemma}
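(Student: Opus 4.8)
The plan is to verify the four requirements of Definition~\ref{def:efficiency-vh-tlp} one at a time, in each case reducing the claim to a bookkeeping argument that counts modular squarings, \prf\ invocations, field operations, polynomial interpolations/root-findings, and $\ole^{\st +}$ invocations performed by the relevant algorithm. The organising observation is that the only operations whose cost grows with a time parameter are (a) the chains of sequential modular squarings and (b) the fixed-exponent modular exponentiations; and the point of the RSA trapdoor is that the client, knowing $\phi(N)$, replaces a chain of $T$ squarings by a single $\log_{\st 2}(T)$-bit exponentiation modulo $\phi(N)$ followed by one modular exponentiation modulo $N$. This single fact drives Cases~\ref{efficiency-requirement-genpuz} and~\ref{efficiency-requirement-Faster-puzzles-evaluation}, and the absence of the trapdoor on the server side drives Cases~\ref{efficiency-multi-instance} and~\ref{efficiency-Polynomial-time-solving}.

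For the \emph{multi-instance} requirement (Case~\ref{efficiency-multi-instance}), I would first note that in a traditional scheme each $j$-th puzzle, which must remain hidden for $\Delta_{\st j}=\sum_{\st i=1}^{\st j}\bar\Delta_{\st i}$ seconds, is solved by its own run of $\mxsqr\cdot\Delta_{\st j}$ sequential squarings, so $\Psi_{\st trad}(z)=\mxsqr\cdot\sum_{\st j=1}^{\st z}\Delta_{\st j}$ up to an additive $O(z\cdot poly(\lambda))$ term for the non-squaring steps (in the definition's notation this is $z\cdot\Psi_{\st single}$ in the special case of uniformly spaced release times). In \mhtlp, by contrast, the chaining of puzzles used in Case~\ref{single-client-solving-only-one-puzzle} of $\solv$ lets \srv\ solve all $z$ puzzles with one uninterrupted run of $\mxsqr\cdot\sum_{\st j=1}^{\st z}\bar\Delta_{\st j}=\mxsqr\cdot\Delta_{\st z}$ squarings, plus $O(z)$ extra \prf\ evaluations (to regenerate each base $r_{\st j}$ and the keys $k_{\st j},s_{\st j}$) and $O(z)$ interpolations of degree-one polynomials, i.e.\ $\Psi_{\st multi}(z)=\mxsqr\cdot\Delta_{\st z}+O(z\cdot poly(\lambda))$. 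Subtracting, $\Psi_{\st trad}(z)-\Psi_{\st multi}(z)=\mxsqr\cdot\sum_{\st j=1}^{\st z-1}\Delta_{\st j}-O(z\cdot poly(\lambda))$, which for the relevant parameter regime is bounded below by a fixed polynomial $poly(z,\mxsqr,\Delta_{\st 1},\ldots,\Delta_{\st z})$, as required.

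For \emph{polynomial-time solving} (Case~\ref{efficiency-Polynomial-time-solving}) I would bound $\solv$ in both modes: with $\cmd=\ep$ it performs $Y=\Delta\cdot\mxsqr$ squarings to recover $tk$, a constant number of \prf\ calls, one interpolation and one root-finding of a degree-two polynomial (cost $poly(\lambda)$), and two commitment verifications; with $\cmd=\scp$ it performs $\sum_{\st j=1}^{\st z}\mxsqr\cdot\bar\Delta_{\st j}\le z\cdot T_{\st max}$ squarings and $O(z)$ \prf/interpolation/commitment steps. In both cases the total is at most $\hat{poly}(z,T_{\st max},\lambda)$ since $Y\le T_{\st max}$. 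For \emph{faster puzzle generation} (Case~\ref{efficiency-requirement-genpuz}), $\pgen$ for each $j$ computes $a_{\st j}=2^{\st T_{\st j}}\bmod\phi(N)$ with $O(\log T_{\st j})$ modular multiplications, then $mk_{\st j}=r_{\st j}^{\st a_{\st j}}\bmod N$ with $O(\lambda)$ multiplications, plus $O(1)$ \prf\ calls, a degree-one polynomial encoding, three encryptions and one commitment; summing over $j$ gives $O\big(z\cdot(\log T_{\st max}+poly(\lambda))\big)=poly'(z,\log(T_{\st max}),\lambda)$, which is logarithmic in $T_{\st max}$ as demanded. For \emph{faster puzzle evaluation} (Case~\ref{efficiency-requirement-Faster-puzzles-evaluation}), $\eval$ has the client compute $b=2^{\st Y}\bmod\phi(N)$ and $tk=h^{\st b}\bmod N$ in $O(\log Y+\lambda)$ multiplications, derive $O(z)$ \prf\ values, run $3z$ invocations of $\ole^{\st +}$ (each $O(1)$ and dominated by symmetric-key operations, per Section~\ref{sec::OLE-plus}), and produce one degree-one root polynomial and one commitment, while \srv\ only sums $O(z)$ field elements; since $\log Y\le\log T$ and the linear-in-$z$ dependence is subsumed by $\|\func\big((q_{\st 1},m_{\st 1}),\ldots,(q_{\st z},m_{\st z})\big)\|$ (which describes a length-$z$ input), the running time is $poly''\big(\log(T),\lambda,\func(\ldots)\big)$.

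The main obstacle is Case~\ref{efficiency-multi-instance}: it is the only sub-claim that is not purely a ``count the operations'' exercise, because one must (i) fix precisely what the traditional baseline does so that the comparison is well defined, and (ii) argue that the $O(z\cdot poly(\lambda))$ overhead introduced by the chaining machinery is genuinely dominated by the $\mxsqr\cdot\sum_{\st j=1}^{\st z-1}\Delta_{\st j}$ saving, so that the difference $\Psi_{\st trad}(z)-\Psi_{\st multi}(z)$ remains bounded below by a fixed polynomial in the stated quantities rather than being swallowed by lower-order terms. The remaining three cases are routine once one records that every algorithm other than $\solv$ touches a time parameter only through a $\log$-sized exponent, thanks to the $\phi(N)$ trapdoor, and that $\ole^{\st +}$ has constant overhead.
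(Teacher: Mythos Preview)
Your proposal is correct and follows essentially the same approach as the paper: a case-by-case bookkeeping of modular squarings, exponentiations, \prf\ calls, interpolations, and $\ole^{\st +}$ invocations, driven by the observation that only the server's squaring chain grows with $T$ while the client's trapdoor reduces this to $\log T$. The paper's own argument is in fact terser than yours in Case~\ref{efficiency-multi-instance}: it simply records $\Psi_{\st trad}(z)-\Psi_{\st multi}(z)=\mxsqr\cdot\sum_{\st j=1}^{\st z-1}\Delta_{\st j}$ and declares this a fixed polynomial, without discussing the $O(z\cdot poly(\lambda))$ lower-order terms you flag as the ``main obstacle''; so your treatment is, if anything, more careful there.
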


We will initially focus on Condition \ref{efficiency-multi-instance}: multi-instance. The time complexity of (traditional and multi-instance) TLPs is dominated by the number of modular squaring. When a traditional TLP encounters $z$ instances of a puzzle at once, it must deal with each puzzle individually. Therefore, in this setting, its time complexity is $\Psi_{\st trad}(z)= \mxsqr\cdot \sum\limits_{\st j=1}^{\st z}\Delta_{\st j}$. However, in \mhtlp, the puzzles are solved sequentially, resulting in the time complexity of $\Psi_{\st multi}(z)= \mxsqr\cdot\big(\Delta_{\st 1} + \sum\limits_{\st j=2}^{\st z}(\Delta_{\st j}-\Delta_{\st j-1})\big)$.  Therefore, the 
difference between the time complexity of the TLPs in these two settings is:  
\begin{equation*}
\begin{split}
\Psi_{\st trad}(z)-\Psi_{\st multi}(z)&=
 \mxsqr\cdot\big((\sum\limits_{\st j=1}^{\st z}\Delta_{\st j})-(\Delta_{\st 1} + \sum\limits_{\st j=2}^{\st z}(\Delta_{\st j}-\Delta_{\st j-1})\big)\\
&=\mxsqr\cdot\sum\limits_{\st j=1}^{\st z-1}\Delta_{\st j}
\end{split}
\end{equation*}

Thus, $\Psi_{\st trad}(z)-\Psi_{\st multi}(z)= {poly}(z, \mxsqr, \Delta_{\st 1},\ldots, \Delta_{\st z})$, for a fixed polynomial $poly$, meeting the criteria set out in Condition \ref{efficiency-multi-instance}. 

We proceed to Condition \ref{efficiency-Polynomial-time-solving}: polynomial-time solving. The core primitive upon which \mhtlp and in particular the algorithm $\solv()$ relies to solve a puzzle is the standard sequential squaring. The complexity of $\solv()$ is 
$\mxsqr\cdot\big(\Delta_{\st 1} + \sum\limits_{\st j=2}^{\st z}(\Delta_{\st j}-\Delta_{\st j-1})\big)$, which itself can be represented as $\hat{poly}(z, T_{\st max}, \log(N))$, where $\hat{poly}$ is a fixed polynomial and $T_{\st max}=\mxsqr\cdot \Delta_{\st z}$.

We move on to Condition \ref{efficiency-requirement-genpuz}: faster puzzle generation property. The main operation in the algorithm $\pgen()$  that generates puzzles is generating each value $mk_{\st j} = r_{\st j}^{\st a_{\st j}}\bmod N$, where $a_{\st j}=2^{\st T_{\st j}}\bmod \phi(N)$. The complexity of generating each $a_{\st j}$ is $O(\log_{\st 2}(T)\cdot \log_{\st 2}^{\st 2}(\phi(N)))\approx O(\log_{\st 2}(T)\cdot \log_{\st 2}^{\st 2}(N))$, while the complexity of generating each $mk_{\st j}$ is $O(log_{\st 2}(a_{\st j})\cdot \log_{\st 2}^{\st 2}(N))\approx O(\log_{\st 2}(N)\cdot log_{\st 2}^{\st 2}(N))$. Thus, the total complexity is $O((\log_{\st 2}(T)+\log_{\st 2}(N))\cdot z\cdot \log_{\st 2}^{\st 2}(N))$, which can be represented as $poly'(z, \log(T), \log(N))$, where $poly'$ is a fixed polynomial and $T$ is the maximum time paramter.

Next, we turn our attention to Condition \ref{efficiency-requirement-Faster-puzzles-evaluation}: faster puzzle evaluation. The $\eval()$ algorithm involves generating a temporary key $tk=h^{\st b}\bmod (N)$, where $b= 2^{\st Y}\bmod \phi(N)$. Based on the above analysis, these operations' total complexity is $O((\log_{\st 2}(T)+\log_{\st 2}(N))\cdot \log_{\st 2}^{\st 2}(N))$, where $T=\Delta\cdot \mxsqr$. The $\eval()$ algorithm also involves operations to compute a linear combination of solutions $m_{\st 1},\ldots, m_{\st z}$ using the coefficients $q_{\st 1}, \ldots, q_{\st z}$ to realize the functionality $\func$. The operation to complete the linear combination is linear with the total number of puzzles. Hence, the complexity of  $\eval()$ can be represented as $poly''(\log(T), \log(N), \func \big((q_{\st 1}, m_{\st 1}),$ $ \ldots, (q_{\st z}, m_{\st z}))$, for a fixed polynomial $poly''$. 
\hfill$\blacksquare$

\begin{lemma}\label{lemma::compactness}
 \mhtlp satisfies compactness, regarding Definition \ref{def:compactness-vh-tlp}. 
\end{lemma}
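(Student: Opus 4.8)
The plan is to show that the puzzle $\vec{g}$ returned by $\eval()$ has a bit-size that is a fixed polynomial in the security parameter and, in particular, does not grow with the number $z$ of puzzles being combined nor with the complexity $O(\func)$ of the linear-combination functionality. Since Definition \ref{def:compactness-vh-tlp} only constrains $||\vec{g}||$, this is the sole object to bound.

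First I would recall from the \textbf{Linear Combination} phase of the construction (step \ref{single-client::publish-encrypted-solutions} of \ref{phase::multi-instance-TF-Linear-Combination}) that \srv publishes $\vec{g}=[g_{\st 1}, g_{\st 2}, g_{\st 3}]$, and that by step \ref{sinlge-client::Computing-encrypted-linear-combination-} each coordinate $g_{\st i}=\sum\limits_{\st j=1}^{\st z} d_{\st i, j}\bmod \prm$ is a sum, reduced modulo $\prm$, of the outputs $d_{\st i, j}$ of the invoked $\ole^{\st +}$ instances. Since every $d_{\st i, j}\in\mathbb{F}_{\st \prm}$, each $g_{\st i}\in\mathbb{F}_{\st \prm}$, so $||g_{\st i}||=\log_{\st 2}(\prm)$ and hence $||\vec{g}||=3\log_{\st 2}(\prm)$, which is $poly(\lambda)$ since $\log_{\st 2}(\prm)$ is (a small constant multiple of) the security parameter.

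Next I would argue why three coordinates always suffice, regardless of how large $z$ is; this is the one point worth spelling out rather than routine bookkeeping of field-element sizes. The argument is the degree bound already built into the design of \mhtlp: each solution polynomial $\bm{\pi}_{\st j}(x)=x+m_{\st j}$ has degree one, so the coefficient-weighted sum $\sum\limits_{\st j=1}^{\st z} q_{\st j}\cdot(x+m_{\st j})$ still has degree at most one, and multiplying it by the degree-one polynomial $\bm{\gamma}(x)=x-\rt$ yields a polynomial $\bm{\theta}(x)$ of degree at most two; a degree-two polynomial is uniquely determined by its three point-value pairs over $\vec{x}=[x_{\st 1}, x_{\st 2}, x_{\st 3}]$. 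Consequently $\eval()$ never needs to produce more than three $y$-coordinates, independently of $z$, of the coefficients $q_{\st 1},\ldots,q_{\st z}$, of the messages $m_{\st 1},\ldots,m_{\st z}$, and therefore of $O(\func)$.

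Finally, to match the exact form required in Definition \ref{def:compactness-vh-tlp}, I would observe that $||\func\big((q_{\st 1}, m_{\st 1}),\ldots,(q_{\st z}, m_{\st z})\big)||$ is the bit-size of the single field element $\sum\limits_{\st j=1}^{\st z} q_{\st j}\cdot m_{\st j}\bmod \prm$, which is at most $\log_{\st 2}(\prm)$, so $||\vec{g}||=3\log_{\st 2}(\prm)=poly\big(\lambda, ||\func((q_{\st 1}, m_{\st 1}),\ldots,(q_{\st z}, m_{\st z}))||\big)$, as desired. I do not anticipate a genuine obstacle: the only substantive step is the degree argument ensuring a constant number of coordinates as $z$ grows, and everything else is a direct size count of elements of $\mathbb{F}_{\st \prm}$.
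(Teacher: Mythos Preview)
Your proposal is correct and follows essentially the same approach as the paper: both arguments observe that $\vec{g}=[g_{\st 1},g_{\st 2},g_{\st 3}]$ consists of exactly three elements of $\mathbb{F}_{\st \prm}$, each of bit-size $\log_{\st 2}(\prm)$, yielding $||\vec{g}||=poly(\lambda,||\func(\cdot)||)$. Your additional degree argument explaining \emph{why} three coordinates suffice is a helpful elaboration, but the paper's own proof omits it and simply reads the constant output dimension directly off the construction.
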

The algorithm $\eval()$ outputs a vector of three elements as a puzzle $\vec{g}=[g_{\st 1}, g_{\st 2},  g_{\st 3}]$ along with a small set of public parameters $\vec{pp}^{\st(\text{Evl})}$. The bit size of each element  $g_{\st i}$ of the puzzle vector $\vec{g}$ is $\log_{\st 2}(\prm)$. Thus, the bit size of $\vec{g}$ can be represented as $||\vec{g}||= poly\Big(\log(\prm), ||\func \big((q_{\st 1}, m_{\st 1}),\ldots,(q_{\st z}, m_{\st z}) \big)||\Big)$, for a fixed polynomial $poly$. 
\hfill$\blacksquare$

This concludes the proof of Theorem \ref{theo:completeness-efficiency-compactness-of-MI-VH-TLP} as we have proved the completeness (Lemma \ref{lemma::completeness}), efficiency (Lemma \ref{lemma::efficiency}), and compactness (Lemma \ref{lemma::compactness}). 
\hfill$\square$
\end{proof}

\section{Multi-Instance Multi-Client Partially Homomorphic TLP}

In this section, we present Multi-Instance Multi-Client Partially Homomorphic TLP (\mmhtlp), a protocol that can be considered as a generalization of \mhtlp, presented in Section \ref{sec::Multi-Instance-TF}.  \mmhtlp is built upon \mhtlp and the \tf introduced in \cite{tempora-fusion}. It offers the features of both schemes within one unified protocol. \mmhtlp will (i) allow a client to generate multi-puzzles such that the server can solve them sequentially, (ii) enable the server to homomorphically compute a linear combination of puzzles of a single client, (iii) allow the server to homomorphically compute a linear combination of puzzles of different clients, and (iv) enable anyone to verify the correctness of each puzzle's solution and computations' outputs. 

\subsection{An Overview}

At a high level, the protocol works as follows. Initially,  \srv generates and publishes a set of public parameters, including vector $\vec{x}$ and a sufficiently large prime number $\prm$. Each client independently generates its secret and public keys. It publishes the public key. In the puzzle generation phase, each client, possessing a set of solutions, creates puzzles for these solutions using the chaining technique described in Section \ref{sec::Multi-Instance-TF}. Each client then publishes the puzzles along with some public parameters.

To enable \srv to learn a homomorphic linear combination of messages (encoded into the published puzzles) belonging to a single client, the client engages with \srv through the interactive algorithm $\eval_{\st sc}()$. Following the execution of this algorithm, the client publishes a set of public parameters, and \srv releases a puzzle encoding the computation result. 

To facilitate \srv learning a homomorphic linear combination of messages, where each message (encoded into a published puzzle) originates from a different client, the clients interact with \srv using algorithm $\eval_{\st mc}()$. Upon completing this algorithm, the clients publish a set of public parameters, and \srv publishes a puzzle encoding the computation result.  
After a certain period, \srv solves a puzzle, related to (i) the linear combination of a single client's solutions, (ii) the linear combination of multiple clients' solutions, or (iii) a single client's solution. \srv then publishes the solution and the corresponding proof. Given the public parameters and the solution, anyone can verify the proof.  Figure \ref{fig:MMH-TLP-workflow} illustrates the \mmhtlp workflow.

\vspace{-2mm}
\begin{figure}[h]
    \centering
    \includegraphics[width=6.6cm]{MMH-TLP.pdf}
    \caption{\mmhtlp Workflow Overview.}\label{fig:MMH-TLP-workflow}
\end{figure}
%\vspace{-2mm}

Regarding system design, \mmhtlp differs from \mhtlp (and \tf) in several ways. We briefly outline the differences.  

\begin{enumerate}%[label=$\bullet$]

%\item  \textit{\mmhtlp Versus  \mhtlp}:  

\item In \mmhtlp, the server needs to generate \cor $x$-coordinates (instead of generating only $3$ $x$-coordinates in \mhtlp) and each client should use these \cor $x$-coordinates, for the following reason. In \mmhtlp, each client's outsourced polynomial (that represents its puzzle) is of degree 1. During Phase \ref{multi-client-multi-instance-tf-LinearCombination-phase-x} (Linear Combination for Multiple Clients), this polynomial is multiplied by $\tl$ polynomials each representing a random root and is of degree $1$. The resulting polynomial will have degree $\tl+1$. Therefore, at least $\cor=\tl+2$ $(y,x)$-coordinate pairs are needed to interpolate a polynomial encoding a linear combination of the solutions. 

\item\label{diff::multi-vs-single--two-linear-comb} There will be two different algorithms for the linear combination, one algorithm, called $\eval_{\st sc}$, to perform a linear combination of a single client's puzzles, and another one, called $\eval_{\st mc}$, to perform a linear combination of  $n$ different clients' puzzles.

\item During the linear combination of different clients' puzzles,  $\eval_{\st mc}()$ takes a new input $\id_{\st \prt}$ for each client $\prtt_{\st \prt}$. 
%requires each client $\prtt_{\st \prt}$ to take a new input $\id_{\st \prt}$. 
This approach allows each $\prtt_{\st \prt}$ to specify which one of its outsourced puzzles must be used as an input to the computation.

\item\label{diff::multi-vs-single--new-inputs-three-cases} Algorithms $\solv()$ and $\ver()$ take a new input string $\hat{\cmd}\in\{\singclient, \mulclient\}$ that specifies whether the encrypted linear combination $\vec{g}$ is the result of a homomorphic linear combination of messages belonging to a single client or multiple clients. Moreover, now, these two algorithms consider three different cases (rather than two); namely, (i) when solving or verifying a puzzle related to the linear combination of messages where all messages belong to the same client, (ii) when solving or verifying a puzzle related to the linear combination of messages, where each message belong to a different client, and (iii) when solving or verifying each client's single puzzle.

\end{enumerate}

 \mmhtlp and \tf differ from  aspects \ref{diff::multi-vs-single--two-linear-comb}--\ref{diff::multi-vs-single--new-inputs-three-cases}  as well. 

%There are a set of differences. In this setting, the server needs to generate \cor $x$-coordinates (instead of $3$ $x$-coordinates) and each client should use these \cor $x$-coordinates. Also, there will be two different phases for the linear combination, one to perform a linear combination of a single client's puzzles (as described in Phase \ref{phase::multi-instance-TF-Linear-Combination}, Page \pageref{phase::multi-instance-TF-Linear-Combination}, of Multi-Instance \tf), and another one to perform a linear combination of  $n$ different clients' puzzles (as explained in Phase \ref{tf-LinearCombination-phase}, Page \pageref{tf-LinearCombination-phase}, of \tf).
%
%\textcolor{blue}{ 
%%
%During the linear combination, we require $\eval()$ for each client  $\prtt_{\st \prt}$ to take a new input $\id_{\st \prt}$ to allow the client to specify which one of its outsourced puzzles must be used as an input to the computation. Moreover, $\solv()$ takes a string $\hat{\cmd}\in\{\singclient, \mulclient\}$ as another input that specifies whether $\vec{g}$ is the result of a homomorphic linear combination of messages belonging to a single client or multiple clients.}

\subsection{Detailed Construction}

%!TEX root = main.tex

In this section, we present a detailed description of  \mmhtlp.

\begin{enumerate}

\item\label{multi-instance-multi-client-::Setup-server} \underline{\textbf{Setup}}. $\ssetup(1^{\st \lambda}, \tl, t)\rightarrow (., pk_{\st \srv})$

 The server \srv (or any party) only once takes the following steps:

\begin{enumerate}

\item \underline{\textit{Setting a field's parameter}}: generates a sufficiently large prime number $\prm$, where $\log_{\st 2}(\prm)$ is a security parameter, e.g.,  $\log_{\st 2}(\prm)\geq 128$.

\item  \underline{\textit{Generating public $x$-coordinates}}: let $\tl$ be the total number of leader clients. It sets $\cor=\tl+2$ and $\vec{x}=[x_{\st 1}, \ldots, x_{\st \cor}]$, where $x_{\st i}\neq x_{\st j}$, $x_{\st i}\neq 0$, and $x_{\st i}\notin U$.

\item \underline{\textit{Publishing public parameters}}: publishes $pk_{\st \srv}=(\prm, \vec{x}, t)$. 

\end{enumerate}

\item\label{multi-instance-multi-client-::key-gen} \underline{\textbf{Key Generation}}. $\csetup(1^{\st \lambda})\rightarrow \kp_{\st \prt}$

Each party $\prtt_{\st\prt}$ in $C=\set$ takes the following steps:

\begin{enumerate}

%\item checks the size of $\prm$ and halts if the size is not sufficiently large. 

\item \underline{\textit{Generating RSA public and private keys}}: computes $N_{\st \prt}=\prm_{\st 1}\cdot \prm_{\st 2}$, where $\prm_{\st i}$  is a large randomly chosen prime number, where $\log_{\st 2}(\prm_{\st i})$ is a security parameter, e.g., $\log_{\st 2}(\prm_{\st i})\geq 2048$. Next,  it computes Euler's totient function of $N_{\st \prt}$, as: $\phi(N_{\st \prt})=( \prm_{\st 1}-1)\cdot( \prm_{\st 2}-1)$.

%%%%%

%generates an RSA public and private key pair: $(N_{\st \prt}, \pnp))$. 

\item\label{multi-client::Publishing-public-parameters} \underline{\textit{Publishing public parameters}}: locally keeps secret key $sk_{_{\st\prt}}=\pnp$ and publishes public key $pk_{\st\prt}=N_{\st \prt}$. 
\end{enumerate}

\item\label{multi-instance-multi-client-tf-Puzzle-Generation-phase} \underline{\textbf{Puzzle Generation}}. $\pgen(\vec{m}_{\st \prt}, \kp_{\st \prt}, pk_{\st \srv}, \vec\Delta_{\st \prt}, \mxsqr)\rightarrow(\vec{o}_{\st \prt}, prm_{\st \prt})$

Each client $\prtt_{\st\prt}$ takes the following steps to generate $z$ puzzles for messages $\vec{m}_{\st \prt}=[m_{\st 1, \prt},\ldots, m_{\st z, \prt}]$ and wants \srv to learn each message $m_{\st j, \prt}$ at time $time_{\st j, \prt}\in \vec{time}_{\st \prt}$, where $ \vec{time}_{\st \prt}=[time_{\st 1, \prt},\ldots, time_{\st z, \prt}]$, $\bar\Delta_{\st j, \prt}=time_{\st j, \prt}-time_{\st j-1, \prt}$, $ \vec{\Delta}_{\st \prt}=[\bar\Delta_{\st 1, \prt},\ldots, \bar\Delta_{\st z, \prt}]$, $1\leq j \leq z$, and $1\leq \prt\leq n$.

\begin{enumerate}

\item \underline{\textit{Checking public parameters}}: checks the bit-size of $p$ and  elements of $\vec{x}$ in $pk_{\st \srv}$, to ensure $\log_{\st 2}(p)\geq 128$, $x_{\st i}\neq x_{\st j}, x_{\st i}\neq 0$, and $x_{\st i}\notin U$. If it does not accept the parameters, it returns $(\bot, \bot)$ and does not take further action.

\item\label{multi-client:Generating-secret-keys} \underline{\textit{Generating secret keys}}: generates a vector of master keys $\vec{mk}_{\st \prt}=[mk_{\st \prt, 1}, \ldots, mk_{\st \prt, z}]$ and two secret keys $k_{\st \prt, j}$ and  $s_{\st \prt, j}$ for each master key $mk_{\st \prt, j}$ in $\vec{mk}_{\st \prt}$ as follows. It constructs an empty vector $\vec{mk}_{\st \prt}$. Then, it

\begin{enumerate}

\item\label{multi-client-multi-instance-GC-TLP::compute-a-values} sets each exponent $a_{\st  j, \prt}$. %$\forall j, 1\leq j\leq z: $
  $$\forall j, 1\leq j\leq z: \quad a_{\st j, \prt}=2^{\st T_{\st j}}\bmod \phi(N_{\st \prt})$$
  
  where $T_{\st j}=\mxsqr\cdot\bar\Delta_{\st j, \prt}$ is the total number of squaring needed to decrypt an encrypted solution $m_{\st j, \prt}$ after the previous solution $m_{\st j-1, \prt}$ is revealed. 
  %
%Let $ {\vec{a}}=[a_{\st 1},\ldots, a_{\st z}]$.

\item\label{multi-client-multi-instance-GC-TLP::compute-r-values} computes each master key $mk_{\st j, \prt}$ as follows.  For every $j$, where 
$1\leq j\leq z:$

\begin{itemize}[label=$\bullet$]

\item when $j=1:$

\begin{enumerate}
 \item picks a uniformly random base $r_{\st j}\stackrel{\st\$}\leftarrow \mathbb{Z}_{\st N_{\st \prt}}$.% and then append $r_{\st j}$ to $\bm{r}$.
 \item sets  key $mk_{\st j, \prt}$ as $mk_{\st j, \prt}= r^{\st a_{{\st j}}}_{\st j}\bmod N_{\st \prt}$. 
 \item  appends $mk_{\st j, \prt}$ to $\vec{mk}_{\st \prt}$. 
 \end{enumerate}

\item when $j>1:$

\begin{enumerate}

 \item\label{step::single-client-set-fresh-based-when-j-not-1} derives a fresh base $r_{\st j} $ from the previous master key  as $r_{\st j}=\prf(j||0, mk_{\st  j-1, \prt})$.% and then append $r_{\st j}$ to $\bm{r}$.
  
 \item sets  key $mk_{\st  j, \prt}$ as $mk_{\st  j, \prt}= r^{\st a_{{\st j}}}_{\st j}\bmod N_{\st \prt}$.
 \item appends $mk_{\st j, \prt}$ to $\vec{mk}_{\st \prt}$.
 
 \end{enumerate}
 
\end{itemize}
%It sets public key as $pk := (N, \bm{T}, r_{\st 1}, \prm)$ and secret key as $sk := (\bm{mk}, \phi(N))$. 
%%%%%%%%

\item\label{multi-client-multi-instance:derive-keys-for-puzzle} derives two secret keys $k_{\st j, \prt}$ and  $s_{\st j, \prt}$ from each $mk_{\st j, \prt}$.  %$\forall j, 1\leq j\leq z:$
$$\forall j, 1\leq j\leq z: \quad k_{\st j, \prt}=\prf(1, mk_{\st j, \prt}), \hspace{4mm} s_{\st j, \prt}=\prf(2, mk_{\st j, \prt})$$% =r_{\st \prt}^{\st a_{\st \prt}} \bmod N$$

\end{enumerate}

%%%%%%%%%%%%

%
%$$k'_{\st \prt}=2^{\st T'_{\st \prt}}\bmod \pn,\hspace{4mm} k_{\st \prt}=2^{\st T_{\st \prt}}\bmod \pn$$

\item\label{multi-client-multi-instance:derive-PR-values}\underline{\textit{Generating blinding factors}}:  generates $2\cdot \cor$ pseudorandom values for each $j$, by using $k_{\st j, \prt}$ and  $s_{\st j, \prt}$. 
$$\forall j, 1\leq j\leq z\quad \text{and} \quad \forall  i,  1\leq i \leq \cor: \hspace{4mm}  z_{\st i, j, \prt}=\prf(i, k_{\st j, \prt}), \hspace{4mm} w_{\st i, j, \prt}=\prf(i, s_{\st j, \prt})$$

\item\underline{\textit{Encoding plaintext messages}}:  

\begin{enumerate}
\item\label{multi-client-multi-instance:rep-message-as-poly} represents each plaintext solution $m_{\st j, \prt}$ as a polynomial, such that the polynomial's constant term is the message. %Specifically, it computes polynomial each $\bm{\pi}_{\st j}(x)$ as: 
$$\forall j, 1\leq j\leq z:\quad  \bm{\pi}_{\st j, \prt}(x) = x + m_{\st j, \prt} \bmod \prm$$

\item\label{multi-client-multi-instance:gen-y-coord-of-poly} computes \cor $y$-coordinates of  each $\bm{\pi}_{\st j, \prt}(x)$: 
$$\forall i, 1\leq j \leq z  \hspace{4mm} \text{and} \hspace{4mm} \forall i, 1\leq i \leq \cor: \hspace{4mm}\pi_{\st i, j, \prt} = \bm{\pi}_{\st j, \prt}(x_{\st i}) \bmod \prm$$

\end{enumerate}

\item\label{multi-client-multi-instance:enc-y-coord-of-poly} \underline{\textit{Encrypting the messages}}:  encrypts the $y$-coordinates using the blinding factors as follows: 
$$\forall j, 1\leq j\leq z  \quad \text{and} \quad \forall i, 1\leq i \leq \cor: \hspace{4mm} o_{\st i, j, \prt} = w_{\st i, j, \prt}\cdot(\pi_{\st i, j, \prt} +  z_{\st i, j, \prt}) \bmod \prm$$

\item\label{multi-client:commit-y-coord-of-poly}\underline{\textit{Committing to the message}}:  commits to the plaintext messages:   
$$ com_{\st j,\prt} = \comcom(m_{\st j, \prt}, mk_{\st  j, \prt})$$
%$$\forall i, 1\leq i \leq 3: \hspace{4mm} com_{\st i, \prt} = \h(\pi_{\st i,\prt}|| w_{\st i,\prt})$$

Let $\vec{com}_{\st \prt}=[com_{\st 1, \prt},\ldots, com_{\st z, \prt}]$.

\item\label{tf-Managing-messages}\underline{\textit{Managing messages}}: publishes  $\vec{o}_{\st \prt}=\Big[ [o_{\st 1, 1, \prt}, \ldots, o_{\st \cor, 1, \prt}], \ldots, [o_{\st 1, z, \prt}, \ldots, o_{\st \cor, z, \prt}]\Big]$ and $pp_{\st \prt}=(\vec{com}_{\st \prt}, r_{\st 1}) $. It locally keeps secret parameters $sp_{\st \prt}=\vec{mk}_{\st \prt}$. % $sp=([k_{\st 1},\ldots, k_{\st z}], [s_{\st 1},\ldots, s_{\st z}])$. 
It sets $prm_{\st \prt}=(sp_{\st \prt}, pp_{\st \prt})$. It deletes everything else, including each $m_{\st j, \prt}$ and $\bm{\pi}_{\st j, \prt}(x)$.

\end{enumerate}

\item\label{multi-client-multi-instance-phase::multi-instance-TF-Linear-Combination} {\textbf{Linear Combination for a Single Client}}. $\eval_{\st sc}(\langle \srv(\vec{o}_{\st \prt}, \Delta_{\st \prt},  \mxsqr, {pp}_{\st \prt}, {pk}_{\st \prt}, pk_{\st \srv}), \prtt_{\st \prt}(\Delta_{\st \prt}, \mxsqr, \kp_{\st \prt}, $ $ prm_{\st \prt}, q_{\st 1, \prt}, pk_{\st \srv}), \ldots, \prtt_{\st \prt}(\Delta_{\st \prt}, $ $\mxsqr, $ $\kp_{\st \prt}, prm_{\st \prt}, q_{\st z, \prt},$  $pk_{\st \srv}) \rangle)\rightarrow(\vec{g}_{\st \prt}, {pp}^{\st(\text{Evl})}_{\st \prt})$

In this phase, a client $\prtt_{\st \prt}$ produces certain messages that allow \srv to find a linear combination of its plaintext solutions after time $\Delta_{\st \prt}$.

\begin{enumerate}

\item\underline{\textit{Granting the computation}}: client $\prtt_{\st \prt}$ takes the following steps.

\begin{enumerate}

\item\label{multi-client-multi-instance:Generating-temporary-secret-keys} \underline{\textit{Generating temporary secret keys}}: generates a temporary master key $tk$ and two secret keys $k'$ and $s'$.  It also computes $z-1$ secret key $[f_{\st j},\ldots,f_{\st z}]$.  To generate them, it takes the following steps. It computes an exponent: 
$$b=2^{\st Y}\bmod \phi(N_{\st \prt})$$

where $Y=\Delta_{\st \prt}\cdot \mxsqr$ and $\Delta_{\st \prt}$ is the period after which the solution must be discovered. It selects a  base uniformly at random: $h\stackrel{\st\$}\leftarrow\mathbb{Z}_{\st N_{\st \prt}}$ and then sets a temporary master key $tk$:
$$tk= h^{\st b} \bmod N_{\st \prt}$$

It derives two keys from $tk$: 
$$ k'=\prf(1, tk), \hspace{4mm} s'=\prf(2, tk)$$

It picks fresh $z-1$ random keys $\vec{f}=[f_{\st 2},\ldots, f_{\st z}]$, where $f_{\st j}\stackrel{\st \$}\leftarrow \{0, 1\}^{\st poly(\lambda)}$. %It sends $f_{\st \cl_{\st l}}$ to each $\cl_{\st l}$. %It receives a random key $\bar f_{\st \cl_{\st l}}$ from every other client whose index is in $\vv{idx}$. %this enforces summing all data, so only the combination will be revealed. 

\item\label{multi-client-multi-instance:Generating-temporary-pr-vals} \underline{\textit{Generating blinding factors}}: regenerates its original blinding factors, for each $j$-th puzzle. Specifically, for every $j$,  derives two secret keys $k_{\st j, \prt}$ and  $s_{\st j, \prt}$ from $mk_{\st j, \prt}\in \kp_{\st \prt}$ as follow.  
$$\forall j, 1\leq j\leq z:  \quad k_{\st j, \prt}=\prf(1, mk_{\st j, \prt}), \hspace{4mm} s_{\st j, \prt}=\prf(2, mk_{\st j, \prt})$$% =r_{\st \prt}^{\st a_{\st \prt}} \bmod N$$

%
%$$k'_{\st \prt}=2^{\st T'_{\st \prt}}\bmod \pn,\hspace{4mm} k_{\st \prt}=2^{\st T_{\st \prt}}\bmod \pn$$

It regenerates $z\cdot \cor$ pseudorandom values, by using $k_{\st j, \prt}$ and  $s_{\st j, \prt}$. 
$$\forall j, 1\leq j\leq z  \quad \text{and}\quad \forall  i,  1\leq i \leq \cor: \hspace{4mm}  z_{\st i, j, \prt}=\prf(i, k_{\st j, \prt}), \hspace{4mm} w_{\st i, j, \prt}=\prf(i, s_{\st j, \prt})$$

It also generates new $2\cdot \cor$ pseudorandom values using keys $(k', s')$. 
 $$\forall i, 1\leq i \leq \cor: \hspace{4mm}   z'_{\st i}=\prf(i, k'), \hspace{4mm}   w'_{\st i}=\prf(i, s')$$

It computes new sets of (zero-sum) blinding factors, using each key $f_{\st j}\in\vec{f}$, as follows. $\forall j, 1\leq j \leq z:$

\begin{itemize}[label=$\bullet$]

\item if $j=1$: 
$$\forall i, 1\leq i \leq \cor: \hspace{4mm}y_{\st i, j}=  -\sum\limits_{\st j = 2}^{\st z}\prf(i, f_{\st j})  \bmod \prm$$

\item if $j>1$: 
$$\forall i, 1\leq i \leq \cor: \hspace{4mm}y_{\st i, j}= \prf(i, f_{\st j})         \bmod \prm$$
\end{itemize}
%where $f_{\st j}\in \vec{f}$. 

\item\label{multi-client-multi-instance:Generating-temporary-root} \underline{\textit{Generating $y$-coordinates of a random root}}: picks a random root, $\rt\stackrel{\st\$}\leftarrow\mathbb{F}_{\st \prm}$.  It represents $\rt$ as a polynomial $\bm{\gamma}(x)$, where $\rt$ is the polynomial's root, as setting $\bm{\gamma}(x) =x- \rt \bmod \prm$. It generates \cor $y$-coordinates of  $\bm{\gamma}(x)$: 
$$\forall i, 1\leq i \leq \cor: \hspace{4mm}\gamma_{\st i} = \bm{\gamma}(x_{\st i}) \bmod \prm$$

\item\label{multi-client-multi-instance:Committing-to-the-root}  \underline{\textit{Committing to the root}}: computes $ com'=\comcom(\rt, tk)$.

\item\label{multi-client-multi-instance::eval:ole-detect} \underline{\textit{Re-encoding outsourced puzzle}}: participates in an instance of $\ole^{\st +}$ with \srv, for every $j$-th puzzle and  every $i$, where $1\leq j \leq z$ and $1\leq i \leq \cor$.  The inputs of client $\prtt_{\st \prt}$ to the $i$-th instance of $\ole^{\st +}$ are: 
$$e_{\st i, j} =\gamma_{\st i}\cdot q_{\st j, \prt}\cdot w'_{\st i}\cdot (w_{\st i, j, \prt})^{\st -1}\bmod \prm,\quad\quad e'_{\st i, j} = -(\gamma_{\st i}\cdot q_{\st j, \prt}\cdot w'_{\st i}\cdot  z_{\st i,j, \prt})+z'_{\st i} + y_{\st i, j}\bmod \prm$$ 

The input of \srv to the $(i, j)$-th instance of $\ole^{\st +}$ is the corresponding encrypted $y$-coordinate: $e''_{\st i, j} = o_{\st i, j, \prt}$. Accordingly, the $(i, j)$-th instance of $\ole^{\st +}$ returns to \srv:
\begin{equation*}
\begin{split}
d_{\st i, j} &= e_{\st i, j}\cdot e''_{\st i, j} + e'_{\st i, j}\\ &=\gamma_{\st i}\cdot q_{\st j, \prt}\cdot w'_{\st i}\cdot  \pi_{\st i, j, \prt}+z'_{\st i } + y_{\st i, j}\bmod \prm
\end{split}
\end{equation*}

where $q_{\st j, \prt}$ is a coefficient for $j$-th solution $m_{\st j, \prt}$. If client $\prtt_{\st\prt}$ detects misbehavior during the execution of  $\ole^{\st +}$, it outputs a special symbol $\bot$ and halts.

\item \label{sinlge-client::Publishing-public-parameters--}\underline{\textit{Publishing public parameters}}: publishes  ${pp}^{\st(\text{Evl})}_{\st \prt}=(h, com')$. 

%$h$ and $ com'=\comcom(\rt, tk)$

\end{enumerate}
%%%%%%%%%%%%%%%%

\item\label{multi-client-multi-instance::Computing-encrypted-linear-combination}\underline{\textit{Computing encrypted linear combination}}:  Server \srv sums all of the outputs of $\ole^{\st +}$ instances that it has invoked. $\forall i, 1\leq i \leq \cor: $
\begin{equation*}
\begin{split}
g_{\st i} &= \sum\limits_{\st j=1}^{\st z} d_{\st {i, j}} \bmod \prm\\ & =    (w'_{\st i}\cdot \gamma_{\st i}  
\cdot \sum\limits_{\st j=1}^{\st z}   q_{\st j, \prt}\cdot \pi_{\st i, j, \prt} )+    z'_{\st i}  \bmod  \prm
\end{split}
\end{equation*}

%Note that after summing all $d_{\st i, 1}, \ldots, d_{\st i, z}$, bliding factors $y_{\st i, 1}, \ldots, y_{\st i, z}$ cancel out each other. That is why there is no blinding factor $y_{\st i, j}$ in the above result. 
%Note that in  $g_{\st  i, j}$ there is no $y_{\st  i, j}$, because  $y_{\st  i, j}$ in different $d_{\st i, j}$ have canceled out each other. 

\item\label{single-client-publish-encrypted-solutions}\underline{\textit{Disseminating encrypted result}}:  server \srv publishes $\vec{g}=[g_{\st 1}, \ldots, g_{\st \cor}]$.

\end{enumerate}
%%%***************************

\item\label{multi-client-multi-instance-tf-LinearCombination-phase-x} \underline{\textbf{Linear Combination for Multiple Clients}}. $\eval_{\st mc}(\langle \srv(\vec{o}, \Delta, $ $ \mxsqr, \vec{pp}, \vec{pk}, pk_{\st \srv}), \prtt_{\st 1}(\Delta, \mxsqr, \kp_{\st {\st 1}},$ $ prm_{\st {\st 1}}, q_{\st 1}, pk_{\st \srv}, \id_{\st 1}), \ldots, \prtt_{\st n}(\Delta, $ $\mxsqr, $ $\kp_{\st {\st n}}, prm_{\st {\st n}}, q_{\st n},$  $pk_{\st \srv},\id_{\st n}) \rangle)\rightarrow(\vec{g}, \vec{pp}^{\st(\text{Evl})})$

In this phase, the clients interact with  \srv to compute certain messages that enable \srv to find a linear combination of the clients' plaintext messages after time $\Delta$.

\begin{enumerate}

%\item \underline{\textit{Generating a Shared Key}}: All parties in $C$ agree on a random key $\hat{r}$, e.g., by participating in a coin tossing protocol \ct.

\item\label{multi-client:Randomly-selecting-leaders} \underline{\textit{Randomly selecting leaders}}: all parties in $C=\set $ agree on a random key $\hat{r}$, e.g., through a coin tossing protocol. Each  $\prtt_{\st\prt}$ deterministically identifies indices of $\tl$ leader clients: $\forall j, 1\leq j\leq \tl: idx_{\st j}=\g(j||\hat{r})$. Let \idx be a vector containing these $\tl$ clients.%, i.e., $\forall j, 1\leq j\leq t: \vv{id}[j]\in\{1,\ldots,n\}$.

%\item\underline{\textit{Granting the Computation}}:  Each $\prtt_{\st\prt}$ checks whether its index is in $\vv{idx}$:

\item\label{multi-client::Granting-the-computation}\underline{\textit{Granting the computation by each leader client}}:  each leader client $\prtt_{\st\prt}$  in \idx takes the following steps. 

%\begin{enumerate}

%\item[$\bullet$] \underline{if it is in \idx}, then $\prtt_{\st\prt}$ takes the following steps: 

\begin{enumerate}

\item\label{multi-client:Generating-temporary-secret-keys} \underline{\textit{Generating temporary secret keys}}: generates a temporary master key $tk_{\st\prt}$ and two secret keys $k'_{\st \prt}$ and $s'_{\st \prt}$ for itself. Moreover, it generates a secret key $f_{\st {\st l}}$ for each client. To do that, it takes the following steps. It computes an exponent: 
$$b_{\st \prt}=2^{\st Y}\bmod \pnp$$

where $Y=\Delta\cdot \mxsqr$, $\Delta$ is the period after which the solution representing the linear combination of the messages must be discovered, and $N_{\st \prt} \in \vec{pk}=[N_{\st 1},\ldots, N_{\st n}]$.
 It selects a  base uniformly at random: $h_{\st \prt}\stackrel{\st\$}\leftarrow\mathbb{Z}_{\st N_{\st \subprt}}$ and then sets a temporary master key $tk_{\st\prt}$: 
$$tk_{\st\prt} = h_{\st \prt}^{\st b_{\st\subprt}} \bmod N_{\st \prt}$$

It derives two keys from $tk_{\st \prt}$: 
$$ k'_{\st \prt}=\prf(1, tk_{\st \prt}), \hspace{4mm} s'_{\st \prt}=\prf(2, tk_{\st \prt})$$

It picks a random key $f_{\st {\st l}}$ for each client $\cl_{\st l}$ excluding itself, i.e.,  $f_{\st {\st l}}\stackrel{\st\$}\leftarrow\{0, 1\}^{\st poly(\lambda)}$, where $\cl_{\st l}\in C\setminus\prtt_{\st\prt}$. It sends $f_{\st {\st l}}$ to each $\cl_{\st l}$.

\item\label{multi-client:Generating-temporary-pr-vals} \underline{\textit{Generating temporary blinding factors}}: derives \cor pseudorandom values from $s'_{\st \prt}$: 
 $$\forall i, 1\leq i \leq \cor:  \hspace{4mm}  w'_{\st i,\prt}=\prf(i, s'_{\st \prt})$$

% It sends $s'_{\st \prt}$ to the rest of the clients. 

%\item picks a random key $f_{\st {\st l}}$ for each client $\cl_{\st l}$ excluding itself, i.e.,  $f_{\st {\st l}}\stackrel{\st\$}\leftarrow\{0, 1\}^{\st\lambda}$, where $\cl_{\st l}\in C\setminus\prtt_{\st\prt}$. It sends $f_{\st {\st l}}$ to each $\cl_{\st l}$. 

\item\label{multi-client:Generating-temporary-root}  \underline{\textit{Generating an encrypted random root}}:  picks a random root: $\rt_{\st \prt}\stackrel{\st\$}\leftarrow\mathbb{F}_{\st \prm}$. 
It represents $\rt_{\st \prt}$ as a polynomial  $\bm{\gamma}_{\st \prt}(x) =x- \rt_{\st \prt} \bmod \prm$, such that the polynomial's root is $\rt_{\st \prt}$. 
 
It computes \cor $y$-coordinates of  $\bm{\gamma}_{\st \prt}(x)$: 
$$\forall i, 1\leq i \leq \cor: \hspace{4mm}\gamma_{\st i, {\st \prt}} = \bm{\gamma}_{\st \prt}(x_{\st i}) \bmod \prm$$

%where $x_{\st i}\in X$ and $p \in pk_{\st \srv}$.

It encrypts each $y$-coordinate   $\gamma_{\st i, _{\st \prt}}$ using blinding factor $w'_{\st i,\prt}$: 
$$\forall i, 1\leq i \leq \cor: \hspace{4mm}  \gamma'_{\st i, {\st \prt}}=\gamma_{\st i, {\st \prt}}\cdot w'_{\st i,\prt} \bmod \prm$$

It sends $\vv{ \gamma}'_{\st \prt}=[ \gamma'_{\st 1, {\st \prt}},\ldots,  \gamma'_{\st \cor, {\st \prt}}]$ to the rest of the clients.

%%%%%%%%%%%%%%%%

%%%%%%%%%%

\item\label{multi-client:Generating-blinding-factors}  \underline{\textit{Generating blinding factors}}: receives $(\bar f_{\st {\st l}}, \vv{ \gamma}'_{\st{l}})$ from every other client in \idx.

%obliviously (without having to access plaintext solution) prepares the puzzle held by \srv for the computation. To do that, it takes the following steps. It initially 

%\item $s'_{\st \cl_{l}}$ from every other client whose index is in $\vv{idx}$.

%\item  $u$ from the client with index $\vv{idx}[0]$ (when $\vv{idx}[0]$ is not its own index). 

Let $j=\id_{\st \prt}$ be the index of one of its own outsourced puzzles, which it wants to use as an input for the linear combination. It regenerates its original blinding factors for its $j$-th solution: 

$$k_{\st j, \prt}=\prf(1, mk_{\st j, \prt}),\hspace{4mm} s_{\st j, \prt}=\prf(2, mk_{\st j, \prt})$$

 $$\forall i, 1\leq i \leq \cor: \hspace{4mm}   z_{\st i,j, \prt}=\prf(i, k_{\st j, \prt}), \hspace{4mm}   w_{\st i,j, \prt}=\prf(i, s_{\st j, \prt})$$

 where  $mk_{\st j, \prt}\in {prm}_{\st \prt}$. It also generates new ones: 
 $$\forall i, 1\leq i \leq \cor: \hspace{4mm}   z'_{\st i,\prt}=\prf(i, k'_{\st \prt})$$

It sets  values $v_{\st i, \prt}$ and $y_{\st i, \prt}$ as follows. $\forall i, 1\leq i \leq \cor:$
\begin{equation*}
\begin{split}
 v_{\st i, \prt} &=\gamma'_{\st i,\prt}\cdot  \prod\limits_{\st \forall \cl_{_{\st l}}\in \idx \setminus\prtt_{\st\prt}}
\gamma'_{\st i,{\st l}} \bmod \prm\\
y_{\st i, \prt}&=  -\sum\limits_{\st \forall \cl_{_{\st l}}\in C\setminus\prtt_{\st\prt}}\prf(i, f_{\st {l}}) +  \sum\limits_{\st \forall \cl_{_{\st l}}\in \idx\setminus\prtt_{\st\prt}}\prf(i, \bar{f}_{\st {l}})      \bmod 
\end{split}
\end{equation*}

where $\prtt_{\st\prt}\in \idx$.

\item\label{multi-instance-multi-client-eval:ole-detect} \underline{\textit{Re-encoding outsourced puzzle}}: 
obliviously prepares the puzzle (held by \srv) for the computation. To do that, it participates in an instance of $\ole^{\st +}$ with \srv, for every $i$, where $1\leq i \leq \cor$.  The inputs of $\prtt_{\st\prt}$ to the $i$-th instance of $\ole^{\st +}$ are: 
\begin{equation*}
\begin{split}
e_{\st i, \prt} &=q_{\st \prt}\cdot  v_{\st i, \prt} \cdot (w_{\st i, j, \prt})^{\st -1}\bmod \prm\\
 e'_{\st i, \prt} &= -(q_{\st \prt} \cdot  v_{\st i, \prt}   \cdot  z_{\st i, j, \prt}) + z'_{\st i,\prt} + y_{\st i, \prt}\bmod \prm
\end{split}
\end{equation*}

The input of \srv to $i$-th instance of $\ole^{\st +}$ is $\prtt_{\st\prt}$'s encrypted $y$-coordinate of its $j$-th puzzle: $e''_{\st i, \prt} = o_{\st i,j, \prt}$ (where $o_{\st i,j, \prt}\in \vec{o}$). Accordingly, the $i$-th instance of $\ole^{\st +}$ returns to \srv:
\begin{equation*}
\begin{split}
d_{\st i, \prt}& = e_{\st i, \prt}\cdot e''_{\st i, \prt} + e'_{\st i, \prt}\\ 
&=q_{\st \prt}  \cdot v_{\st i, \prt} \cdot  \pi_{\st i,j, \prt}+z'_{\st i, \prt}+y_{\st i, \prt}\bmod \prm\\ &= q_{\st \prt}\cdot \gamma_{\st i, {\st \prt}}\cdot w'_{\st i, \prt}
\cdot (\prod\limits_{\st \forall \cl_{_{\st l}}\in \idx\setminus\prtt_{\st\prt}}
\gamma_{\st i,{\st l}}\cdot w'_{\st i, {\st l}})
\cdot  \pi_{\st i,j, \prt}+z'_{\st i,\prt}+y_{\st i, \prt}\bmod \prm
\end{split}
\end{equation*}

%Publishing public parameters

where $q_{\st \prt}$ is the party's coefficient. If $\prtt_{\st\prt}$ detects misbehavior during the execution of  $\ole^{\st +}$, it sends a special symbol $\bot$ to all parties and halts.

\item\label{multi-client:Committing-to-the-root} \underline{\textit{Committing to the root}}: computes $com'_{\st \prt}=\comcom(\rt_{\st \prt}, tk_{\st \prt})$.

\item\label{multi-client::Publishing-public-parameters--} \underline{\textit{Publishing public parameters}}: publishes ${pp}_{\st \prt}^{\st(\text{Evl})}=(h_{\st \prt}, com'_{\st \prt}, Y)$.  Note that every leader client $\prtt_{\st\prt} \in \idx$  uses identical $Y$.  Let $\vec{pp}^{\st(\text{Evl})}$ contain all the  triples ${pp}_{\st \prt}^{\st(\text{Evl})}$ published by $\prtt_{\st\prt}$, where $\prtt_{\st\prt} \in \idx$.

%=[{pp}_{\st 1}^{\st(\text{Evl})},\ldots, {pp}_{\st n}^{\st(\text{Evl})}]$. 
%vec{pp}^{\st(\text{Evl})}

\end{enumerate}
%%%%%%%%%%%%%%%%

\item\underline{\textit{Granting the computation by each non-leader client}}: 
each non-leader client $\prtt_{\st\prt}$ takes the following steps.

%\item[$\bullet$]  \underline{if it is not in $\idx$}, then $\prtt_{\st\prt}$ takes the following steps: 

\begin{enumerate}
\item \underline{\textit{Generating blinding factors}}: receives $(\bar{f}_{\st {l}}, \vv{ \gamma}'_{\st {l}})$  from every other leader client  in $\idx$.

As before, let $j=\id_{\st \prt}$ be the index of one of client $\prtt_{\st\prt}$ outsourced puzzles that it wants to use as an input for the linear combination. It regenerates its original blinding factors: 

$$k_{\st j, \prt}=\prf(1, mk_{\st j, \prt}),\hspace{4mm} s_{\st j, \prt}=\prf(2, mk_{\st j, \prt})$$
 $$\forall i, 1\leq i \leq \cor: \hspace{4mm}   z_{\st i,j, \prt}=\prf(i, k_{\st j, \prt}), \hspace{4mm}   w_{\st i,j, \prt}=\prf(i, s_{\st j, \prt})$$

It set  values $v_{\st i, \prt}$ and $y_{\st i, \prt}$ as follows. $\forall i, 1\leq i \leq \cor:$
\begin{equation*}
\begin{split}
 v_{\st i,\prt} &= \prod\limits_{\st \forall \cl_{_{\st l}}\in \idx}
\gamma'_{\st i,{\st l}} \bmod \prm\\
y_{\st i, \prt}&=    \sum\limits_{\st \forall \cl_{_{\st l}}\in \idx}\prf(i, \bar{f}_{\st {l}})         \bmod \prm
\end{split}
\end{equation*}

\item\label{eval:ole-detect-2} \underline{\textit{Re-encoding outsourced puzzle}}: participates in an instance of $\ole^{\st +}$ with the server \srv, for every $i$, where $1\leq i \leq \cor$.  The inputs of $\prtt_{\st\prt}$ to the $i$-th instance of $\ole^{\st +}$ are: 
\begin{equation*}
\begin{split}
e_{\st i, \prt} &= q_{\st \prt}\cdot  v_{\st i, \prt}\cdot   (w_{\st i,j, \prt})^{\st -1}\bmod \prm\\
e'_{\st i, \prt} &= -(q_{\st \prt}\cdot v_{\st i, \prt}\cdot    z_{\st i,j, \prt}) + y_{\st i, \prt}\bmod \prm
\end{split}
\end{equation*}

The input of \srv to the $i$-th instance of $\ole^{\st +}$ is $\prtt_{\st\prt}$'s encrypted $y$-coordinate: $e''_{\st i} = o_{\st i,j, \prt}$. Accordingly, the $i$-th instance of $\ole^{\st +}$ returns to \srv:
\begin{equation*}
\begin{split}
 d_{\st i, \prt}& = e_{\st i, \prt} \cdot e''_{\st i, \prt} +  e'_{\st i, \prt}\\ 
 &=  q_{\st \prt}\cdot v_{\st i, \prt} \cdot \pi_{\st i,j, \prt} + y_{\st i, \prt}\bmod \prm\\ 
 &= q_{\st \prt}\cdot (\prod\limits_{\st \forall \cl_{_{\st l}}\in \idx\setminus\prtt_{\st\prt}}
\gamma_{\st i, {\st l}}\cdot w'_{\st i, {\st l}})
\cdot  \pi_{\st i, j, \prt}+y_{\st i, \prt}\bmod \prm
\end{split}
\end{equation*}

where $q_{\st \prt}$ is the party's coefficient. If $\prtt_{\st\prt}$ detects misbehavior during the execution of  $\ole^{\st +}$, it sends a special symbol $\bot$ to all parties and halts.

%\item publishes $(h_{\st \prt}, N_{\st \prt}, Y, \bar Y)$ and $\big(v_{\st \prt}, \h(u || w)\big)$, where $w = \prod\limits_{\st i = 1}^{\st 2}w'_{\st i, \cl_{\st l}} \cdot  w'_{\st i, \prt}$.  

%\end{enumerate}

\end{enumerate}

\item\label{multi-instance-multi-client-::Computing-encrypted-linear-combination}\underline{\textit{Computing encrypted linear combination}}:  server \srv sums all of the outputs of $\ole^{\st +}$ instances that it has invoked, $\forall i, 1\leq i \leq \cor:$ 
\begin{equation*}
\begin{split}
g_{\st i} &= \sum\limits_{\st \forall \prtt_{\st\prt}\in C} d_{\st {i, \prt}} \bmod \prm\\ &=    (
\prod\limits_{\st \forall \prtt_{\st\prt}\in  \idx} \underbrace{\gamma_{\st i, \prt}\cdot w'_{\st i, \prt}}_{\st v_{\st i, \prt}}
\cdot \sum\limits_{\st \forall \prtt_{\st\prt}\in C}   q_{\st \prt}\cdot \pi_{\st i,j, \prt} )+    \sum\limits_{\st \forall \prtt_{\st\prt}\in \idx}  z'_{\st i,\prt}  \bmod  \prm
\end{split}
\end{equation*}

%Note that in  $g_{\st  i, j}$ does not exist any $y_{\st  i, j}$, because  $y_{\st  i, j}$ in different $d_{\st i, j}$ canceled out each other after they summed up. 

\item\label{multi-client-publish-encrypted-solutions}\underline{\textit{Disseminating encrypted result}}:  server \srv publishes $\vec{g}=[g_{\st 1}, \ldots, g_{\st \cor}]$.

\end{enumerate}

\item\label{multi-instance-multi-client-::Solving-a-Puzzle} \underline{\textbf{Solving a Puzzle}}. $\solv(\vec{o}_{\st \prt}, pp_{\st \prt}, \vec{g}, \vec{pp}^{\st (\text{Evl})}, {pp}_{\st \prt}^{\st (\text{Evl})}, \vec{pk},  pk_{\st \srv},\cmd,\hat{\cmd})\rightarrow(\vec{m}, \vec{\zeta})$

Server \srv takes the following steps.

\begin{steps}[leftmargin=15mm]

%%&&&&

%%******
\item\label{multi-instance-multi-client-solving-linear-combination-x}\hspace{-2mm}. when solving a puzzle corresponding to the linear combination of messages (i.e., when $\cmd=\ep$), where all messages belong to the same client $\prtt_{\st \prt}$, i.e., when $\hat{\cmd}=\singclient$. Note that in this case, $\vec{o}_{\st \prt}$ and $\vec{pp}^{\st (\text{Evl})}$ can be null. 

\begin{enumerate}

\item\label{multi-instance-multi-client-extract-temp-key-x}  \underline{\textit{Finding secret keys}}: 

\begin{enumerate}

\item finds temporary key $tk$, where  $tk = h^{\st 2^{\st Y}} \bmod N_{\st \prt}$, via repeated squaring of $h$ modulo $N_{\st \prt}$, where $h\in {pp}_{\st \prt}^{\st (\text{Evl})}, N_{\st \prt}\in \vec{pk}$.

\item derives two keys from $tk$: 
$$ k'=\prf(1, tk), \hspace{4mm} s'=\prf(2, tk)$$% =r_{\st \prt}^{\st a_{\st \prt}} \bmod N$$

\end{enumerate}

\item\label{multi-instance-multi-client-::server-side-computatoin-removing-pr-vals-x} \underline{\textit{Removing blinding factors}}: removes the blinding factors from $[g_{\st 1}, \ldots, g_{\st \cor}]\in \vec{g}$. 

$\forall i, 1\leq i \leq \cor:$
\begin{equation*}
\begin{split}
\theta_{\st i}&=\underbrace{\big( \prf(i, s')\big)^{\st -1}}_{\st (w'_{\st i})^{\st -1}}\cdot\big(g_{\st i}- \overbrace{\prf(i, k'}^{\st z'_{\st i}})\big) \bmod \prm\\ &=
\gamma_{\st i}  \cdot \sum\limits_{\st j=1}^{\st z}   q_{\st j, \prt}\cdot \pi_{\st i, j, \prt} \bmod \prm
\end{split}
\end{equation*}

\item\label{multi-instance-multi-client--interpolate-poly-x}  \underline{\textit{Extracting a polynomial}}: interpolates a polynomial $\bm{\theta}(x)$, given pairs $(x_{\st 1}, \theta_{\st 1}), \ldots, (x_{\st \cor}, \theta_{\st \cor})$.  Note that $\bm{\theta}(x)$ will have the form: 
$$\bm\theta(x) =(x-\rt)\cdot \sum\limits_{\st j=1}^{\st z}q_{\st j, \prt}\cdot (x+m_{\st j, \prt}) \bmod \prm$$

We can rewrite $\bm\theta(x)$ as: 
$$\bm\theta(x) = \bm\psi(x)-\rt\cdot  \sum\limits_{\st j=1}^{\st z} q_{\st j, \prt}\cdot m_{\st j, \prt}   \bmod \prm$$

where $\bm\psi(x)$ is a polynomial of degree two with constant term being $0$.

\item\label{multi-instance-multi-client-::server-side-Extracting-the-linear-combination-x} \underline{\textit{Extracting the linear combination}}: retrieves the result (i.e., the linear combination of $m_{\st 1, \prt},\ldots, $ $m_{\st z, \prt}$)  from polynomial $\bm\theta(x)$'s constant term: $cons=-\rt\cdot  \sum\limits_{\st j=1}^{\st z} q_{\st j, \prt}\cdot m_{\st j, \prt}$ as follows:
\begin{equation*}
\begin{split}
m &=cons\cdot (-\rt)^{\st -1}\bmod \prm\\ &= \sum\limits_{\st j=1}^{\st z} q_{\st j, \prt}\cdot m_{\st j, \prt}
\end{split}
\end{equation*}

\item\label{multi-instance-multi-client-::Extracting-valid-roots-x} \underline{\textit{Extracting valid roots}}: extracts the root(s) of polynomial $\bm{\theta}(x)$. Let set $R$ contain the extracted roots. It identifies the valid root, by finding a root $\rt$ in $R$, where it can pass the verification of the commitment scheme, i.e., $\comver(com',(\rt, tk))=1$.

\item\label{single-client-publish-lc-proof-x} \underline{\textit{Publishing the result}}: initiates  vectors $\vec{m}$ and $\vec\zeta$. It appends $m$ to  $\vec{m}$ and $(\rt, tk)$ to $\vec{\zeta}$. It publishes $\vec{m}$ and $\vec{\zeta}$.

\end{enumerate}

%%%%%%%%%%**********

%%&&

\item\label{multi-instance-multi-client-solving-linear-combination}\hspace{-2mm}. when solving a puzzle related to the linear combination of messages (i.e., when $\cmd=\ep$), where each message belongs to a different client, i.e., when $\hat{\cmd}=\mulclient$. In this case, $\vec{o}_{\st \prt}$ and ${pp}_{\st \prt}^{\st (\text{Evl})}$ can be null. 

\begin{enumerate}

\item\label{multi-instance-multi-client-find-master-key} \underline{\textit{Finding secret keys}}: for  each leader client $\prtt_{\st\prt}\in \idx$:

%%%%%%%%%%%%%%%%%%%%%%%%%
\begin{enumerate}
\item\label{extract-temp-key} finds $tk_{\st \prt}$ (where  $tk_{\st \prt}=h^{\st 2^{\st Y}}_{\st \prt}\bmod N_{\st \prt}$) through repeated squaring of $h_{\st \prt}$ modulo $N_{\st \prt}$. Note that $(h_{\st \prt}, Y, N_{\st \prt})\in \vec{pp}^{\st (\text{Evl})}$.

\item derives two keys from $tk_{\st \prt}$: 
$$ k'_{\st \prt}=\prf(1, tk_{\st \prt}), \hspace{4mm} s'_{\st \prt}=\prf(2, tk_{\st \prt})$$% =r_{\st \prt}^{\st a_{\st \prt}} \bmod N$$

\end{enumerate}

\item\label{multi-instance-multi-client-::server-side-computatoin-removing-pr-vals} \underline{\textit{Removing blinding factors}}:  removes the blinding factors from $[g_{\st 1}, \ldots, g_{\st \cor}]\in \vec{g}$. 

$\forall i, 1\leq i \leq \cor:$
\begin{equation*}
\begin{split}
\theta_{\st i}&=\big(\prod\limits_{\st \forall \prtt_{\st\prt}\in \idx}\underbrace{\prf(i, s'_{\st \prt})}_{\st w'_{\st i, \prt}}\big)^{\st -1}\cdot\big(g_{\st i}-\sum\limits_{\st \forall \prtt_{\st\prt}\in \idx} \overbrace{\prf(i, k'_{\st \prt}}^{\st z'_{\st i, \prt}})\big) \bmod \prm\\ &=
(\prod\limits_{\st \forall \prtt_{\st\prt}\in\idx} \gamma_{\st i, \prt}) \cdot \sum\limits_{\st \forall \prtt_{\st\prt}\in C}   q_{\st \prt}\cdot \pi_{\st i,j, \prt} \bmod \prm
\end{split}
\end{equation*}

\item\label{multi-instance-multi-client-step::multi-client-interpolate-poly} \underline{\textit{Extracting a polynomial}}: interpolates a polynomial $\bm{\theta}(x)$, given pairs $(x_{\st 1}, \theta_{\st 1}),\ldots, (x_{\st \cor}, \theta_{\st \cor})$.  Polynomial $\bm{\theta}(x) $ will have the following form: 
\begin{equation*}
\bm\theta(x) =\prod\limits_{\st \forall \prtt_{\st\prt}\in  \idx} (x-\rt_{\st \prt})\cdot \sum\limits_{\st \forall \prtt_{\st\prt}\in C}q_{\st \prt}\cdot (x+m_{\st j, \prt}) \bmod \prm
\end{equation*}

Note that $j=\id_{\st \prt}$ and may have different value for different client $\prtt_{\st \prt}$. It is possible to rewrite $\bm\theta(x)$ as: 
$$\bm\theta(x) = \bm\psi(x)+\prod\limits_{\st \forall \prtt_{\st\prt}\in  \idx}(-\rt_{\st \prt})\cdot \sum\limits_{\st \forall\prtt_{\st\prt}\in C}q_{\st \prt}\cdot m_{\st j, \prt}   \bmod \prm$$

with $\bm\psi(x)$ being a polynomial of degree $\tl+1$ that has constant term $0$.

\item\label{multi-instance-multi-client-::server-side-Extracting-the-linear-combination} \underline{\textit{Extracting the linear combination}}: retrieves the final result, that is the linear combination of the messages $m_{\st j, 1},\ldots, m_{\st j, n}$,  from polynomial $\bm\theta(x)$'s constant term: $cons=\prod\limits_{\st \forall \prtt_{\st\prt}\in  \idx}(-\rt_{\st \prt})\cdot\sum\limits_{\st \forall\prtt_{\st\prt}\in C}q_{\st \prt}\cdot m_{\st j, \prt}$ as follows:
\begin{equation*}
\begin{split}
m &=cons\cdot (\prod\limits_{\st \forall \prtt_{\st\prt}\in  \idx}(-\rt_{\st \prt}))^{\st -1}\bmod \prm\\ &= \sum\limits_{\st \forall\prtt_{\st\prt}\in C}q_{\st \prt}\cdot m_{\st j, \prt}
\end{split}
\end{equation*}

where each $j=\id_{\st \prt}$.

%%
%Note that $\bm{\theta}$ would have the following form: 
%%
%$$\bm\theta(x)=(x-u)\cdot \sum\limits_{\st \forall \prt\in C}q_{\st \prt}\cdot (x+m_{\st \prt}) \bmod P = \bm\psi(x)-u\cdot \sum\limits_{\st \forall\prt\in C}q_{\st \prt}\cdot m_{\st \prt}   \bmod P$$

%where $\bm\psi(x)$ is a polynomial of degree $2$ whose constant term is $0$. 

\item\label{multi-instance-multi-client-:Extracting-valid-roots} \underline{\textit{Extracting valid roots}}: retreives the roots of polynomial $\bm{\theta}(x)$. Let set $R$ contain the extracted roots. It identifies the valid roots, by finding every  $\rt_{\st \prt}$ in $R$, such that it passes the commitment's verification: $\comver(com'_{\st \prt},(\rt_{\st \prt}, tk_{\st \prt}))=1$. This check is performed for every $\prtt_{\st\prt}$ in $\idx$. 

%\item retrieves the linear combination of the messages $m_{\st 1},\ldots, m_{\st n}$ from polynomial $\bm\theta(x)$'s constant term: $t=-u\cdot\sum\limits_{\st \forall\prt\in C}q_{\st \prt}\cdot m_{\st \prt}$ as follows:
%
%$$res=-t\cdot u^{\st -1}\bmod P = \sum\limits_{\st \forall\prt\in C}q_{\st \prt}\cdot m_{\st \prt} $$

\item\label{multi-client-publish-lc-proof} \underline{\textit{Publishing the result}}: initiates empty vectors $\vec{m}$ and $\vec{\zeta}$. It appends $m$ to $\vec{m}$. Also, for every $\prtt_{u}$ in  $\idx$, it appends $(\rt_{\st \prt}, tk_{\st \prt})$ to $\vec{\zeta}$. It publishes $\vec{m}$ and $\vec{\zeta}$.

%publishes the solution $m=res$ and the proof  $\zeta=\big\{(\rt_{\st \prt}, tk_{\st \prt})\big\}_{\st \forall \prtt_{u}\in  \idx}$.

%\item\label{publish-lc-proof} publishes $(m, \zeta)$.% and all $t$ pairs $(\rt_{\st \prt}, tk_{\st \prt})$. 

\end{enumerate}

%%**********

%%%%%%%%%%
\item\label{multi-instance-multi-client--solving-only-one-puzzle-x} when solving each $j$-th puzzle $\vec{o}_{\st j}$ in $\vec{o}$ of client \prtt (i.e., when $\cmd=\scp$), server \srv takes the following steps.  Note that in this case, $\vec{pp}^{\st (\text{Evl})}$  and ${pp}_{\st \prt}^{\st (\text{Evl})}$ and can be null. $\forall j, 1\leq j\leq z:$

\begin{enumerate}

\item\label{multi-instance-multi-client-::find-base-x} \underline{\textit{Finding secret bases and keys}}: sets base $r_{\st j}$ and $mk_{\st j, \prt}$ as follows. %For every $j$, $1\leq  j\leq z:$

\begin{itemize}[label=$\bullet$]
\item if $j=1: $ sets the base to $r_{\st 1}$, where $r_{\st 1}\in pp_{\st \prt}$. Then, it finds  $mk_{\st 1,\prt}$ where  $mk_{\st 1,\prt}=r^{\st 2^{\st T_{\st 1}}}_{\st 1}\bmod N_{\st \prt}$, through repeated squaring of $r_{\st 1}$ modulo $N_{\st \prt}$. It initiates  vectors $\vec{m}$ and $\vec\zeta$.

\item if $j>1: $ computes base $r_{\st j}$ as   $r_{\st j}=\prf(j||0, mk_{\st j-1, \prt})$. Next, it  finds  $mk_{\st j, \prt}$ where  $mk_{\st j, \prt}=r^{\st 2^{\st T_{\st j}}}_{\st j}\bmod N_{\st \prt}$, via repeated squaring of $r_{\st j}$ modulo $N_{\st \prt}$.
\end{itemize}

It derive two keys from $mk_{\st j, \prt}$: 
$$ k_{\st j, \prt}=\prf(1, mk_{\st j, \prt}), \hspace{4mm} s_{\st j, \prt}=\prf(2, mk_{\st j, \prt})$$% =r_{\st \prt}^{\st a_{\st \prt}} \bmod N$$

\item\label{multi-instance-multi-client-::Removing-blinding-factors-server-side-x} \underline{\textit{Removing blinding factors}}: re-generates $2\cdot\cor$ pseudorandom values using $k_{\st j, \prt}$ and  $s_{\st j, \prt}$:
$$ \forall i, 1\leq i \leq \cor: \hspace{4mm}  z_{\st i, j, \prt}=\prf(i, k_{\st j, \prt}), \hspace{4mm} w_{\st i, j, \prt}=\prf(i, s_{\st j, \prt})$$

Next, it uses the blinding factors to unblind $\vec{o}_{\st j, \prt} = [o_{\st 1, j, \prt}, \ldots, o_{\st \cor, j, \prt}]$:
$$\forall i, 1\leq i \leq \cor: \hspace{4mm}  \pi_{\st i, j, \prt}  = \big((w_{\st i, j, \prt})^{\st -1}\cdot o_{\st i, j, \prt}\big) -z_{\st i, j, \prt} \bmod \prm$$

%Extracting a polynomial

\item\label{multi-instance-multi-client-::Extracting a polynomial-server-side-x}   \underline{\textit{Extracting a polynomial}}: interpolates a polynomial $\bm{\pi}_{\st j, \prt}(x)$, given pairs $(x_{\st 1}, \pi_{\st 1, j, \prt}), \ldots, (x_{\st \cor}, $ $\pi_{\st \cor, j, \prt})$.

\item\label{single-client-pub-solution-x} \underline{\textit{Publishing the solution}}: considers the constant term of $\bm{\pi}_{\st j, \prt}(x)$ as the plaintext message, $m_{\st j, \prt}$. It appends $(m_{\st j, \prt},j)$ to $\vec{m}$ and $mk_{\st j, \prt}$ to $\vec{\zeta}$. If $j=z$, then it publishes $\vec{m}$ and $\vec{\zeta}$.

%the solution $m=m_{\st j}$  and the proof $\zeta= mk_{\st j}$. 

\end{enumerate}

\end{steps}
%%%%%%%%%%%%%%%

%%**********

%\vec{o}_{\st \prt}, pp_{\st \prt}, \vec{g},
\item\label{multi-client::verification}  \underline{\textbf{Verification}}. $\ver(\vec{m}, \vec{\zeta}, ., pp_{\st \prt}, \vec{g}, \vec{pp}^{\st (\text{Evl})},  {pp}^{\st (\text{Evl})}_{\st \prt}, pk_{\st \srv}, \cmd, \hat{\cmd})\rightarrow \ddot{v}\in\{0,1\}$

A verifier (that can be anyone, not just $\prtt_{\st\prt}\in C$) takes the following steps. 

%\begin{enumerate}[leftmargin=15mm, label= Case \alph*:]

\begin{steps}[leftmargin=15mm]

%maybe we could include the verification of a solution related to linear combination in this phase. 

\item\label{multi-client-multi-instance-case:single-client-many-messages}\hspace{-2mm}. when verifying a solution related to the linear combination of messages (i.e., when $\cmd=\ep$), where all messages belong to the same client $\prtt_{\st \prt}$, i.e., when $\hat{\cmd}=\singclient$.

%%%%%%%%

%%**************

\begin{enumerate}

\item\label{step-single-client-verify-opening-x}  \underline{\textit{Checking the commitment's opening}}: verify the validity of  $(\rt, tk)\in \vec\zeta$ with the help of $com'\in{pp}^{\st (\text{Evl})}_{\st \prt}$. 
$$\comver\big(com', (\rt , tk )\big)\stackrel{\st?}=1$$

If the above check passes, it moves on to the next step. Otherwise, it returns $\ddot{v}=0$ and takes no further action. 

\item\label{sinlge-client:Checking-resulting-polynomial-valid-roots-x} \underline{\textit{Checking the resulting polynomial's valid roots}}: checks if the resulting polynomial contains
 the root $\rt  \in \vec\zeta$, by taking the following steps.

\begin{enumerate}

\item derives two keys from $tk$: 
$$ k'=\prf(1, tk), \hspace{4mm} s' = \prf(2, tk)$$% =r_{\st \prt}^{\st a_{\st \prt}} \bmod N$$

\item\label{sinlge-client:verification-case-1removes the blinding factors-x} removes the blinding factors from $\vec{g}=[g_{\st 1}, \ldots, g_{\st \cor}]$ that were provided by server \srv in step \ref{single-client-publish-encrypted-solutions}. Specifically, for every $i$, 
$ 1\leq i \leq \cor:$
\begin{equation*}
\begin{split}
\theta_{\st i}&=\underbrace{\big( \prf(i, s')\big)^{\st -1}}_{\st (w'_{\st i})^{\st -1}}\cdot\big(g_{\st i}- \overbrace{\prf(i, k'}^{\st z'_{\st i}})\big) \bmod \prm\\ &=
\gamma_{\st i}  \cdot \sum\limits_{\st j=1}^{\st z}   q_{\st j, \prt}\cdot \pi_{\st i, j,  \prt} \bmod \prm
\end{split}
\end{equation*}

\item\label{step-single-client-check-roots-x}  interpolates a polynomial $\bm{\theta}(x)$, given $(x_{\st 1}, \theta_{\st 1}),\ldots, (x_{\st \cor}, \theta_{\st \cor})$.  Note that polynomial $\bm{\theta}(x)$ will have the form: 
\begin{equation*}
\begin{split}
\bm\theta(x)&=(x-\rt)\cdot \sum\limits_{\st j=1}^{\st z} q_{\st j, \prt}\cdot (x+m_{\st j, \prt}) \bmod \prm\\ &= \bm\psi(x)-\rt\cdot \sum\limits_{\st j=1}^{\st z}q_{\st j, \prt}\cdot m_{\st j, \prt}   \bmod \prm
\end{split}
\end{equation*}
where $\bm\psi(x)$ is a polynomial of degree $2$ whose constant term is $0$.

\item\label{single-client::eval-and-check-root-x} checks whether $\rt  \in \vec\zeta$ is a root of $\bm \theta(x)$, i.e.,  $\bm\theta(\rt)\stackrel{\st ?}=0$. It proceeds to the next step if the check passes.  It returns $\ddot{v}=0$ and takes no further action, otherwise.

\end{enumerate}

\item\label{step-sinlge-client-check-res-x} \underline{\textit{Checking the final result}}: retrieves the result (i.e., the linear combination of $m_{\st 1, \prt},\ldots, m_{\st z, \prt}$)  from polynomial $\bm\theta(x)$'s constant term: $t=-\rt\cdot\sum\limits_{\st j=1}^{\st z}q_{\st j, \prt}\cdot m_{\st j, \prt}$ as follows:
\begin{equation*}
\begin{split}
res' &= -t\cdot \rt^{\st -1}\bmod \prm\\  &= \sum\limits_{\st j=1}^{\st z}q_{\st j, \prt}\cdot m_{\st j, \prt}
\end{split}
\end{equation*}

It checks $res' \stackrel{\st ?}=m$, where $m\in \vec{m}$ is the result that \srv sent to it, in step \ref{single-client-publish-lc-proof-x} of \ref{multi-instance-multi-client-solving-linear-combination-x}. 

\item \underline{\textit{Accepting or rejecting the result}}: if all the checks pass, it accepts $\vec{m}$ and returns $\ddot{v}=1$. Otherwise, it returns $\ddot{v}=0$.

\end{enumerate}

%%**************

%%%%%%%%

\item\label{multi-client-multi-instance-case:multi-client-many-messages}\hspace{-2mm}. when solving a puzzle related to the linear combination of messages (i.e., when $\cmd=\ep$), where each message belongs to a different client, i.e., when $\hat{\cmd}=\mulclient$.

\begin{enumerate}

\item\label{multi-client-multi-instance::step-multi-client-verify-opening} \underline{\textit{Checking the commitments' openings}}: verifies the validity of  every $(\rt_{\st \prt}, tk_{\st \prt})\in \vec\zeta$, provided by \srv in \ref{multi-instance-multi-client-solving-linear-combination}, step \ref{multi-client-publish-lc-proof}: 
$$\forall \prtt_{\st\prt}\in \idx:\hspace{4mm}  \comver\big(com'_{\st \prt},(\rt_{\st \prt}, tk_{\st \prt})\big)\stackrel{\st?}=1$$

where $com'_{\st \prt}\in \vec{pp}^{\st (\text{Evl})}$. 
If all of the verifications pass, it proceeds to the next step. Otherwise, it returns $\ddot{v}=0$ and takes no further action.

\item\label{multi-client-multi-instance:Checking-resulting-polynomial-valid-roots} \underline{\textit{Checking the resulting polynomial's valid roots}}: checks if the resulting polynomial contains all the roots in $\vec\zeta$, by taking the following steps. 

\begin{enumerate}

\item derives two keys from $tk_{\st \prt}$: 
$$ k'_{\st \prt}=\prf(1, tk_{\st \prt}), \hspace{4mm} s'_{\st \prt}=\prf(2, tk_{\st \prt})$$% =r_{\st \prt}^{\st a_{\st \prt}} \bmod N$$

\item\label{multi-client-multi-instance:verification-case-1removes the blinding factors}  removes the blinding factors from $[g_{\st 1}, \ldots, g_{\st \cor}]\in \vec{g}$ that were provided by \srv in step \ref{multi-client-publish-encrypted-solutions}. 

$\forall i, 1\leq i \leq \cor:$
\begin{equation*}
\begin{split}
\theta_{\st i}&=\big(\prod\limits_{\st \forall \prtt_{\st\prt}\in \idx}\prf(i, s'_{\st \prt})\big)^{\st -1}\cdot\big(g_{\st i}-\sum\limits_{\st \forall \prtt_{\st\prt}\in\idx} \prf(i, k'_{\st \prt})\big) \bmod \prm\\ &=
\prod\limits_{\st \forall \prtt_{\st\prt}\in\idx} \gamma_{\st i, \prt} \cdot \sum\limits_{\st \forall \prtt_{\st\prt}\in C}   q_{\st \prt}\cdot \pi_{\st i,\prt} \bmod \prm
\end{split}
\end{equation*}

\item\label{multi-client-multi-instance:interpolate-poly}  interpolates a polynomial $\bm{\theta}(x)$, using pairs $(x_{\st 1}, \theta_{\st 1}),\ldots, (x_{\st \cor}, \theta_{\st \cor})$.  This results in a polynomial $\bm{\theta}(x)$ having the form: 
\begin{equation*}
\begin{split}
\bm\theta(x) &=\prod\limits_{\st \forall \prtt_{\st\prt}\in\idx} (x-\rt_{\st \prt})\cdot \sum\limits_{\st \forall \prtt_{\st\prt}\in C}q_{\st \prt}\cdot (x+m_{\st \prt}) \bmod \prm\\
&=\bm\psi(x)+\prod\limits_{\st \forall \prtt_{\st\prt}\in  \idx}(-\rt_{\st \prt})\cdot \sum\limits_{\st \forall\prtt_{\st\prt}\in C}q_{\st \prt}\cdot m_{\st \prt}   \bmod \prm
\end{split}
\end{equation*}

where $\bm\psi(x)$ is a polynomial of degree $\tl+1$ whose constant term is $0$.

\item\label{multi-client-multi-instance::step-multi-client-check-roots}  if the following checks pass, it will proceed to the next step; it checks if every $\rt_{\st \prt}\in\vec{\zeta}$ is a root of $\bm \theta(x)$, i.e., $\bm\theta(\rt_{\st \prt})\stackrel{\st ?}=0$.  Otherwise, it returns $\ddot{v}=0$ and takes no further action.

\end{enumerate}

\item\label{multi-client-multi-instance::step-multi-client-check-res} \underline{\textit{Checking the final result}}: retrieves the result (i.e., the linear combination of the messages $m_{\st 1},\ldots, m_{\st n}$)  from polynomial $\bm\theta(x)$'s constant term: $cons = \prod\limits_{\st \forall \prtt_{\st\prt}\in  \idx}(-\rt_{\st \prt})\cdot\sum\limits_{\st \forall\prtt_{\st\prt}\in C}q_{\st \prt}\cdot m_{\st \prt}$ as follows:
\begin{equation*}
\begin{split}
res' &=cons\cdot (\prod\limits_{\st \forall\prtt_{\st\prt}\in  \idx }(-\rt_{\st \prt}))^{\st -1}\bmod \prm\\ &= \sum\limits_{\st \forall\prtt_{\st\prt}\in C}q_{\st \prt}\cdot m_{\st \prt}
\end{split}
\end{equation*}

It checks $res' \stackrel{\st ?}=m$, where $m\in\vec{m}$ is the result that \srv sent to it.

\item\underline{\textit{Accepting or rejecting the result}}:  If all the checks pass, it accepts $\vec{m}$ and returns $\ddot{v}=1$. Otherwise, it returns $\ddot{v}=0$.% and takes no further action. 

%\item accepts $m$ and returns $\ddot{v}=1$ if the checks (in steps \ref{step-multi-client-verify-opening}, \ref{step-multi-client-check-roots}, and \ref{step-multi-client-check-res}) pass; otherwise, returns $\ddot{v}=0$. 

\end{enumerate}

\item\label{multi-client-multi-instance::verifying-a-solution-of-single-puzzle}\hspace{-2mm}. when verifying a solution of a single puzzle belonging to $\prtt_{\st \prt}$, i.e., when $\cmd=\scp$: 

\begin{enumerate}

\item\label{step-multi-client-check-single-puzzle-1} \underline{\textit{Checking the commitment' opening}}:  checks whether opening $m_{\st j, \prt}\in \vec{m}$ and  $mk_{\st j, \prt}\in \vec\zeta$ matches the commitment: 
$$\comver\big(com_{\st j, \prt}, (m_{\st j, \prt}, mk_{\st j, \prt})\big)\stackrel{\st?}=1$$

where $com_{\st j, \prt}\in pp_{\st \prt}$.
%$$ com_{\st \prt} \stackrel{\st?}= \comcom(m_{\st \prt}, mk_{\st \prt})$$

\item\label{step-multi-client-check-single-puzzle-2}  \underline{\textit{Accepting or rejecting the solution}}: accepts the solution $\vv{m}$ and returns $\ddot{v}=1$ if the above check passes. It rejects the solution and returns $\ddot{v}=0$, otherwise. 
\end{enumerate}

\end{steps}
\end{enumerate}

\begin{theorem}[informal]\label{theo:security-of-VH-TLP}
If  \mhtlp and \tf are secure, then \mmhtlp is secure. 
\end{theorem}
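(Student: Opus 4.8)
The plan is to prove, by black-box reductions, the two components of Definition~\ref{def:sec-def-vh-tlp}: privacy (Definition~\ref{def:privacy-vh-tlp}) and solution-validity (Definition~\ref{def:validity-vh-tlp}). The guiding observation is that \mmhtlp is a composition of two previously analysed objects. Its \ssetup, \csetup, \pgen, its single-client evaluation $\eval_{\st sc}$, and the $\cmd{=}\scp$ cases of \solv/\ver reproduce \mhtlp almost verbatim, the only change being that $\cor=\tl+2$ rather than $3$ point-value coordinates are used --- extra coordinates merely add redundant $(y,x)$-pairs and do not weaken any argument, since in the single-client case the outsourced and evaluated polynomials still have degree $1$ and $2$ and are uniquely recovered by interpolation. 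By contrast, $\eval_{\st mc}$ together with the $\hat{\cmd}{=}\mulclient$ cases of \solv/\ver reproduce the homomorphic-linear-combination machinery of \tf, applied to the puzzle of each client $\prtt_{\st \prt}$ that is selected by the new input $\id_{\st \prt}$. Hence I would set up the reduction so that single-client attacks are answered by a challenger for \mhtlp and multi-client attacks by a challenger for \tf.

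\textbf{Privacy.} Given a PPT $\adv=(\adv_{\st 1},\adv_{\st 2})$ with non-negligible advantage in $\mathsf{Exp}_{\st \textnormal{prv}}^{\st\adv}$ for \mmhtlp, its win is triggered either at the single-puzzle check (line~\ref{expr-prv:chalenger-check-single-puzzle}) or at the post-evaluation check (line~\ref{expr-prv:2nd-wining-condition}); I branch on this and on whether the distinguished message pair is attacked through its individual chained puzzle, through $\eval_{\st sc}$, or through $\eval_{\st mc}$. In the first two branches a reduction $\advb$ plays the \mhtlp privacy game: it relays \adv's \opgen/\oeval/\orev queries to \mhtlp's oracles (padding \mhtlp's $3$ coordinates out to $\cor$ with locally and consistently generated redundant coordinates), embeds \mhtlp's challenge as the corresponding \mmhtlp puzzle, and passes \adv's guess through; the privacy of each chained base $r_{\st j'}$ for $2\le j'\le z$ before puzzle $j'{-}1$ is solved is inherited exactly as in the proof of Theorem~\ref{theo:security-of-MI-VH-TLP} (sequential modular squaring, hardness of factoring, commitment hiding, and PRF security). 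In the third branch $\advb$ plays the \tf privacy game instead: it simulates the honest leaders and non-leaders by relaying the relevant $\ole^{\st +}$ transcripts and published parameters, embeds \tf's challenge as the $\id_{\st \prt}$-selected puzzle of the honest client chosen by \adv, and simulates every other puzzle of every client as an ordinary chained puzzle from \mhtlp's oracle. Since $\advb$'s overhead is polynomial, an $\adv$ running in time $O(poly(T,\lambda))$ and $\delta(T)<T$ yields a $\advb$ with the same resource profile, contradicting the assumed privacy of \mhtlp or \tf.

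\textbf{Solution-validity.} I follow the three-case structure of $\mathsf{Exp}_{\st \textnormal{val}}^{\st\adv}$ for \mmhtlp. An invalid single-puzzle solution accepted at line~\ref{expr-val:ver-1} yields, through commitment binding, a break of \mhtlp's solution-validity via the same relaying reduction. An invalid single-client combination accepted at line~\ref{expr-val:ver-2} with $\hat{\cmd}{=}\singclient$ reduces to \mhtlp's solution-validity: the single hidden root $\rt$ inserted by $\eval_{\st sc}$ plays the identical role it plays in \mhtlp, so by Theorem~\ref{theorem::Unforgeable-Encrypted-Polynomial} together with commitment binding an accepting proof on a wrong linear combination occurs only with negligible probability. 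An invalid multi-client combination accepted with $\hat{\cmd}{=}\mulclient$ reduces to \tf's solution-validity (Appendix~\ref{sec::tf-protocol}): the product $\prod_{\prtt_{\st \prt}\in\idx}(x-\rt_{\st \prt})$ of the leaders' hidden roots is preserved in $\bm{\theta}(x)$, and since the threshold $t$ guarantees at least $t$ uncorrupted leaders in $\idx$, at least one $\rt_{\st \prt}$ is unknown to \adv, so \tf's negligibility bound carries over.

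\textbf{Main obstacle.} The delicate step is faithfully simulating \adv's oracle access --- in particular \oeval and \orev --- under adaptive corruption while a challenge from exactly \emph{one} underlying scheme is embedded: the reduction must decide, consistently with \adv's adaptive corruptions and with the random leader selection of step~\ref{multi-client:Randomly-selecting-leaders}, whether the live target is a single-client object (reduce to \mhtlp) or a multi-client one (reduce to \tf), and in the latter case it must honour \tf's requirement that at most $t$ leaders in $\idx$ are corrupt. I expect to resolve this by a standard guessing argument --- guess the target branch and, for the \tf branch, the honest leader whose puzzle receives the embedded challenge, incurring only a polynomial loss in the advantage --- or, equivalently, by running both scheme instances alongside \adv and invoking PRF security to argue that the non-embedded parts are indistinguishable from uniform. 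The residual bookkeeping (translating $\cor$-coordinate \mmhtlp puzzles to $3$-coordinate \mhtlp puzzles, and the $\id_{\st \prt}$-selected puzzle to \tf's single puzzle, while keeping all published values jointly consistent) is routine but must be tracked carefully.
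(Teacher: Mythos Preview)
Your proposal aligns with the paper's approach but is far more elaborate: the paper's entire proof is a two-sentence sketch observing that \mmhtlp introduces no new security mechanism and relies only on those of \mhtlp and \tf, so its security ``boils down to'' theirs. You pursue the same composition intuition but attempt explicit black-box reductions with case analysis and a guessing argument, which the paper does not.

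One caution on your bookkeeping: the ``padding \mhtlp's $3$ coordinates out to $\cor$'' step is not quite as routine as you suggest for a \emph{literal} black-box reduction. The extra encrypted $y$-coordinates $o_{\st i,j}$ for $i=4,\ldots,\cor$ require blinding factors $\prf(i,k_{\st j}),\prf(i,s_{\st j})$ that the reduction cannot compute without the (unknown) keys $k_{\st j},s_{\st j}$ derived from $mk_{\st j}$. The clean way to handle this is not a black-box call to the $3$-coordinate \mhtlp challenger but rather to observe that the \mhtlp security \emph{argument} (Theorem~\ref{theo:security-of-MI-VH-TLP}) is insensitive to the number of coordinates and goes through verbatim with $\cor$ in place of $3$; this is closer in spirit to what the paper's sketch asserts. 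Your identification of the adaptive-corruption/leader-selection simulation as the delicate point is reasonable, though the paper, treating the theorem as informal, does not engage with it.
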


\begin{proof}[sketch] From the security perspective, \mmhtlp does not introduce any new security mechanism and relies on those proposed in  \mhtlp and \tf. Thus, its security (i.e.,  privacy and solution validity) boils down to the security of \mhtlp and \tf. 
\hfill$\square$
\end{proof}

%We refer readers to \cite{tempora-fusion} for the proof of  Theorem \ref{theo:security-of-VH-TLP}. 

%!TEX root = main.tex

\section{Evaluation}

In this section, we evaluate the costs and features of our schemes and compare them with those of existing TLPs that support homomorphic linear combinations, namely, with the TLPs  proposed in \cite{tempora-fusion,MalavoltaT19,liu2022towards,dujmovic2023time}. We exclude the TLP in \cite{SrinivasanLMNPT23}, from our analysis, as its authors acknowledge that it is far from practically efficient. Tables \ref{complexity-of-our-schemes} and \ref{feature-of-our-schemes} summarize the results.

\subsection{Asymptotic Cost}\label{sec::Cost-Analysis}

%In this section, we analyze the asymptotic costs of our schemes and compare them with \tf introduced in \cite{tempora-fusion}. 

\subsubsection{\mhtlp.}\label{sec::cost-of-tf}

%\mhtlp and \tf are secure, then \mmhtlp 

%\noindent\textbf{Computation Cost.} 

We begin by analyzing the computation cost of a client. 

\noindent\underline{\textit{Client's Costs.}} The computation costs of a client are as follows. In the Puzzle Generation phase, in each step \ref{single-GC-TLP::compute-a-values} and \ref{single-GC-TLP::compute-r-values}, a client  performs $z$ modular exponentiation over $\phi(N)$ and $N$ respectively. Furthermore, in steps  \ref{single-GC-TLP::compute-r-values},  \ref{single-client:derive-keys-for-puzzle}, and \ref{single-client::Generating-blinding-factors}, in total the client invokes $9\cdot z-1$ instances of \prf. 
In step \ref{sinlge-client:rep-message-as-poly}, it performs $z$ modular addition. In step \ref{sinlge-client:gen-y-coord-of-poly}, it evaluates a polynomial of degree one at three $x$-coordinates, which will involve $3\cdot z$ modular additions. %, using Horner’s method \cite{Dorn62}.   
In step \ref{sinlge-client:enc-y-coord-of-poly}, the client performs $3\cdot z$ additions and multiplications to encrypt the $y$-coordinates. In step \ref{sinlge-client:commit-y-coord-of-poly}, the client invokes the hash function $z$ times to commit to each message. 

In the Linear Combination Phase, in step \ref{sinlge-client:Generating-temporary-secret-keys}, it performs two modular exponentiations, one over $\phi(N)$ and the other over $N$. In the same step, it invokes \prf twice. 
In step \ref{single-client:Generating-temporary-pr-vals}, it invokes $11\cdot z+3$ instances of \prf. In the same step, it performs $z-1$ modular addition. In step \ref{sinlge-client:Generating-temporary-root}, it performs 3 additions. In step \ref{sinlge-client:Committing-to-the-root}, the client invokes the hash function one. In step \ref{single-client::eval:ole-detect}, the client performs $ 6\cdot z$ additions and $12\cdot z$ multiplications. In the same step, it invokes $3\cdot z$ instances of $\ole^{\st +}$. Therefore, the computation complexity of the client is $O(z)$.

%\noindent\textit{\underline{Communication Cost of a Client.}} 

The communication costs of a client are as follows. 
In the Key Generation phase, step \ref{single-client::Publishing-public-parameters}, the client publishes a single public key of size about $2048$ bits. 
In the Puzzle Generation phase, step \ref{sinlgeclient--Managing-messages}, the client publishes $4\cdot z+1$ values.   
In the Linear Combination phase,  
in step \ref{single-client::eval:ole-detect}, it invokes $3\cdot z$ instances of $\ole^{\st +}$ where each instance imposes $O(1)$ communication cost. 
In step \ref{sinlge-client::Publishing-public-parameters--x}, the client publishes two elements. 
Therefore, the client's communication complexity is $O(z)$.

%%%%%%%
%\noindent\textit{\underline{Computation Cost of a Verifier.}}  

\noindent\underline{\textit{Verifier's Costs.}} The computation costs of a verifier include the following operations. 
In the Verification phase, the computation cost of a verifier in \ref{single-client::verifying-linear-combination} is as follows. In step \ref{step-single-client-verify-opening-}, it invokes an instance of the hash function. 
In step \ref{sinlge-client:Checking-resulting-polynomial-valid-roots-}, it invokes $6$ instances of \prf. 
In step \ref{sinlge-client:verification-case--removes-the-blinding-factors}, it performs $3$ additions and $1$ multiplication.  In step \ref{step-single-client-check-roots}, it interpolates a polynomial of degree $2$.  
In step \ref{single-client::eval-and-check-root}, it evaluates a polynomial of degree $2$ at a single point, requiring $2$ additions and multiplications.  
In step \ref{step-sinlge-client-check-res-}, it performs a single multiplication. Thus, the verifier's computation complexity in  \ref{single-client::verifying-linear-combination} is $O(1)$. 
In the Verification phase, the computation cost of a verifier in \ref{sinlge-client::verifying-a-solution-of-single-puzzle} involves only a single invocation of the hash function to check the opening of a commitment, for each puzzle.  
Hence, the computation complexity of the verifier $O(z)$.

%\noindent\textit{\underline{Computation Cost of  a Server.}}
%
\noindent\underline{\textit{Server's Costs.}} Now, we consider the computation cost of a server. 
In step \ref{single-client::eval:ole-detect}, server \srv engages $3\cdot z$ instances of   $\ole^{\st +}$ with each client. 
In step \ref{sinlge-client::Computing-encrypted-linear-combination-},  \srv performs $3\cdot z$ modular addition. 
During the Solving  Puzzles phase, in \ref{single-client::solving-linear-combination} step \ref{single-client-extract-temp-key},  \srv performs $Y$ repeated modular squaring and invokes two instances of \prf. In step \ref{sinlge-client::server-side-computatoin-removing-pr-vals}, \srv performs $3$ additions and $3$ multiplications.  
In step \ref{step::sinlge-client-interpolate-poly}, it interpolates a polynomial of degree $2$ that involves $O(1)$ addition and multiplication operations (note that the complexity is constant with regard to the number of puzzles). 
In step \ref{sinlge-client::server-side-Extracting-the-linear-combination}, it performs a single modular multiplication. In step \ref{sinlge-client::Extracting-valid-roots}, it factorizes a polynomial of degree $2$ to find its root, costing $O(1)$. 
In the same step, it invokes the hash function once to identify the valid roots. Thus, the computation complexity of \srv in \ref{single-client::solving-linear-combination} is $O(Y+z)$.

 In \ref{single-client-solving-only-one-puzzle},  the cost of  \srv for a client $\prtt_{\st \prt}$ involves the following operations.  \srv performs $O(\mxsqr\cdot \sum\limits_{\st j=1}^{\st z}\bar{\Delta}_{\st j, \prt})$ modular squaring over $N$ to find the master keys. It invokes $9\cdot z-1$ instances of \prf. It performs $3$ addition and $3$ multiplication to decrypt $y$-coordinates. It interpolates a polynomial of degree $2$ using $3$ coordinates, requiring $O(1)$ addition and multiplication operations. Hence, the complexity of \srv in \ref{single-client-solving-only-one-puzzle} is $O(\mxsqr\cdot \sum\limits_{\st j=1}^{\st z}\bar{\Delta}_{\st j, \prt})$.

% \noindent\textit{\underline{Communication Cost of a Server.}} 

Next, we analyze the communication costs of \srv. In the Setup phase, \srv publishes $4$ messages. In the Linear Combination phase, step \ref{single-client::eval:ole-detect},  it invokes $3\cdot z$ instances of $\ole^{\st +}$ with the client, where each instance imposes $O(1)$ communication cost. In step \ref{single-client::publish-encrypted-solutions}, it publishes $3$ messages. In the Solving a Puzzle phase, step \ref{single-client-publish-lc-proof}, it publishes $3$ messages. In \ref{single-client-solving-only-one-puzzle}, step \ref{single-client-pub-solution-}, the server publishes two messages. Hence, the communication complexity of \srv is $O(z)$.

%!TEX root = main.tex

\subsubsection{\mmhtlp.}\label{sec::cost-of-Multi-Instance-multi-client}

As before, we begin by evaluating the computation cost of a client.

\noindent\underline{\textit{Client's Costs.}} The computation costs of a client are as follows. Recall that in \mmhtlp, the number of $x$-coordinates is linear with the number of leaders $\tl$, whereas in \mhtlp  it is $3$. However, the client in \mmhtlp takes the same types of steps as it takes in  \mhtlp.  Thus, the computation cost complexity of a client in  \mmhtlp is $O(\tl\cdot z)$.

We proceed to analyze the communication costs of a client. In Phase \ref{multi-instance-multi-client-::key-gen}, the client publishes a single public key of size about $2048$ bits. 
In Phase \ref{multi-instance-multi-client-tf-Puzzle-Generation-phase}, the client publishes $(\cor+1)\cdot z+1$ messages.  
In Phase \ref{multi-client-multi-instance-phase::multi-instance-TF-Linear-Combination}, it invokes $\cor\cdot z$ instances of $\ole^{\st +}$ where each instance imposes $O(1)$ communication cost. In the same phase, it publishes two elements. In Phase \ref{multi-client-multi-instance-tf-LinearCombination-phase-x},  we will consider the communication cost of a leader client, as it transmits more messages than non-leader clients. The leader client transmits to each client a key for \prf. It also sends $\cor$ encrypted $y$-coordinates of a random root to the rest of the clients. It invokes \cor instances of $\ole^{\st +}$. The leader client also publishes three elements $(h_{\st \prt}, com'_{\st \prt}, Y)$. Therefore, the leader client's communication complexity is $O(\cor\cdot n)$. Hence, the client's communication complexity is $O((\tl+n)\cdot z)$.

%\noindent\textit{\underline{Computation Cost of a Verifier.}}  

\noindent\underline{\textit{Verifer's Costs.}} We will analyze only the computation costs of a verifier, as the protocol imposes no communication overhead on the verifier. 
In the Verification phase, in  \ref{multi-client-multi-instance-case:single-client-many-messages}, a verifier performs the same type of computation it does in Case 1 of \mhtlp, however, in the former the number of $x$-coordinates is \cor (instead of being $3$ in \mhtlp). Hence, the verifier's computation complexity in \ref{single-client::verifying-linear-combination}  is $O(\tl)$. In \ref{multi-client-multi-instance-case:multi-client-many-messages}, 
the computation cost of the verifier is as follows. In step \ref{multi-client-multi-instance::step-multi-client-verify-opening}, it invokes \tl instances of the hash function. In step \ref{multi-client-multi-instance:Checking-resulting-polynomial-valid-roots} it invokes $2\cdot(\cor\cdot\tl+1)$ instances of \prf. 
In step \ref{multi-client-multi-instance:verification-case-1removes the blinding factors}, it performs $\cor\cdot\tl+1$ additions and $\cor\cdot\tl$ multiplication.  
In step \ref{multi-client-multi-instance:interpolate-poly}, it interpolates a polynomial of degree $\tl+1$. This involves $O(\tl)$ addition and  $O(\tl)$ multiplication. In step \ref{multi-client-multi-instance::step-multi-client-check-roots}, it evaluates a polynomial of degree $\tl+1$ at \tl points, resulting in $\tl^{\st 2}+\tl$ additions and $\tl^{\st 2}+\tl$ multiplication. 
Moreover, in step \ref{multi-client-multi-instance::step-multi-client-check-res}, it performs $\tl+1$ multiplication. Therefore, its complexity in  \ref{multi-client-multi-instance-case:multi-client-many-messages} is $\tl^{\st 2}+\tl$. 
The computation cost of the verifier in \ref{multi-client-multi-instance::verifying-a-solution-of-single-puzzle} involves a single invocation of the hash function to check the opening of a commitment for each puzzle. Thus, its complexity in this case is $O(z)$. 
We conclude that when it verifies $z$ puzzles of a client and a linear combination of $n$ clients' messages, the total computation complexity of the verifier is $O(\tl^{\st 2}+\tl+z)$.

%\noindent\textit{\underline{Computation Cost of a Server.}} 

\noindent\underline{\textit{Server's Costs.}} Initially, we will focus on the computation costs of \srv. During Phase \ref{multi-client-multi-instance-phase::multi-instance-TF-Linear-Combination} (Linear Combination for a Single Client), in step \ref{multi-client-multi-instance::eval:ole-detect}, \srv engages $\cor\cdot z$ instances of   $\ole^{\st +}$ with a client. 
In step \ref{multi-client-multi-instance::Computing-encrypted-linear-combination}, \srv performs $\cor\cdot z$ modular addition. Thus, its computation complexity in this phase is $O(\tl\cdot z)$. 
Within Phase \ref{multi-client-multi-instance-tf-LinearCombination-phase-x} (Linear Combination for Multiple Clients),  in step \ref{multi-instance-multi-client-eval:ole-detect},  \srv engages $\cor$ instances of   $\ole^{\st +}$ with each client. In step \ref{multi-instance-multi-client-::Computing-encrypted-linear-combination},  \srv performs $\cor\cdot n$ modular addition. Therefore, its complexity in this phase is $O(\tl\cdot n)$. 

During Phase \ref{multi-instance-multi-client-::Solving-a-Puzzle}, \ref{multi-instance-multi-client-solving-linear-combination-x}, step \ref{multi-instance-multi-client-extract-temp-key-x}, server \srv performs $Y$ repeated modular squaring and invokes two instances of \prf. 
In step \ref{multi-instance-multi-client-::server-side-computatoin-removing-pr-vals-x}, it performs $\cor$ addition and $\cor$ multiplication.  
In step \ref{multi-instance-multi-client--interpolate-poly-x}, it interpolates a polynomial using $\cor$ data points, which results in $O(\tl)$ computation complexity. 
In step \ref{multi-instance-multi-client-::server-side-Extracting-the-linear-combination-x}, it performs a single modular multiplication. In step \ref{multi-instance-multi-client-::Extracting-valid-roots-x}, it factorizes a polynomial of degree $2$ to find its root, which will cost $O(1)$. 
In step \ref{multi-instance-multi-client-::Extracting-valid-roots-x}, it invokes the hash function once. Therefore, the overall computation complexity of \srv in Phase \ref{multi-instance-multi-client-::Solving-a-Puzzle}, \ref{multi-instance-multi-client-solving-linear-combination-x} is $O(Y+\tl\cdot(z+n))$.

In Phase \ref{multi-instance-multi-client-::Solving-a-Puzzle}, \ref{multi-instance-multi-client-solving-linear-combination}, the cost of \srv is as follows. 
In step \ref{multi-instance-multi-client-find-master-key}, \srv performs $Y$ modular squaring to find master key $mk_{\st \prt}$ for each leader client. In the same step, it invokes $2$ instances of \prf for each leader client. 
In step \ref{multi-instance-multi-client-::server-side-computatoin-removing-pr-vals}, it invokes $2\cdot(\cor+1)$ instances of \prf. In the same step, it performs $\cor+1$ addition and \cor multiplication. In step \ref{multi-instance-multi-client-step::multi-client-interpolate-poly}, it interpolates a polynomial using $\cor$ points, involving $O(\tl)$ addition and  $O(\tl)$ multiplication operations. In step \ref{multi-instance-multi-client-::server-side-Extracting-the-linear-combination}, it performs $\tl+1$ multiplication. In step \ref{multi-instance-multi-client-:Extracting-valid-roots}, it factorizes a polynomial of degree $\tl+1$ with the computation complexity of $O(\tl^{\st 2})$. Hence, the total computation complexity of \srv (for $n$ clients) in Phase \ref{multi-instance-multi-client-::Solving-a-Puzzle}, \ref{multi-instance-multi-client-solving-linear-combination} is $O(\tl\cdot n+\tl^{\st 2}+\tl\cdot Y)$.

In Phase \ref{multi-instance-multi-client-::Solving-a-Puzzle}, \ref{multi-instance-multi-client--solving-only-one-puzzle-x}, the costs of \srv for a client $\prtt_{\st \prt}$ are as follows. In step \ref{multi-instance-multi-client-::find-base-x},   \srv performs $O(\mxsqr\cdot \sum\limits_{\st j=1}^{\st z}\bar{\Delta}_{\st j, \prt})$ modular squaring over $N$ to find the master keys $mk_{\st 1, \prt},\ldots, mk_{\st z, \prt}$. In steps \ref{multi-instance-multi-client-::find-base-x} and \ref{multi-instance-multi-client-::Removing-blinding-factors-server-side-x}, in total, it invokes $z\cdot (3+\cor)-1$ instances of \prf. In step \ref{multi-instance-multi-client-::Removing-blinding-factors-server-side-x}, it performs $\cor$ addition and $\cor$ multiplication. In step \ref{multi-instance-multi-client-::Extracting a polynomial-server-side-x}, it interpolates a polynomial using $\cor$ coordinates, involving $O(\tl)$ addition and multiplication. Hence, the complexity of \srv in \ref{multi-instance-multi-client--solving-only-one-puzzle-x} is $O(\tl+\mxsqr\cdot \sum\limits_{\st j=1}^{\st z}\bar{\Delta}_{\st j, \prt})$.

Next, we move on to the communication costs of \srv. In Phase \ref{multi-instance-multi-client-::Setup-server}, \srv  publishes $\cor$ messages. 
In Phase \ref{multi-client-multi-instance-phase::multi-instance-TF-Linear-Combination}, it invokes $\cor\cdot z$  instances of $\ole^{\st +}$, where each instance imposes $O(1)$ communication cost. In the same phase, it publishes $\cor$ encrypted $y$-coordinates. In Phase \ref{multi-client-multi-instance-tf-LinearCombination-phase-x}, it invokes $\cor\cdot n$  instances of $\ole^{\st +}$ and also publishes  $\cor$ encrypted $y$-coordinates. 
In  Phase \ref{multi-instance-multi-client-::Solving-a-Puzzle},  \ref{multi-instance-multi-client-solving-linear-combination-x}, it publishes $3$ messages. 
In  Phase \ref{multi-instance-multi-client-::Solving-a-Puzzle}, \ref{multi-instance-multi-client-solving-linear-combination}, \srv publishes $2\cdot \tl+1$ messages. 
In Phase \ref{multi-instance-multi-client--solving-only-one-puzzle-x}, \srv publishes $3\cdot z$ messages. 
Thus, the total communication complexity of \srv is $O(\tl\cdot n+z)$. 

\subsubsection{The Scheme Proposed in \cite{tempora-fusion}.}\label{sec::tempora-fusion-cost} Initially, we consider a client's costs in this multi-client scheme.

\noindent\underline{\textit{Client's Costs.}} 
During the Puzzle Generation phase, a client performs two modular exponentiations, one over $\phi(N)$ and another over $N$. Within the same phase, it invokes \prf and performs modular addition and multiplication linear with the number of leaders $\tl$. During the Linear Combination phase, it performs modular arithmetics,  invocations of \prf, and executions of $\ole^{\st +}$ linearly with $\tl$. Thus, the client's overall computation complexity is $O(\tl)$. The communication cost of the client is $O(\tl\cdot n)$ as it transmits to each client $\tl$ encrypted $y$-coordinates of a random root.

\noindent\underline{\textit{Verifier's Costs.}} During the verification of the result of the linear combination it (a) invokes $O(\tl)$ instances of the hash function, (b)  invokes $O(\tl^{\st 2})$ instances of \prf, and (c) performs $O(\tl^{\st 2})$ addition and multiplication.  Its cost during the verification of a solution related to a client's single puzzle is $O(1)$ as it involves a single invocation of a hash function.  
Hence, when it verifies $z$ puzzles of a client and a linear combination of $n$ clients' messages, the verifier's computation complexity is $O(\tl^{\st 2}+z)$. 

\noindent\textit{\underline{Server's Computation Cost.}}
During computing the linear combination of clients' puzzles,  \srv invokes $O(\tl)$ instances of   $\ole^{\st +}$ with each client. 
In the same phase, it performs $O(\tl\cdot n)$ modular addition. 
During the Solving  Puzzles phase, when it needs to deal with puzzles related to the linear combination, \srv performs $O(\tl\cdot Y)$ repeated modular squaring and invokes $O(\tl)$ instances of \prf. Within the same phase, it performs $O(\tl^{\st 2})$ addition and multiplication.  It also factorizes a polynomial, with the cost of  $O(\tl^{\st 2})$. 
Therefore, the computation complexity of \srv  in this case is $O(\tl^{\st 2}+\tl\cdot n+\tl\cdot Y)$.  During the Solving  Puzzles phase, when it needs to deal with a single puzzle of a client $\prtt_{\st \prt}$, server \srv performs $\mxsqr\cdot \Delta_{\st \prt}$ modular squaring to find master key $mk_{\st \prt}$. It invokes $O(\tl)$ instances of \prf and performs  $O(\tl)$ addition and multiplication. Therefore, the complexity of \srv in this case is $O(\tl+\mxsqr\cdot \Delta_{\st \prt})$. In the multi-instance case, where the client has $z$ puzzles where each puzzle $j$-th puzzle needs to be disclosed after period $\Delta_{\st j, \prt}$, \srv needs to deal with each of the puzzles separately, which leads to the total computation complexity of $O(\tl+\mxsqr\cdot \sum\limits_{\st j=1}^{\st z}\Delta_{\st j, \prt})$. The communication cost of \srv is dominated by $\ole^{\st +}$ invocations, which is linear with the total number of leaders and clients, i.e., $O(\tl\cdot n)$.

\subsubsection{TLP in \cite{MalavoltaT19}.}\label{sec::LH-Malavolta}

The homomorphic linear combination TLP proposed in \cite[p.634]{MalavoltaT19}, requires a trusted setup involving a trusted party.

\noindent\underline{\textit{Trusted Party's Costs.}} 
 In the Setup phase, it computes a set of private and public parameters and publishes the public ones.  In this phase, the trusted party, only once, performs a modular squaring over $\phi(N)$. Thus, this party's computation cost is $O(1)$. The trusted party's communication complexity is also $O(1)$, as it only publishes $4$ values. 
 
\noindent\underline{\textit{Client's Costs.}} 
 In the Puzzle Generation phase, a client performs $3$ modular exponentiations, one over $N$ and the other two over $N^{\st 2}$. Thus, the computation complexity of the client is $O(1)$, with respect to $n$ which is the total number of clients involved.  The client's communication complexity is $O(1)$.

\noindent\underline{\textit{Server's Costs.}} 
 To solve a puzzle (related to a single client's puzzle or a puzzle encoding a linear combination of solutions), a server performs $\mxsqr\cdot \Delta$ repeated modular squaring, similar to conventional TLPs. To compute a homomorphic linear combination of puzzles, the server performs  $n$ modular multiplication over $N$ and $n$ modular multiplication over $N^{\st 2}$. Therefore, the computation complexity of the server is $O(n)$. The server's communication complexity is $O(1)$. In the multi-instance case, where a client $\prtt_{\st \prt}$  has $z$ puzzles, where each $j$-th puzzle must be found after period $\Delta_{\st j, \prt}$, the server needs to deal with each puzzle independently, leading to the additional computation complexity of $O(\mxsqr\cdot \sum\limits_{\st j=1}^{\st z}\Delta_{\st j, \prt})$.

\subsubsection{TLP in \cite{liu2022towards}.}\label{sec::cost-esorics-2022} The additive TLP proposed in \cite{liu2022towards} heavily relies on the above additive TLP of  Malavolta and Thyagarajan \cite{MalavoltaT19}. As a result, the overall complexities of the client and server in this scheme are similar to that of the additive TLP in \cite{MalavoltaT19} with a main difference. Namely,  the server in this TLP needs to perform $O(\frac{\mxsqr\cdot \Delta}{\log(\mxsqr\cdot \Delta)})$ group operations to generate a proof. To check $z$ puzzles of a client, a verifier's complexity is $O(z)$. This scheme also does not provide any mechanism to efficiently handle the multi-instance setting, imposing additional computation complexity of $O(\mxsqr\cdot \sum\limits_{\st j=1}^{\st z}\Delta_{\st j, \prt})$ on the server, when each client has $z$ puzzles. The communication complexity for the parties in this TLP is comparable to that described in \cite{MalavoltaT19}.

\subsubsection{TLP in \cite{dujmovic2023time}.}\label{sec::cost-dujmovic2023time} This TLP is also built upon the additive TLP introduced by  Malavolta and Thyagarajan \cite{MalavoltaT19}. Consequently, the computation complexity for a trusted party during the Setup phase and for each client during the Puzzle Generation phase is comparable to that in the TLP of  Malavolta and Thyagarajan. However, this TLP requires the server to perform $O(n^{\st 2}+\mxsqr\cdot\Delta)$ operations to combine the puzzles and solve the combined puzzle.  Since this scheme cannot efficiently handle the muti-instance setting, the server must deal with each puzzle independently, yielding additional computation complexity of $O(\mxsqr\cdot \sum\limits_{\st j=1}^{\st z}\Delta_{\st j, \prt})$. The parties' communication complexity in this TLP is similar to that in \cite{MalavoltaT19}.

%\subsubsection{TLP in \cite{SrinivasanLMNPT23}.}\label{secc:cost-SrinivasanLMNPT23} This additive TLP that relies on indistinguishability obfuscation. As the authors admit, the scheme is far from being practically efficient. 

\subsection{Features}

\subsubsection{\mhtlp.}\label{sec::cost-of-mhtlp-features} This scheme can efficiently handle the multiple-instance setting and does not require a trusted setup. It enables anyone to efficiently check the correctness of a solution for a client's puzzle and a linear combination of puzzles. This scheme allows different clients to have different time parameters for their puzzles.

\subsubsection{\mmhtlp.}\label{sec::cost-of-mmhtlp-features} This TLP supports multi-client as well as efficiently handling the multiple-instance setting.  This scheme also does not require a trusted setup. It also allows anyone to efficiently verify the correctness of a solution for a client's puzzle and a linear combination of puzzles. It allows different clients to have different time parameters for their puzzles.  

\subsubsection{TLP in \cite{tempora-fusion}.}\label{sec::tf-feature} This scheme supports multi-client and does not require a trusted setup. It also supports efficient verification of a solution for a client's puzzle and a linear combination of puzzles. Similar to the above two schemes, it is flexible regarding the time parameters of different puzzles.

\subsubsection{TLP in \cite{MalavoltaT19}.}\label{sec::LH-Malavolta-feature}  The original homomorphic linear combination proposed in  \cite[p.634]{MalavoltaT19} requires all time parameters to be identical. This constraint limits its applicability, as different clients may prefer their solutions to be disclosed at different times. To address this, the authors suggested an extension that involves a trusted third party releasing a set of public parameters, each corresponding to a different time parameter during the setup phase. However, this solution also restricts clients' flexibility because they must choose from only the time parameters initially generated by the trusted third party. The scheme supports multiple clients, however, it does not support the multi-instance setting. It does not provide any verification mechanism to allow a verifier to check the solution that the server finds, which contributes to its overall lower cost compared to those that support verification.

\subsubsection{TLP in \cite{liu2022towards}.}\label{sec::esorics-2022-feature}  One of the TLPs introduced in  \cite{liu2022towards} supports multi-client and homomorphic linear combinations. It allows a server to prove the validity of a solution for a single client's puzzle, by relying on computationally expensive public-key-based primitives. However, this scheme does not support verifying the correctness of the linear combinations. This scheme also requires the involvement of a trusted party to generate a set of public and private parameters. This scheme does not support any efficient solution for the multi-instance setting. This scheme also lacks flexibility regarding the time parameter, as it assumes all clients use an identical time parameter.

%\subsubsection{TLP in \cite{SrinivasanLMNPT23}.}\label{sec::SrinivasanLMNPT23}  This TLP does not require a trusted party's involvement. 

%dujmovic2023time

\subsubsection{TLP in \cite{dujmovic2023time}.}\label{sec::feature-dujmovic2023time} 

This scheme supports multi-client and allows a server to check whether a puzzle has been created correctly. However, it does not offer any solution for the verification of a solution related to a single puzzle or homomorphic linear combinations. This scheme also requires a trusted party and lacks flexibility with respect to the time parameter, as it presumes that all clients use the same time parameter. It offers the batch-solving feature, that enables the server to combine $n$ puzzles into a single combined puzzle, such that after solving this puzzle, the server can find the solution to each puzzle that was integrated into the combined puzzle.

\subsection{Comparison} 

\subsubsection{Cost.} The overall computation and communication complexity of all schemes, except the one in \cite{dujmovic2023time}, is linear with the number of puzzles $z$ and the number of clients $n$.  
However, the computation complexity of the TLP in \cite{dujmovic2023time} is quadradic regarding $n$. Note that the complexities of \mhtlp, \mmhtlp, and the TLP in \cite{tempora-fusion} are quadratic regarding the total number of leader clients $\tl$. However, $\tl$ can be set to a small value, e.g., between $3$ and $10$, depending on the setting and security assumption. Moreover, only \mhtlp, \mmhtlp, and the TLP in \cite{tempora-fusion} can efficiently deal with the multi-instance setting. 

\subsubsection{Feature.} Among the six schemes, only  \mmhtlp provides both multi-client and multi-instance capabilities. It stands out as the scheme that offers the most features. Additionally,  \mhtlp, \mmhtlp, and the TLP proposed in \cite{tempora-fusion} (a) do not require a trusted setup, (b) support verification of both the solution to a client’s puzzle and the solution to a linear combination of puzzles, and (c) allow flexible time parameters. Conversely, only the TLP in \cite{dujmovic2023time} supports batch verification. 

\vspace{-2mm}

\subsection{An Overview of Concrete Cost}

The three main operations that impose non-negligible costs to the participants of our schemes are polynomial factorization, invocations of \prf, and $\ole^{\st +}$ execution. In our schemes, the computation complexity of a verifier is quadratic with the number of leaders $\tl$, which determines the degree of the polynomial to be factorized. The runtime of polynomial factorization is also influenced by the field size, $\log_{\st 2}(\prm)$. As shown in  \cite{tempora-fusion}, the total combined computation cost imposed due to factorization and \prf invocations is about $6$ milliseconds when $\tl=10$ and $\log_{\st 2}(\prm)=256$-bit. The running time of $\ole^{\st +}$ is low as well, for instance about $1$ second for $2^{\st 14}$ input elements, as shown in \cite{SchoppmannGR019}. Thus, we estimate our schemes will impose an additional cost of about 10 seconds when the total number of clients is $20$. This estimation excludes the standard cost of solving puzzles.

%!TEX root = main.tex

\vspace{-2mm}
\section{Conclusion and Future Work}

Time-Lock Puzzles (TLPs) have been developed to securely transmit private information into the future without relying on a third party. They have applications in various domains, including transparent scheduled payments in private banking, e-voting, and secure aggregation in federated learning. To enhance the scalability of TLPs, multi-instance TLPs have been designed, enabling a server to efficiently handle multiple instances of a client's puzzles. Separately, homomorphic TLPs have been developed to allow (verifiable) computation on the puzzles of different clients.

In this work, we proposed two schemes \mhtlp and \mmhtlp to bridge these two research lines. Initially, we proposed Multi-instance verifiable partially Homomorphic TLP  (\mhtlp), the first multi-instance TLP that supports efficient verifiable homomorphic linear combinations on puzzles. It enables a client to generate many puzzles and transmit them to the server at once. In this setting, the server does not need to simultaneously deal with them; instead, it can solve them one after the other. \mhtlp enables the server to learn the linear combination of the puzzles' solutions after a certain time. It allows public verification of a single puzzle's solution and the computation's result.

Next, we introduced Multi-instance Multi-client verifiable partially Homomorphic TLP (\mmhtlp). This new variant combines the features of both (partially) homomorphic TLP and multi-instance TLP. It supports verifiable partially homomorphic operations on the puzzles belonging to single or multiple clients while maintaining the multi-instance feature. It enables single or multiple clients to ask the server to perform homomorphic linear combinations of their puzzles. This scheme allows anyone to verify whether the server has performed the computation correctly and provided a correct solution. 

We have conducted a thorough analysis of these two schemes. Our analysis indicates that the overall overhead of our schemes is linear with respect to the total number of clients and the number of puzzles. By comparing our solutions to the state-of-the-art TLPs, we observed that \mmhtlp offers a set of appealing features not simultaneously provided by any existing TLP.

Batch solving is an intriguing feature that allows a server to combine multiple puzzles into a single composite puzzle.  By solving this composite puzzle, one can determine the solution to each individual puzzle involved \cite{dujmovic2023time}. It will be interesting to explore how \mmhtlp can be enhanced to offer this property while maintaining its current features and efficiency.

\bibliographystyle{splncs04}

\bibliography{ref}

\appendix
% !TEX root =main.tex

\section{Enhanced OLE's Ideal Functionality and Protocol}\label{apndx:F-OLE-plus}

The enhanced \ole ensures that the receiver cannot learn anything about the sender's inputs,  when it sets its input to $0$, i.e., $c=0$. The enhanced \ole's protocol (denoted by $\ole^{\st +}$) is presented in Figure \ref{fig:OLE-plus-protocol}. %In this paper, $\ole^{\st +}$ is used as a subroutine in \vopr, in Section \ref{sec::vopr}. 

\begin{figure}[ht]
\setlength{\fboxsep}{1pt}
\begin{center}
\begin{boxedminipage}{12.3cm}
\begin{small}
\begin{enumerate}
\item  Receiver (input $c \in \mathbb{F} $): Pick a random value, $r\stackrel{\st\$}\leftarrow  \mathbb{F} $, and send  $(\mathtt{inputS}, (c^{\st -1}, r))$ to the first $\mathcal{F}_{\st\ole}$.
%
%Upon receiving a message $(\mathtt{inputS},(a, b))$ from the sender, where $a, b \in \mathbb{F} $, verify that there is no  tuple stored; otherwise, ignore that message. Store $a$ and $b$ and send a message $(\mathtt{input})$ to the adversary, $\mathcal{A}$.
%
\item Sender (input $a, b \in \mathbb{F} $): Pick a random value, $u \stackrel{\st\$}\leftarrow  \mathbb{F} $, and send $(\mathtt{inputR}, u)$ to the first $\mathcal{F}_{\st\ole}$, to learn $t =  c^{\st -1}\cdot u
 + r$. Send $(\mathtt{inputS},(t + a, b - u))$ to the second $\mathcal{F}_{\st\ole}$.
\item Receiver: Send $(\mathtt{inputR}, c)$ to the second $\mathcal{F}_{\st\ole}$ and obtain $k = (t+a)\cdot c+(b-u)=a\cdot c + b + r\cdot c$. Output $s=k - r\cdot c=a\cdot c + b$.

%}
\end{enumerate}
\end{small}
\end{boxedminipage}
\end{center}
\caption{
\small {Enhanced Oblivious Linear function Evaluation  ($\ole^{\st +}$)  \cite{GhoshN19}}.} 
\label{fig:OLE-plus-protocol}
\end{figure}

%!TEX root = main.tex

\section{The \tf Protocol}\label{sec::tf-protocol}

In this section, we present \tf, initially introduced in \cite{tempora-fusion}

\begin{enumerate}

\item\label{multi-client::Setup-server}  {  {Setup}}. $\ssetup(1^{\st \lambda}, \tl, t)\rightarrow (., pk_{\st \srv})$

 The server \srv only once takes the following steps:

\begin{enumerate}

\item   generates a sufficiently large prime number $\prm$, where $\log_{\st 2}(\prm)$ is a security parameter, e.g.,  $\log_{\st 2}(\prm)\geq 128$.

\item   let $\tl$ be the total number of leader clients. It sets $\cor=\tl+2$ and $\vec{x}=[x_{\st 1}, \ldots, x_{\st \cor}]$, where $x_{\st i}\neq x_{\st j}$, $x_{\st i}\neq 0$, and $x_{\st i}\notin U$.

\item publishes $pk_{\st \srv}=(\prm, \vec{x}, t)$. 

\end{enumerate}

\item\label{multi-client::key-gen}  {  {Key Generation}}. $\csetup(1^{\st \lambda})\rightarrow \kp_{\st \prt}$

Each party $\prtt_{\st\prt}$ in $C=\set$ takes the following steps:

\begin{enumerate}

%\item checks the size of $\prm$ and halts if the size is not sufficiently large. 

\item  computes $N_{\st \prt}=\prm_{\st 1}\cdot \prm_{\st 2}$, where $\prm_{\st i}$  is a large randomly chosen prime number, where $\log_{\st 2}(\prm_{\st i})$ is a security parameter. It computes Euler's totient function of $N_{\st \prt}$, as: $\phi(N_{\st \prt})=( \prm_{\st 1}-1)\cdot( \prm_{\st 2}-1)$.

%%%%%

%generates an RSA public and private key pair: $(N_{\st \prt}, \pnp))$. 

\item\label{multi-client::Publishing-public-parameters}   stores secret key $sk_{_{\st\prt}}=\pnp$ and publishes public key $pk_{\st\prt}=N_{\st \prt}$. 
\end{enumerate}

\item\label{tf-Puzzle-Generation-phase}  {  {Puzzle Generation}}. $\pgen(m_{\st \prt}, \kp_{\st \prt}, pk_{\st \srv}, \Delta_{\st \prt}, \mxsqr)\rightarrow(\vec{o}_{\st \prt}, prm_{\st \prt})$

Each $\prtt_{\st\prt}$ independently takes the following steps to generate a puzzle for a message $m_{\st \prt}$.
\begin{enumerate}

\item  checks the bit-size of $p$ and  elements of $\vec{x}$ in $pk_{\st \srv}$, to ensure $\log_{\st 2}(p)\geq 128$, $x_{\st i}\neq x_{\st j}, x_{\st i}\neq 0$, and $x_{\st i}\notin U$. If it does not accept the parameters, it returns $(\bot, \bot)$ and does not take further action. 

\item\label{multi-client:Generating-secret-keys} generates a master key $mk_{\st \subprt}$ and two secret keys $k_{\st \prt}$ and  $s_{\st \prt}$ as follows:

\begin{enumerate}

\item\label{multi-client:set-expo} sets exponent $a_{\st \prt}$ as:  
%
%$$a_{\st \prt}=2^{\st T'_{\st \prt}}\bmod \pnp,\hspace{4mm} b_{\st \prt}=2^{\st T_{\st \prt}}\bmod \pnp$$
$a_{\st \prt}=2^{\st T_{\st\subprt}}\bmod \pnp$.

where $T_{\st\prt}=\Delta_{\st \prt}\cdot \mxsqr$ and $\pnp \in \kp_{\st \prt}$.

\item\label{multi-client:set-master-key}  selects a base uniformly at random: $r_{\st \prt} \stackrel{\st\$}\leftarrow\mathbb{Z}_{\st N_{\st \subprt}}$ and then sets a master key $mk_{\st \subprt}$ as follows: 
$$mk_{\st \prt}= r^{\st a_{{\st \subprt}}}_{\st \prt}\bmod N_{\st \prt}$$

\item\label{multi-client:derive-keys-for-puzzle} derive two keys from $mk_{\st \prt}$ as: 
$ k_{\st \prt}=\prf(1, mk_{\st \prt}), \hspace{4mm} s_{\st \prt}=\prf(2, mk_{\st \prt})$.% =r_{\st \prt}^{\st a_{\st \prt}} \bmod N$$

\end{enumerate}
%
%$$k'_{\st \prt}=2^{\st T'_{\st \prt}}\bmod \pn,\hspace{4mm} k_{\st \prt}=2^{\st T_{\st \prt}}\bmod \pn$$

\item\label{multi-client:derive-PR-values}   generates $2\cdot \cor$ pseudorandom blinding factors using $k_{\st \prt}$ and  $s_{\st \prt}$:
$$\forall i, 1\leq i \leq \cor: \hspace{4mm}  z_{\st i, \prt}=\prf(i, k_{\st \prt}), \hspace{4mm} w_{\st i,\prt}=\prf(i, s_{\st \prt})$$

\item {{encodes plaintext message as follows}}:  

\begin{enumerate}
\item\label{multi-client:rep-message-as-poly} represents plaintext message $m_{\st \prt}$ as a polynomial, such that the polynomial's constant term is the message. Specifically, it computes polynomial $\bm{\pi}_{\st\prt}(x)$ as: 
$\bm{\pi}_{\st\prt}(x) =x+m_{\st \prt} \bmod \prm$. 

\item\label{multi-client:gen-y-coord-of-poly} computes \cor $y$-coordinates of  $\bm{\pi}_{\st\prt}(x)$ as: 
$\forall i, 1\leq i \leq \cor: \hspace{4mm}\pi_{\st i,\prt}= \bm{\pi}_{\st\prt}(x_{\st i}) \bmod \prm$,  
where $x_{\st i}\in \vec{x}$ and $p \in pk_{\st \srv}$.

\end{enumerate}

\item\label{multi-client:enc-y-coord-of-poly}   encrypts the $y$-coordinates using the blinding factors as follows: 
$$\forall i, 1\leq i \leq \cor: \hspace{4mm} o_{\st i,\prt} = w_{\st i,\prt}\cdot(\pi_{\st i,\prt} +  z_{\st i,\prt}) \bmod \prm$$

\item\label{multi-client:commit-y-coord-of-poly} commits to the plaintext message:   
$ com_{\st \prt} = \comcom(m_{\st \prt}, mk_{\st \prt})$. 
%$$\forall i, 1\leq i \leq 3: \hspace{4mm} com_{\st i, \prt} = \h(\pi_{\st i,\prt}|| w_{\st i,\prt})$$

\item\label{tf-Managing-messages}  publishes  $\vec{o}_{\st \prt} = [o_{\st 1,\prt},\ldots, o_{\st \cor,\prt}]$  and  $pp_{\st \prt}=(com_{\st \prt}, T_{\st \prt}, r_{\st \prt}, N_{\st \prt})$. It locally keeps secret parameters $sp_{\st \prt}=(k_{\st \prt}, s_{\st \prt})$ and deletes everything else, including $m_{\st \prt}, \bm{\pi}_{\st\prt}(x), \pi_{\st 1,\prt},\ldots, \pi_{\st \cor,\prt}$. It sets $prm_{\st {\st \prt}}=(sp_{\st \prt}, pp_{\st \prt})$.

\end{enumerate}

\item\label{tf-LinearCombination-phase}  {  {Linear Combination}}. $\eval(\langle \srv(\vec{o}, \Delta, $ $ \mxsqr, \vec{pp}, \vec{pk}, pk_{\st \srv}), \prtt_{\st 1}(\Delta, \mxsqr, \kp_{\st {\st 1}}, prm_{\st {\st 1}}, q_{\st 1}, pk_{\st \srv}), \ldots, \prtt_{\st n}(\Delta, $\\ $\mxsqr, $ $\kp_{\st {\st n}}, prm_{\st {\st n}}, q_{\st n},$  $pk_{\st \srv}) \rangle)\rightarrow(\vec{g}, \vec{pp}^{\st(\text{Evl})})$

In this phase, the parties produce certain messages that allow \srv to find a linear combination of the clients' plaintext messages after time $\Delta$.

\begin{enumerate}

%\item  {\textit{Generating a Shared Key}}: All parties in $C$ agree on a random key $\hat{r}$, e.g., by participating in a coin tossing protocol \ct.

\item\label{multi-client:Randomly-selecting-leaders}  all parties in $C$ agree on a random key $\hat{r}$, e.g., by participating in a coin tossing protocol \cite{Blum82}. Each  $\prtt_{\st\prt}$ deterministically finds index of $\tl$ leader clients: $\forall j, 1\leq j\leq \tl: idx_{\st j}=\g(j||\hat{r})$. Let \idx be a vector contain these $\tl$ clients.%, i.e., $\forall j, 1\leq j\leq t: \vv{id}[j]\in\{1,\ldots,n\}$.

%\item {\textit{Granting the Computation}}:  Each $\prtt_{\st\prt}$ checks whether its index is in $\vv{idx}$:

\item\label{multi-client::Granting-the-computation}  each leader client $\prtt_{\st\prt}$  in \idx takes the following steps. 

%\begin{enumerate}

%\item[$\bullet$]  {if it is in \idx}, then $\prtt_{\st\prt}$ takes the following steps: 

\begin{enumerate}

\item\label{multi-client:Generating-temporary-secret-keys}  generates a temporary master key $tk_{\st\prt}$ and two secret keys $k'_{\st \prt}$ and $s'_{\st \prt}$ for itself. Also, it generates a secret key $f_{\st {\st l}}$ for each client. To do that, it takes the following steps. It computes the exponent:  
$b_{\st \prt}=2^{\st Y}\bmod \pnp$.

where $Y=\Delta\cdot \mxsqr$.
 It selects a  base uniformly at random: $h_{\st \prt}\stackrel{\st\$}\leftarrow\mathbb{Z}_{\st N_{\st \subprt}}$ and then sets a temporary master key $tk_{\st\prt}$ as: 
$tk_{\st\prt} = h_{\st \prt}^{\st b_{\st\subprt}} \bmod N_{\st \prt}$.

It derives two keys from $tk_{\st \prt}$ as: 
$ k'_{\st \prt}=\prf(1, tk_{\st \prt}), \hspace{4mm} s'_{\st \prt}=\prf(2, tk_{\st \prt})$.

It picks a random key $f_{\st {\st l}}$ for each client $\cl_{\st l}$ excluding itself, i.e.,  $f_{\st {\st l}}\stackrel{\st\$}\leftarrow\{0, 1\}^{\st poly(\lambda)}$, where $\cl_{\st l}\in C\setminus\prtt_{\st\prt}$. It sends $f_{\st {\st l}}$ to each $\cl_{\st l}$.

\item\label{multi-client:Generating-temporary-pr-vals}  derives \cor pseudorandom values from $s'_{\st \prt}$: 
 $$\forall i, 1\leq i \leq \cor:  \hspace{4mm}  w'_{\st i,\prt}=\prf(i, s'_{\st \prt})$$

% It sends $s'_{\st \prt}$ to the rest of the clients. 

%\item picks a random key $f_{\st {\st l}}$ for each client $\cl_{\st l}$ excluding itself, i.e.,  $f_{\st {\st l}}\stackrel{\st\$}\leftarrow\{0, 1\}^{\st\lambda}$, where $\cl_{\st l}\in C\setminus\prtt_{\st\prt}$. It sends $f_{\st {\st l}}$ to each $\cl_{\st l}$. 

\item\label{multi-client:Generating-temporary-root}  picks a random root: $\rt_{\st \prt}\stackrel{\st\$}\leftarrow\mathbb{F}_{\st \prm}$. 
It represents $\rt_{\st \prt}$ as a polynomial, such that the polynomial's root is $\rt_{\st \prt}$. Specifically, it computes polynomial $\bm{\gamma}_{\st \prt}(x)$ as: 
 $\bm{\gamma}_{\st \prt}(x) =x- \rt_{\st \prt} \bmod \prm$.
 
 Then, it computes \cor $y$-coordinates of  $\bm{\gamma}_{\st \prt}(x)$ as: 
$\forall i, 1\leq i \leq \cor: \hspace{4mm}\gamma_{\st i, {\st \prt}} = \bm{\gamma}_{\st \prt}(x_{\st i}) \bmod \prm$. 

%where $x_{\st i}\in X$ and $p \in pk_{\st \srv}$.

It encrypts each $y$-coordinate   $\gamma_{\st i, _{\st \prt}}$ using blinding factor $w'_{\st i,\prt}$: 
$$\forall i, 1\leq i \leq \cor: \hspace{4mm}  \gamma'_{\st i, {\st \prt}}=\gamma_{\st i, {\st \prt}}\cdot w'_{\st i,\prt} \bmod \prm$$

It sends $\vv{ \gamma}'_{\st \prt}=[ \gamma'_{\st 1, {\st \prt}},\ldots,  \gamma'_{\st \cor, {\st \prt}}]$ to the rest of the clients.

%%%%%%%%%%%%%%%%

%%%%%%%%%%

\item\label{multi-client:Generating-blinding-factors}  receives $(\bar f_{\st {\st l}}, \vv{ \gamma}'_{\st{l}})$ from every other client which are in \idx. It regenerates its original blinding factors: 
 $$\forall i, 1\leq i \leq \cor: \hspace{4mm}   z_{\st i,\prt}=\prf(i, k_{\st \prt}), \hspace{4mm}   w_{\st i,\prt}=\prf(i, s_{\st \prt})$$

 where $k_{\st \prt}$ and $s_{\st \prt}$ are in $\vec{prm}_{\st \prt}$ and  were generated in step \ref{multi-client:derive-keys-for-puzzle}.  It also generates new ones: 
 $$\forall i, 1\leq i \leq \cor: \hspace{4mm}   z'_{\st i,\prt}=\prf(i, k'_{\st \prt})$$

It sets  values $v_{\st i, \prt}$ and $y_{\st i, \prt}$ as follows. $\forall i, 1\leq i \leq \cor:$
\begin{equation*}
\begin{split}
 v_{\st i, \prt} &=\gamma'_{\st i,\prt}\cdot  \prod\limits_{\st \forall \cl_{_{\st l}}\in \idx \setminus\prtt_{\st\prt}}
\gamma'_{\st i,{\st l}} \bmod \prm\\
y_{\st i, \prt}&=  -\sum\limits_{\st \forall \cl_{_{\st l}}\in C\setminus\prtt_{\st\prt}}\prf(i, f_{\st {l}}) +  \sum\limits_{\st \forall \cl_{_{\st l}}\in \idx\setminus\prtt_{\st\prt}}\prf(i, \bar{f}_{\st {l}})      \bmod 
\end{split}
\end{equation*}

where $\prtt_{\st\prt}\in \idx$.

\item\label{eval:ole-detect}  
obliviously, without having to access a plaintext solution, prepares the puzzle (held by \srv) for the computation. To do that, it participates in an instance of $\ole^{\st +}$ with \srv, for every $i$, where $1\leq i \leq \cor$.  The inputs of $\prtt_{\st\prt}$ to $i$-th instance of $\ole^{\st +}$ are: 
\begin{equation*}
\begin{split}
e_{\st i} &= q_{\st \prt}\cdot v_{\st i, \prt}\cdot (w_{\st i,\prt})^{\st -1}\bmod \prm\\
 e'_{\st i} &= -(q_{\st \prt}\cdot v_{\st i, \prt}\cdot  z_{\st i,\prt}) + z'_{\st i,\prt} + y_{\st i, \prt}\bmod \prm
\end{split}
\end{equation*}

The input of \srv to the $i$-th instance of $\ole^{\st +}$ is $\prtt_{\st\prt}$'s encrypted $y$-coordinate: $e''_{\st i} = o_{\st i,\prt}$ (where $o_{\st i,\prt}\in \vec{o}$). Accordingly, $i$-th instance of $\ole^{\st +}$ returns to \srv:
\begin{equation*}
\begin{split}
d_{\st i, \prt}& = e_{\st i}\cdot e''_{\st i} + e'_{\st i}\\ &= q_{\st \prt}\cdot v_{\st i, \prt}\cdot  \pi_{\st i,\prt}+z'_{\st i, \prt}+y_{\st i, \prt}\bmod \prm\\ &= q_{\st \prt}\cdot \gamma_{\st i, {\st \prt}}\cdot w'_{\st i, \prt}
\cdot (\prod\limits_{\st \forall \cl_{_{\st l}}\in \idx\setminus\prtt_{\st\prt}}
\gamma_{\st i,{\st l}}\cdot w'_{\st i, {\st l}})
\cdot  \pi_{\st i,\prt}+z'_{\st i,\prt}+y_{\st i, \prt}\bmod \prm
\end{split}
\end{equation*}

%Publishing public parameters

where $q_{\st \prt}$ is the party's coefficient. If $\prtt_{\st\prt}$ detects misbehavior during the execution of  $\ole^{\st +}$, it sends a special symbol $\bot$ to all parties and halts.

\item\label{multi-client:Committing-to-the-root}  computes $com'_{\st \prt}=\comcom(\rt_{\st \prt}, tk_{\st \prt})$.

\item\label{multi-client::Publishing-public-parameters--}   publishes ${pp}_{\st \prt}^{\st(\text{Evl})}=(h_{\st \prt}, com'_{\st \prt}, Y)$.  Note that all $\prtt_{\st\prt} \in \idx$  use identical $Y$.  Let $\vec{pp}^{\st(\text{Evl})}$ contain all the  triples ${pp}_{\st \prt}^{\st(\text{Evl})}$ published by $\prtt_{\st\prt}$, where $\prtt_{\st\prt} \in \idx$.

%=[{pp}_{\st 1}^{\st(\text{Evl})},\ldots, {pp}_{\st n}^{\st(\text{Evl})}]$. 
%vec{pp}^{\st(\text{Evl})}

\end{enumerate}
%%%%%%%%%%%%%%%%

\item %
each non-leader client $\prtt_{\st\prt}$ takes the following steps.

%\item[$\bullet$]   {if it is not in $\idx$}, then $\prtt_{\st\prt}$ takes the following steps: 

\begin{enumerate}
\item  receives $(\bar{f}_{\st {l}}, \vv{ \gamma}'_{\st {l}})$  from every other client which is in $\idx$. 
It regenerates its original blinding factors: 
 $$\forall i, 1\leq i \leq \cor: \hspace{4mm}   z_{\st i,\prt}=\prf(i, k_{\st \prt}), \hspace{4mm}   w_{\st i,\prt}=\prf(i, s_{\st \prt})$$

It set  values $v_{\st i, \prt}$ and $y_{\st i, \prt}$ as follows. $\forall i, 1\leq i \leq \cor:$
\begin{equation*}
\begin{split}
 v_{\st i,\prt} &= \prod\limits_{\st \forall \cl_{_{\st l}}\in \idx}
\gamma'_{\st i,{\st l}} \bmod \prm\\
y_{\st i, \prt}&=    \sum\limits_{\st \forall \cl_{_{\st l}}\in \idx}\prf(i, \bar{f}_{\st {l}})         \bmod \prm
\end{split}
\end{equation*}

\item\label{eval:ole-detect-2}  participates in an instance of $\ole^{\st +}$ with the server \srv, for every $i$, where $1\leq i \leq \cor$.  The inputs of $\prtt_{\st\prt}$ to $i$-th instance of $\ole^{\st +}$ are: 
\begin{equation*}
\begin{split}
e_{\st i} &= q_{\st \prt}\cdot  v_{\st i, \prt}\cdot   (w_{\st i,\prt})^{\st -1}\bmod \prm\\
e'_{\st i} &= -(q_{\st \prt}\cdot v_{\st i, \prt}\cdot    z_{\st i,\prt}) + y_{\st i, \prt}\bmod \prm
\end{split}
\end{equation*}

The input of \srv to $i$-th instance of $\ole^{\st +}$ is $\prtt_{\st\prt}$'s encrypted $y$-coordinate: $e''_{\st i} = o_{\st i,\prt}$. Accordingly, $i$-th instance of $\ole^{\st +}$ returns to \srv:
\begin{equation*}
\begin{split}
 d_{\st i, \prt}& = e_{\st i} \cdot e''_{\st i} +  e'_{\st i}\\ 
 &=  q_{\st \prt}\cdot v_{\st i, \prt} \cdot \pi_{\st i,\prt} + y_{\st i, \prt}\bmod \prm\\ 
 &= q_{\st \prt}\cdot (\prod\limits_{\st \forall \cl_{_{\st l}}\in \idx\setminus\prtt_{\st\prt}}
\gamma_{\st i, {\st l}}\cdot w'_{\st i, {\st l}})
\cdot  \pi_{\st i,\prt}+y_{\st i, \prt}\bmod \prm
\end{split}
\end{equation*}

where $q_{\st \prt}$ is the party's coefficient. If $\prtt_{\st\prt}$ detects misbehavior during the execution of  $\ole^{\st +}$, it sends a special symbol $\bot$ to all parties and halts.

%\item publishes $(h_{\st \prt}, N_{\st \prt}, Y, \bar Y)$ and $\big(v_{\st \prt}, \h(u || w)\big)$, where $w = \prod\limits_{\st i = 1}^{\st 2}w'_{\st i, \cl_{\st l}} \cdot  w'_{\st i, \prt}$.  

%\end{enumerate}

\end{enumerate}

\item\label{multi-client::Computing-encrypted-linear-combination} server \srv sums all of the outputs of $\ole^{\st +}$ instances that it has invoked, $\forall i, 1\leq i \leq \cor:$ 
\begin{equation*}
\begin{split}
g_{\st i} &= \sum\limits_{\st \forall \prtt_{\st\prt}\in C} d_{\st {i, \prt}} \bmod \prm\\ &=    (
\prod\limits_{\st \forall \prtt_{\st\prt}\in  \idx} {\gamma_{\st i, \prt}\cdot w'_{\st i, \prt}}
\cdot \sum\limits_{\st \forall \prtt_{\st\prt}\in C}   q_{\st \prt}\cdot \pi_{\st i,\prt} )+    \sum\limits_{\st \forall \prtt_{\st\prt}\in \idx}  z'_{\st i,\prt}  \bmod  \prm
\end{split}
\end{equation*}

Note that in  $g_{\st  i, j}$ does not exist any $y_{\st  i, j}$, because  $y_{\st  i, j}$ in different $d_{\st i, j}$ canceled out each other after they summed up. 

\item\label{publish-encrypted-solutions}  server \srv publishes $\vec{g}=[g_{\st 1}, \ldots, g_{\st \cor}]$.

\end{enumerate}

\item\label{multi-client::Solving-a-Puzzle}  {  {Solving a Puzzle}}. $\solv(\vec{o}_{\st \prt}, pp_{\st \prt}, \vec{g}, \vec{pp}^{\st (\text{Evl})}, \vec{pk}, pk_{\st \srv},\cmd)\rightarrow(m, \zeta)$

Server \srv takes the following steps. 
% maybe this can be generalised to capture the linear combination as well. 

%\begin{steps}
%\item\label{test} xx
%\end{steps}

%\begin{enumerate}[leftmargin=15mm, label=Case \alph*:]

\begin{steps}[leftmargin=15mm]

\item\label{solving-linear-combination}\hspace{-2mm}. when solving a puzzle related to the linear combination, i.e., when $\cmd=\ep$: 

\begin{enumerate}

\item\label{test}  for  each $\prtt_{\st\prt}\in \idx$:

%%%%%%%%%%%%%%%%%%%%%%%%%
\begin{enumerate}
\item\label{extract-temp-key} finds $tk_{\st \prt}$ where  $tk_{\st \prt}=h^{\st 2^{\st Y}}_{\st \prt}\bmod N_{\st \prt}$ through repeated squaring of $h_{\st \prt}$ modulo $N_{\st \prt}$, where $(h_{\st \prt}, Y)\in \vec{pp}^{\st (\text{Evl})}$ and $N_{\st \prt}\in \vec{pk}$.

\item derives two keys from $tk_{\st \prt}$ as: 
$ k'_{\st \prt}=\prf(1, tk_{\st \prt}), \hspace{4mm} s'_{\st \prt}=\prf(2, tk_{\st \prt})$. % =r_{\st \prt}^{\st a_{\st \prt}} \bmod N$$

\end{enumerate}

\item\label{multi-client::server-side-computatoin-removing-pr-vals}  removes the blinding factors from $[g_{\st 1}, \ldots, g_{\st \cor}]\in \vec{g}$. 

$\forall i, 1\leq i \leq \cor:$
\begin{equation*}
\begin{split}
\theta_{\st i}&=\big(\prod\limits_{\st \forall \prtt_{\st\prt}\in \idx}{\prf(i, s'_{\st \prt})}\big)^{\st -1}\cdot\big(g_{\st i}-\sum\limits_{\st \forall \prtt_{\st\prt}\in \idx} {\prf(i, k'_{\st \prt}})\big) \bmod \prm\\ &=
(\prod\limits_{\st \forall \prtt_{\st\prt}\in\idx} \gamma_{\st i, \prt}) \cdot \sum\limits_{\st \forall \prtt_{\st\prt}\in C}   q_{\st \prt}\cdot \pi_{\st i,\prt} \bmod \prm
\end{split}
\end{equation*}

\item\label{step::multi-client-interpolate-poly} interpolates a polynomial $\bm{\theta}$, given pairs $(x_{\st 1}, \theta_{\st 1}),\ldots, (x_{\st \cor}, \theta_{\st \cor})$.  Note that $\bm{\theta}$ will have the following form: 
\begin{equation*}
\bm\theta(x) =\prod\limits_{\st \forall \prtt_{\st\prt}\in  \idx} (x-\rt_{\st \prt})\cdot \sum\limits_{\st \forall \prtt_{\st\prt}\in C}q_{\st \prt}\cdot (x+m_{\st \prt}) \bmod \prm
\end{equation*}
%
%\begin{equation*}
%\begin{split}
%%
%\bm\theta(x) &=(\prod\limits_{\st \forall \prt\in  \vv{idx}} (x-u_{\st \prt}))\cdot \sum\limits_{\st \forall \prt\in C}q_{\st \prt}\cdot (x+m_{\st \prt}) \bmod P\\ &= \bm\psi(x)-\prod\limits_{\st \forall \prt\in  \vv{idx}}u_{\st \prt}\cdot \sum\limits_{\st \forall\prt\in C}q_{\st \prt}\cdot m_{\st \prt}   \bmod P
%%
%\end{split}
%\end{equation*}

We can rewrite $\bm\theta(x)$ as follows: 
$$\bm\theta(x) = \bm\psi(x)+\prod\limits_{\st \forall \prtt_{\st\prt}\in  \idx}(-\rt_{\st \prt})\cdot \sum\limits_{\st \forall\prtt_{\st\prt}\in C}q_{\st \prt}\cdot m_{\st \prt}   \bmod \prm$$

where $\bm\psi(x)$ is a polynomial of degree $\tl+1$ whose constant term is $0$.

\item\label{multi-client::server-side-Extracting-the-linear-combination}   retrieves the final result (which is the linear combination of the messages $m_{\st 1},\ldots, m_{\st n}$)  from polynomial $\bm\theta(x)$'s constant term: $cons=\prod\limits_{\st \forall \prtt_{\st\prt}\in  \idx}(-\rt_{\st \prt})\cdot\sum\limits_{\st \forall\prtt_{\st\prt}\in C}q_{\st \prt}\cdot m_{\st \prt}$ as follows:
\begin{equation*}
\begin{split}
res &=cons\cdot (\prod\limits_{\st \forall \prtt_{\st\prt}\in  \idx}(-\rt_{\st \prt}))^{\st -1}\bmod \prm\\ &= \sum\limits_{\st \forall\prtt_{\st\prt}\in C}q_{\st \prt}\cdot m_{\st \prt}
\end{split}
\end{equation*}

%%
%Note that $\bm{\theta}$ would have the following form: 
%%
%$$\bm\theta(x)=(x-u)\cdot \sum\limits_{\st \forall \prt\in C}q_{\st \prt}\cdot (x+m_{\st \prt}) \bmod P = \bm\psi(x)-u\cdot \sum\limits_{\st \forall\prt\in C}q_{\st \prt}\cdot m_{\st \prt}   \bmod P$$

%where $\bm\psi(x)$ is a polynomial of degree $2$ whose constant term is $0$. 

\item\label{multi-client::Extracting-valid-roots}  extracts the roots of $\bm{\theta}$. Let set $R$ contain the extracted roots. It identifies the valid roots, by finding every  $\rt_{\st \prt}$ in $R$, such that $\comver(com'_{\st \prt},(\rt_{\st \prt}, tk_{\st \prt}))=1$. Note that \srv performs the check for every $\prtt_{\st\prt}$ in $\idx$. 

%\item retrieves the linear combination of the messages $m_{\st 1},\ldots, m_{\st n}$ from polynomial $\bm\theta(x)$'s constant term: $t=-u\cdot\sum\limits_{\st \forall\prt\in C}q_{\st \prt}\cdot m_{\st \prt}$ as follows:
%
%$$res=-t\cdot u^{\st -1}\bmod P = \sum\limits_{\st \forall\prt\in C}q_{\st \prt}\cdot m_{\st \prt} $$

\item\label{publish-lc-proof}  publishes the solution $m=res$ and the proof  $\zeta=\big\{(\rt_{\st \prt}, tk_{\st \prt})\big\}_{\st \forall \prtt_{u}\in  \idx}$.

%\item\label{publish-lc-proof} publishes $(m, \zeta)$.% and all $t$ pairs $(\rt_{\st \prt}, tk_{\st \prt})$. 

\end{enumerate}

\item\label{multi-client::verifying-a-solution-of-single-puzzle-}\hspace{-2mm}.  when solving a puzzle of single client $\prtt_{\st\prt}$,  i.e., when $\cmd=\scp$: 
\begin{enumerate}
\item\label{multi-client::Finding-secret-keys-}   finds  $mk_{\st \prt}$ where  $mk_{\st \prt}=r^{\st 2^{\st T_{\st \subprt}}}_{\st \prt}\bmod N_{\st \prt}$ through repeated squaring of $r_{\st \prt}$ modulo $N_{\st \prt}$, where $(T_{\st \prt}, r_{\st \prt})\in pp_{\st \prt}$. Then, it derives two keys from $mk_{\st \prt}$: 
$$ k_{\st \prt}=\prf(1, mk_{\st \prt}), \hspace{4mm} s_{\st \prt}=\prf(2, mk_{\st \prt})$$% =r_{\st \prt}^{\st a_{\st \prt}} \bmod N$$

%
%$$k'_{\st \prt}=2^{\st T'_{\st \prt}}\bmod \pn,\hspace{4mm} k_{\st \prt}=2^{\st T_{\st \prt}}\bmod \pn$$

\item     re-generates $2\cdot 
\cor$ pseudorandom values using $k_{\st \prt}$ and  $s_{\st \prt}$:
$$\forall i, 1\leq i \leq \cor: \hspace{4mm}  z_{\st i, \prt}=\prf(i, k_{\st \prt}), \hspace{4mm} w_{\st i,\prt}=\prf(i, s_{\st \prt})$$

Then, it uses the blinding factors to unblind $[o_{\st 1,\prt}, \ldots, o_{\st \cor,\prt}]$:
$$\forall i, 1\leq i \leq \cor: \hspace{4mm}  \pi_{\st i,\prt}  = \big((w_{\st i,\prt})^{\st -1}\cdot o_{\st i,\prt}\big) -z_{\st i,\prt} \bmod \prm$$

%, \hspace{4mm} s_{\st \prt}=g^{\st b_{\st \prt}}_{\st \prt}\bmod N

\item  interpolates a polynomial $\bm{\pi}_{\st \prt}$, given pairs $(x_{\st 1}, \pi_{\st 1,\prt}),\ldots, (x_{\st \cor}, \pi_{\st \cor,\prt})$.

\item\label{multi-client::Publishing-the-single-pizzle-solution}   considers the constant term of $\bm{\pi}_{\st \prt}$ as the plaintext solution, $m_{\st \prt}$. It publishes  the solution $m= m_{\st \prt}$  and the proof $\zeta=mk_{\st \prt}$.

\end{enumerate}

\end{steps}

%\vec{o}_{\st \prt}, pp_{\st \prt}, \vec{g},
\item\label{multi-client::verification}   {  {Verification}}. $\ver(m, \zeta, ., pp_{\st \prt}, \vec{g}, \vec{pp}^{\st (\text{Evl})}, pk_{\st \srv}, \cmd)\rightarrow \ddot{v}\in\{0,1\}$

A verifier (that can be anyone, not just $\prtt_{\st\prt}\in C$) takes the following steps. 

%\begin{enumerate}[leftmargin=15mm, label= Case \alph*:]

\begin{steps}[leftmargin=15mm]

%maybe we could include the verification of a solution related to linear combination in this phase. 

\item\label{multi-client-verifying-computation-solution}\hspace{-2mm}. when verifying a solution related to the linear combination, i.e., when $\cmd=\ep$:

\begin{enumerate}

\item\label{step-multi-client-verify-opening}  verifies the validity of  every $(\rt_{\st \prt}, tk_{\st \prt})\in \zeta$, provided by \srv in \ref{solving-linear-combination}, step \ref{publish-lc-proof}: 
$\forall \prtt_{\st\prt}\in \idx:\hspace{4mm}  \comver\big(com'_{\st \prt},(\rt_{\st \prt}, tk_{\st \prt})\big)\stackrel{\st?}=1$, 
where $com'_{\st \prt}\in \vec{pp}^{\st (\text{Evl})}$. 
If all of the verifications pass, it proceeds to the next step. Otherwise, it returns $\ddot{v}=0$ and takes no further action.

\item\label{multi-client:Checking-resulting-polynomial-valid-roots}  checks if the resulting polynomial contains all the roots in $\zeta$, by taking the following steps. 

\begin{enumerate}

\item derives two keys from $tk_{\st \prt}$ as: 
$ k'_{\st \prt}=\prf(1, tk_{\st \prt}), \hspace{4mm} s'_{\st \prt}=\prf(2, tk_{\st \prt})$. % =r_{\st \prt}^{\st a_{\st \prt}} \bmod N$$

\item\label{multi-client:verification-case-1removes the blinding factors}  removes the blinding factors from $[g_{\st 1}, \ldots, g_{\st \cor}]\in \vec{g}$ that were provided by \srv in step \ref{publish-encrypted-solutions}. 

$\forall i, 1\leq i \leq \cor:$
\begin{equation*}
\begin{split}
\theta_{\st i}&=\big(\prod\limits_{\st \forall \prtt_{\st\prt}\in \idx}\prf(i, s'_{\st \prt})\big)^{\st -1}\cdot\big(g_{\st i}-\sum\limits_{\st \forall \prtt_{\st\prt}\in\idx} \prf(i, k'_{\st \prt})\big) \bmod \prm\\ &=
\prod\limits_{\st \forall \prtt_{\st\prt}\in\idx} \gamma_{\st i, \prt} \cdot \sum\limits_{\st \forall \prtt_{\st\prt}\in C}   q_{\st \prt}\cdot \pi_{\st i,\prt} \bmod \prm
\end{split}
\end{equation*}

\item\label{multi-client:interpolate-poly}  interpolates a polynomial $\bm{\theta}$, given pairs $(x_{\st 1}, \theta_{\st 1}),\ldots, (x_{\st \cor}, \theta_{\st \cor})$, similar to step \ref{step::multi-client-interpolate-poly}.  This yields a polynomial $\bm{\theta}$ having the form: 
\begin{equation*}
\begin{split}
\bm\theta(x) &=\prod\limits_{\st \forall \prtt_{\st\prt}\in\idx} (x-\rt_{\st \prt})\cdot \sum\limits_{\st \forall \prtt_{\st\prt}\in C}q_{\st \prt}\cdot (x+m_{\st \prt}) \bmod \prm\\
&=\bm\psi(x)+\prod\limits_{\st \forall \prtt_{\st\prt}\in  \idx}(-\rt_{\st \prt})\cdot \sum\limits_{\st \forall\prtt_{\st\prt}\in C}q_{\st \prt}\cdot m_{\st \prt}   \bmod \prm
\end{split}
\end{equation*}

where $\bm\psi(x)$ is a polynomial of degree $\tl+1$ whose constant term is $0$.

\item\label{step-multi-client-check-roots}  if the following checks pass, it will proceed to the next step. It checks if every $\rt_{\st \prt}$ is a root of $\bm \theta$, by evaluating $\bm \theta$ at $\rt_{\st \prt}$ and checking if the result is $0$, i.e., $\bm\theta(\rt_{\st \prt})\stackrel{\st ?}=0$.  Otherwise, it returns $\ddot{v}=0$ and takes no further action.

\end{enumerate}

\item\label{step-multi-client-check-res}  retrieves  the linear combination of the messages $m_{\st 1},\ldots, m_{\st n}$  from polynomial $\bm\theta(x)$'s constant term: $cons = \prod\limits_{\st \forall \prtt_{\st\prt}\in  \idx}(-\rt_{\st \prt})\cdot\sum\limits_{\st \forall\prtt_{\st\prt}\in C}q_{\st \prt}\cdot m_{\st \prt}$ as follows:
\begin{equation*}
\begin{split}
res' &=cons\cdot (\prod\limits_{\st \forall\prtt_{\st\prt}\in  \idx }(-\rt_{\st \prt}))^{\st -1}\bmod \prm\\ &= \sum\limits_{\st \forall\prtt_{\st\prt}\in C}q_{\st \prt}\cdot m_{\st \prt}
\end{split}
\end{equation*}

It checks $res' \stackrel{\st ?}=m$, where $m=res$ is the result that \srv sent to it.

\item   if all the checks pass, it accepts $m$ and returns $\ddot{v}=1$. Otherwise, it returns $\ddot{v}=0$.% and takes no further action. 

%\item accepts $m$ and returns $\ddot{v}=1$ if the checks (in steps \ref{step-multi-client-verify-opening}, \ref{step-multi-client-check-roots}, and \ref{step-multi-client-check-res}) pass; otherwise, returns $\ddot{v}=0$. 

\end{enumerate}

\item\label{multi-client::verifying-a-solution-of-single-puzzle}\hspace{-2mm}. when verifying a solution of a single puzzle belonging to $\prtt_{\st \prt}$, i.e., when $\cmd=\scp$: 

\begin{enumerate}

\item\label{step-multi-client-check-single-puzzle-1}   checks whether opening pair $m=m_{\st \prt}$ and  $\zeta=mk_{\st \prt}$ matches the commitment: 
$$\comver\big(com_{\st \prt}, (m_{\st \prt}, mk_{\st \prt})\big)\stackrel{\st?}=1$$

where $com_{\st \prt}\in pp_{\st \prt}$.
%$$ com_{\st \prt} \stackrel{\st?}= \comcom(m_{\st \prt}, mk_{\st \prt})$$

\item\label{step-multi-client-check-single-puzzle-2}  accepts the solution $m$ and returns $\ddot{v}=1$ if the above check passes. It rejects the solution and returns $\ddot{v}=0$, otherwise. 
\end{enumerate}

\end{steps}
\end{enumerate}

\begin{theorem}\label{theo:security-of-VH-TLP}
If the sequential modular squaring assumption holds, factoring $N$ is a hard problem, \prf, $\ole^{\st +}$,  and the commitment schemes are secure, then the protocol presented above is secure. 
\end{theorem}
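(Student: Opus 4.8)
The plan is to establish the two components of Definition~\ref{def:sec-def-vh-tlp} separately: privacy (Definition~\ref{def:privacy-vh-tlp}) and solution-validity (Definition~\ref{def:validity-vh-tlp}), each via a reduction to the stated assumptions. For privacy I would proceed through a short sequence of hybrid games starting from the real experiment in Figure~\ref{fig:exp-prv}. The central observation is that the second-stage adversary runs in time $\delta(T)<T$ with only $\bar{poly}(T)$ parallel processors, so within this budget the master key governing the challenge puzzle ($mk_{\st \prt}$ for a single-puzzle guess, or the temporary key $tk_{\st \prt}$ for the linear-combination guess) cannot be recovered. I would make this precise by reducing to the security of the underlying RSA-based TLP---itself implied by the sequential modular squaring assumption together with the hardness of factoring $N$---to replace the relevant master key with a value that is indistinguishable from uniform in the adversary's view.

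Given that the master key is hidden, the next hybrids replace every \prf output keyed by it with a truly random value, each step justified by a reduction to \prf security. After these swaps the blinding factors $(z_{\st i, \prt}, w_{\st i, \prt})$, the temporary factors $(z'_{\st i, \prt}, w'_{\st i, \prt})$, and the zero-sum masks $y_{\st i, \prt}$ are uniform and independent, so the published encrypted $y$-coordinates $o_{\st i, \prt}=w_{\st i, \prt}(\pi_{\st i, \prt}+z_{\st i, \prt})$ and the combined values $g_{\st i}$ perfectly hide the encoded messages. The remaining leakage to handle is the transcript exposed by the $\orev()$ oracle: since \srv plays the receiver in each $\ole^{\st +}$ instance, I would invoke receiver-security of $\ole^{\st +}$ (and its enhanced guarantee that a receiver cannot isolate a sender input by setting its input to $0$) to argue the transcript is simulatable without the real message, while the commitments $com_{\st \prt}=\comcom(m_{\st \prt},mk_{\st \prt})$ and $com'_{\st \prt}=\comcom(\rt_{\st \prt},tk_{\st \prt})$ reveal nothing before opening by the hiding property. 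Chaining these hybrids bounds both guessing advantages by a negligible function.

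For solution-validity I would split along the two winning conditions of Figure~\ref{fig:exp-vld}. The single-puzzle case ($\cmd=\scp$) is immediate: acceptance requires $\comver(com_{\st \prt},(m'_{\st \prt},mk'_{\st \prt}))=1$ for an honestly generated $com_{\st \prt}$, so any solution other than the real one yields a second valid opening, contradicting binding. The linear-combination case ($\cmd=\ep$) is the substantive one, and here I would rely on Theorem~\ref{theorem::Unforgeable-Encrypted-Polynomial}: the honest leaders contribute random roots $\rt_{\st \prt}$ that, by commitment hiding and by the \prf-derived masking $w'_{\st i, \prt}, z'_{\st i, \prt}$ keyed by the still-unsolved $tk_{\st \prt}$, remain unknown to \srv at the moment it must fix and publish the vector $\vec{g}$. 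Any deviation---using tampered inputs $\ddot{o}_{\st i, \prt}$ in $\ole^{\st +}$ or altering $g_{\st i}$---amounts to forging the encrypted $y$-coordinates of the combined polynomial, and the theorem then bounds by $\mu(\lambda)$ the probability that the polynomial the verifier interpolates from $\vec{g}$ still vanishes at an honest $\rt_{\st \prt}$. Since the threshold assumption guarantees at least $t$ non-corrupted leaders in \idx (hence at least one hidden root to anchor the argument), and since binding forces each proof pair $(\rt_{\st \prt},tk_{\st \prt})$ to open $com'_{\st \prt}$ to the real values, passing the root check together with the final check $res'\stackrel{\st ?}{=}m$ forces $m$ to equal the true linear combination except with negligible probability.

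The hard part will be the validity argument for the linear combination, specifically pinning down the timing so the reduction to Theorem~\ref{theorem::Unforgeable-Encrypted-Polynomial} is sound. I must argue that the server's freedom to choose arbitrary $\ole^{\st +}$ receiver inputs---which the $\ole^{\st +}$ verification does not constrain, since any field element is a legitimate input---is precisely the ``forgery of encrypted coordinates'' the theorem rules out, and that the honest roots are information-theoretically hidden at the instant $\vec{g}$ is committed even though \srv later recovers them by solving. Carefully threading the multi-client, $\ole^{\st +}$-mediated encoding through the single-adversary, single-polynomial statement of the theorem---and accounting for the corrupt leaders whose roots and commitments are adversarial---is where the bulk of the technical care will lie.
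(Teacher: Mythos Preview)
The paper does not give its own proof of this theorem: immediately after the statement it writes ``We refer readers to \cite{tempora-fusion} for the proof of Theorem \ref{theo:security-of-VH-TLP}.'' So there is nothing substantive in the paper to compare your proposal against. That said, your outline is consistent with the hints the paper does provide---the proof sketch of Theorem~\ref{theo:security-of-MI-VH-TLP} says it overlaps significantly with the proof of Theorem~\ref{theo:security-of-VH-TLP} and invokes the RSA-based TLP security, commitment hiding, and \prf security, which are exactly the primitives you reduce to; and the paper explicitly flags Theorem~\ref{theorem::Unforgeable-Encrypted-Polynomial} as the tool used ``to detect a server's misbehaviors, a technique also used in \cite{tempora-fusion},'' matching your validity argument. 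Your plan is therefore the expected one; the detailed proof lives in the cited Tempora-Fusion paper, not here.
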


We refer readers to \cite{tempora-fusion} for the proof of  Theorem \ref{theo:security-of-VH-TLP}.

%\input{VH-TLP--proof}

%\input{Single-client--VH-TLP--proof}

%\input{correctness-proof}

%!TEX root = main.tex

\section{The Original RSA-Based TLP}\label{sec::RSA-based-TLP}

Below, we restate the original RSA-based time-lock puzzle proposed in  \cite{Rivest:1996:TPT:888615}.

\begin{enumerate}[leftmargin=.43cm]
\item \uline{Setup}: $\mathsf{Setup_{\st TLP}}(1^{\st\lambda}, \Delta, \mxsqr)$.
\begin{enumerate}

\item pick at random two large prime numbers, $q_{\st 1}$ and $q_{\st 2}$. Then, compute  $N=q_{\st 1}\cdot q_{\st 2}$. Next, compute Euler's totient function of $N$ as follows, $\phi(N)=(q_{\st 1}-1)\cdot (q_{\st 2}-1)$. 
\item set $T=\mxsqr\cdot \Delta$ the total number of squaring needed to decrypt an encrypted message $m$, where $\mxsqr$ is the maximum number of squaring modulo $N$ per second that the (strongest) solver can perform, and $\Delta$ is the period, in seconds, for which the message must remain private.

\item\label{TLP::pick-k} generate a key for the symmetric-key encryption, i.e., 
$\mathtt{SKE.keyGen}(1^{\st \lambda})\rightarrow k$.

\item choose a uniformly random value $r$, i.e., $r\stackrel{\st\$}\leftarrow\mathbb{Z}^{\st *}_{\st N}$.
\item set $a=2^{\st T}\bmod \phi(N)$.
\item set $pk:=(N,T,r)$ as the public key and $sk:=(q_{\st 1},q_{\st 2},a,k)$ as the secret key.
\end{enumerate}

\item\label{Generate-Puzzle-} \underline{Generate Puzzle}: $\mathsf{GenPuzzle_{\st TLP}}(m,pk,sk)$. %$\mathcal{ENC}^{\st pk}_{\st sk,T}()$ 

\begin{enumerate}
\item\label{R-TLP::enc-message} encrypt the message under key $k$ using the symmetric-key encryption, as follows: $o_{\st 1}= \mathtt{SKE.Enc}(k,m)$.
\item\label{TLP::mask-k} encrypt the symmetric-key encryption key $k$, as follows: $o_{\st 2}= k+r^{\st a}\bmod N$.
\item set ${o}:=(o_{\st 1}, o_{\st 2})$ as puzzle and output the puzzle.
\end{enumerate}

\item\underline{Solve Puzzle}: $\mathsf{Solve_{\st TLP}}(pk, {o})$. 

\begin{enumerate}
\item\label{R-TLP::find-b} find $b$, where $b=r^{\st 2^{\st T}}\bmod N$, through repeated squaring of $r$ modulo $N$.
\item\label{R-TLP::dec-key} decrypt the key's ciphertext, i.e., $k=o_{\st 2}-b\bmod N$.
\item\label{R-TLP::dec-message} decrypt the message's ciphertext, i.e., $m=\mathtt{SKE.Dec}(k, o_{\st 1})$.  Output the solution, $m$.
\end{enumerate}
\end{enumerate}

The security of the RSA-based TLP relies on the hardness of the factoring problem, the security of the symmetric key encryption, and the sequential squaring assumption. We restate its formal definition below and refer readers to \cite{Abadi-C-TLP} for the proof.

%\vspace{-3mm}
\begin{theorem}\label{theorem::R-LTP-Sec}
Let $N$ be a strong RSA modulus and $\Delta$ be the period within which the solution stays private. If the sequential squaring holds, factoring $N$ is a hard problem and the symmetric-key encryption is semantically secure, then the RSA-based TLP scheme is a secure TLP.
% in the presence of an adversary whose run-time is upper bonded by $T=poly(\lambda,\Delta)$.
\end{theorem}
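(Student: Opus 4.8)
The plan is to establish the indistinguishability property of the TLP security definition (the game stated immediately after Definition \ref{Def::Time-lock-Puzzle}) by a two-step hybrid argument that isolates the two cryptographic ingredients of the puzzle — the mask $r^{2^T}\bmod N$ and the symmetric-key ciphertext — and reduces them, respectively, to the sequential squaring assumption (restated in Appendix \ref{sec::equential-squering}) and to the semantic security of the symmetric-key encryption. First I would fix an arbitrary admissible adversary $\mathcal{A}=(\mathcal{A}_1,\mathcal{A}_2)$, where $\mathcal{A}_1$ runs in total time $O(poly(\Delta,\lambda))$ and $\mathcal{A}_2$ runs in depth $\delta(\Delta)=(1-\epsilon)\Delta$ with at most $poly(\Delta)$ processors, and set up the real game $H_0$ in which the challenge puzzle for $m_b$ is $(o_1,o_2)=(\mathtt{SKE.Enc}(k,m_b),\, k+r^{2^T}\bmod N)$, with $T=\mxsqr\cdot\Delta$.

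The first hybrid $H_1$ would replace the mask $r^{2^T}\bmod N$ inside $o_2$ by a value $u$ drawn uniformly from $\mathbb{Z}_N$, so that $o_2=k+u\bmod N$ becomes a one-time pad on $k$. I would prove $H_0$ and $H_1$ indistinguishable by a reduction to sequential squaring: given a challenge $(N,r,T,y)$ where $y$ equals either $r^{2^T}\bmod N$ or a uniform element, the reduction samples a fresh symmetric key $k$, embeds the challenge as $o_2=k+y$, forms $o_1$ honestly, and relays $\mathcal{A}$'s guess, so any noticeable gap between $H_0$ and $H_1$ yields distinguishing advantage against the assumption. The hardness of factoring $N$ enters precisely here, since it is what excludes the shortcut of recovering $\phi(N)$, computing $a=2^T\bmod\phi(N)$, and evaluating $r^a$ in $O(\log T)$ steps; I would note that this shortcut-exclusion is already folded into the statement of the assumption, so factoring hardness is a prerequisite for invoking it rather than a separate reduction.

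In the second step I would argue that the advantage in $H_1$ is negligible: the key $k$ is freshly generated and appears nowhere in the transcript except perfectly masked inside $o_2$, hence it is information-theoretically hidden, and therefore $o_1=\mathtt{SKE.Enc}(k,m_b)$ is an encryption under a key unknown to $\mathcal{A}$. A standard reduction to the semantic security of the symmetric-key scheme then bounds the advantage in $H_1$ by a negligible function. Combining the two bounds via the triangle inequality gives winning probability at most $\frac12+\mu(\lambda)$ in $H_0$, which is exactly the required conclusion.

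The main obstacle I anticipate is the careful accounting of time and depth across the reduction, not any single algebraic step. I must verify that the composite adversary the reduction builds out of $(\mathcal{A}_1,\mathcal{A}_2)$ respects the parallel-time budget the sequential squaring assumption demands, namely depth strictly below $T=\mxsqr\cdot\Delta$: the preprocessing work of $\mathcal{A}_1$ must be charged to the assumption's preprocessing phase (which is handed only $N$ and the parameters, not the fresh challenge base $r$), while $\mathcal{A}_2$'s bounded-depth online computation is what receives $r$ and attempts to unmask. Making this mapping precise — so that $\mathcal{A}_1$'s polynomial total time cannot be repurposed to precompute $r^{2^T}$ on the specific challenge base — is the delicate point, and it is exactly where the assumption's separation between a challenge-independent preprocessing phase and a depth-bounded online phase must be used faithfully.
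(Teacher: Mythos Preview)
The paper does not actually prove this theorem: immediately after stating it, the authors write ``We restate its formal definition below and refer readers to \cite{Abadi-C-TLP} for the proof.'' So there is no in-paper proof to compare against. Your hybrid argument --- replace $r^{2^{T}}$ by a uniform element via the sequential-squaring assumption, then invoke semantic security of the symmetric encryption on the now information-theoretically hidden key --- is the standard route and is essentially what the cited proof does.

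One point to tighten: your description of mapping $\mathcal{A}_1$'s work to ``the assumption's preprocessing phase (which is handed only $N$ and the parameters, not the fresh challenge base $r$)'' does not match the sequential-squaring assumption as restated in Appendix~\ref{sec::equential-squering} of this paper. That assumption has no separate preprocessing phase; the single adversary receives $(N,r,y)$ and is bounded only in parallel depth. Moreover, in the TLP game here $\mathcal{A}_1$ \emph{does} see $pk=(N,T,r)$, so the reduction you build must run both $\mathcal{A}_1$ and $\mathcal{A}_2$ inside one depth-bounded distinguisher. The resolution is that $\mathcal{A}_1$'s $O(poly(\Delta,\lambda))$ work can be executed in low parallel depth (polylogarithmic in its total work) using $poly(\Delta,\lambda)$ processors, so the combined depth is still $\delta(\Delta)+o(\Delta)<\Delta$; this is exactly the ``careful accounting of time and depth'' you flag, and it is the right place to focus, but the mechanism is parallelizing $\mathcal{A}_1$ rather than hiding $r$ from it.
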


%!TEX root = main.tex

\section{Sequential and Iterated Functions}\label{sec::equential-squering}

%%%%%%%%%%%%%%%%%%%%%%%%%%
\begin{definition} [$\Delta,\delta(\Delta))$-Sequential function]
For a function: $\delta(\Delta)$, time parameter: $\Delta$ and security parameter: $\lambda=O(\log(|X|))$,  $f:X\rightarrow Y$ is a $(\Delta,\delta(\Delta))$-sequential function if the following conditions hold:
\begin{itemize}
\item[$\bullet$] There is an algorithm that for all $x\in X$evaluates $f$ in parallel time $\Delta$, by using $poly(\log(\Delta),\lambda)$ processors.
\item[$\bullet$] For all adversaries $\mathcal{A}$ which execute in parallel time strictly less than $\delta(\Delta)$ with $poly(\Delta,\lambda)$ processors: 
$$Pr\left[y_{\st A}=f(x)\middle |  y_{\st A}\stackrel{\st \$}\leftarrow \mathcal {A}(\lambda, x), x\stackrel{\st \$}\leftarrow X\right]\leq negl(\lambda)$$
where $\delta(\Delta)=(1-\epsilon)\Delta$ and $\epsilon<1$, as stated in \cite{BonehBBF18}.
\end{itemize}
\end{definition}

\begin{definition}[Iterated Sequential function] Let $\beta: X\rightarrow X$ be a $(\Delta,\delta(\Delta))$-sequential function. A function $f: \mathbb{N}\times X\rightarrow X$ defined as $f(k,x)=\beta^{\st (k)}(x)=\overbrace{\beta\circ \beta\circ... \circ \beta}^{\st k \text{\ \ Times}}$ is  an iterated sequential function, with round function $\beta$, if for all $k=2^{\st o(\lambda)}$ the function $h:X\rightarrow X$ defined by  $h(x)=f(k,x)$ is $(k\Delta,\delta(\Delta))$-sequential. 

\end{definition}

%\underbrace{}_{\st \xi+1}
The primary property of an iterated sequential function is that the iteration of the round function $\beta$ is the quickest way to evaluate the function. Iterated squaring in a finite group of unknown order, is widely believed to be a suitable candidate for an iterated sequential function. Below, we restate its definition.
%%%%%%%%%%%%%%%%%%%%%%%%%%%%%%

\begin{assumption}[Iterated Squaring]\label{assumption::SequentialSquaring} Let N be a strong RSA modulus, $r$ be a generator of $\mathbb{Z}_{\st N}$, $\Delta$ be a time parameter, and $T=poly(\Delta,\lambda)$. For  any $\mathcal{A}$, defined above, there is a negligible function $\mu()$ such that: 

%$$Pr[b\leftarrow \mathcal{A}(N,g,\mathcal{T},x,y)]\leq \frac{1}{2}+\mu(\lambda)$$
$$ Pr\left[
  \begin{array}{l}
\mathcal{A}(N, r,y) \rightarrow b \\
\hline
r \stackrel{\st \$}\leftarrow \mathbb{Z}_{\st N}, b\stackrel{\st \$}\leftarrow \{0,1\}\\
\text{if} \ \ b=0,\   y \stackrel{\st \$}\leftarrow \mathbb{Z}_{\st N} \\
\text {else}\ y=r^{\st 2^{\st T}}
\end{array}    \right]\leq \frac{1}{2}+\mu(\lambda)$$

\end{assumption}

%\begin{equation}

%\end{equation}
%%%%%%%%%%%%%%%%%%%%%%%%%%%%%%%%%%%%%%%

%\input{proof-of-correctness}
%\input{old-VH-TLP--proof}

%\input{strawman}
\end{document}